\theoremstyle{plain}
\newtheorem{theorem}{Theorem}[section]
\newtheorem{lemma}[theorem]{Lemma}  
\newtheorem{result}{Result}
\newtheorem{fact}[theorem]{Fact}
\newtheorem{claim}[theorem]{Claim}
\newtheorem{corollary}[theorem]{Corollary}
\theoremstyle{definition}
\newtheorem{definition}[theorem]{Definition}
\theoremstyle{remark}
\newtheorem{remark}[theorem]{Remark}
\newcommand*{\Id}{\mathrm{Id}}
\newcommand*{\R}{\mathbb{R}}
\newcommand*{\eps}{\varepsilon}
\newcommand*{\poly}{\mathrm{poly}}
\newcommand*{\spec}{\mathrm{spec}}
\newcommand*{\opnorm}{\mathrm{op}}
\newcommand*{\trnorm}{\mathrm{tr}}
\title{Nearly optimal algorithms to learn sparse quantum Hamiltonians in physically motivated distances}
\author{ Amira Abbas\\ \texttt{\scriptsize Google Quantum AI} \and
Nunzia Cerrato\\ \texttt{\scriptsize Scuola Normale Superiore} \and
Francisco Escudero Gutiérrez\\ \texttt{\scriptsize Centrum Wiskunde \& Informatica (CWI)}\\ \texttt{\scriptsize and QuSoft}\\ \and 
Dmitry Grinko\\ \texttt{\scriptsize ILLC, University of Amsterdam}\\ \texttt{\scriptsize and QuSoft  } \and 
Francesco Anna Mele\\ \texttt{\scriptsize NEST, Scuola Normale Superiore}\\ \texttt{\scriptsize and Istituto Nanoscienze}\and
Pulkit Sinha\\ \texttt{\scriptsize Institute for Quantum Computing,} \\ \texttt{\scriptsize University of Waterloo} }
\date{}
\begin{document}
\maketitle
\begin{abstract}
    We study the problem of learning Hamiltonians $H$ that are $s$-sparse in the Pauli basis, given access to their time-evolution operators. Although Hamiltonian learning has been extensively investigated, two issues recur in much of the existing literature: the absence of lower bounds establishing optimality and the use of mathematically convenient but physically opaque error measures.
    
    We address both challenges by introducing two physically motivated notions of distance between Hamiltonians and designing a nearly optimal algorithm with respect to one of these metrics. The first, the \emph{time-constrained distance}, quantifies distinguishability through dynamical evolution up to a bounded time. The second, the \emph{temperature-constrained distance}, captures distinguishability through thermal states at bounded inverse temperatures.

    We show that $s$-sparse Hamiltonians with bounded operator norm can be learned under both distances using only $\widetilde O(s \log(1/\varepsilon))$ experiments and $\widetilde O(s^2/\varepsilon)$ total evolution time. For the time-constrained distance, we further establish lower bounds of $\Omega((s/n)\log(1/\varepsilon) + s)$ experiments and $\Omega(\sqrt{s}/\varepsilon)$ total evolution time, demonstrating near-optimality in the number of experiments.

    As an intermediate result, we obtain an algorithm that learns every Pauli coefficient of $s$-sparse Hamiltonians up to error $\eps$ in $\widetilde O(s \log(1/\varepsilon))$ experiments and $\widetilde O(s/\varepsilon)$ total evolution time, improving upon several recent results. 
    
    The source of this improvement is a new \emph{isolation technique}, inspired by the Valiant-Vazirani theorem (STOC’85), which shows that \emph{NP is as easy as detecting unique solutions}. This isolation technique allows us to query the time evolution of a single Pauli coefficient of a sparse Hamiltonian—even when the Pauli support of the Hamiltonian is unknown—ultimately enabling us to recover the Pauli support itself.
    
\end{abstract}
\newpage
\tableofcontents
\section{Introduction}
With the rapid development of quantum hardware, the design of protocols to characterise its dynamics and its behaviour at thermal equilibrium has become increasingly more important \cite{bravyi2024high,liu2025robust}. Both aspects are ultimately governed by the system Hamiltonian, which has motivated an extensive literature on Hamiltonian learning \cite{Silva2011Practical,holzapfel2015scalable,Zubida2021Optimal,haah2022optimal,yu2023robust,Dutkiewicz.2023,huang2023heisenberg,li2023heisenberglimited,franca2024efficient,Gu2022Practical,zhao2025learning,hu2025ansatz,anshu2021sample,onorati2023efficient,rouze2023learning,bakshi2023learning,ma2024learning,arunachalam2025testing,caro2023learning,sinha2025improved}. However, two recurring shortcomings limit the practical relevance of many results. First, lower bounds establishing optimality are often missing. Second, many works measure the quality of estimation with mathematically convenient distances that lack direct physical interpretation. This raises the following question:\\
\begin{quote}
\center \emph{What is the optimal complexity of Hamiltonian learning\\ in physically meaningful distances?}
\end{quote}\vspace{0.4cm}

We address this question by presenting a nearly optimal algorithm for learning quantum Hamiltonians in a physically motivated distance, that we introduce here: the \emph{time-constrained distance}. This distance captures the operational distinguishability of Hamiltonians through their induced dynamics when probed only up to a finite evolution time. We also introduce the \emph{temperature-constrained distance}, which quantifies differences between Hamiltonians through the thermal states they generate at finite inverse temperature.

\subsection{Problem statement}
Given a quantum system of $n$ qubits, its dynamics and thermal behaviour are governed by a Hamiltonian $H$, which formally is a self-adjoint matrix of $(\mathbb C^{2\times 2})^{\otimes n}$. As such, $H$ can be expanded in the Pauli basis, 
\begin{equation*}
    H=\sum_{P\in\{I,X,Y,Z\}^{\otimes n}}h_PP,
\end{equation*}
where $h_P=\Tr[HP]/2^n$ are the Pauli coefficients. The set $\mathcal H=\{P:h_P\neq 0\}$ is the Pauli support of the Hamiltonian, and the set $\mathcal{ H}_\eps=\{P:\ |h_P|\geq \eps\}$ is the $\eps$-effective support of the Hamiltonian.

By the Schr\"odinger equation, the dynamics of a system governed by a Hamiltonian $H$ is determined by the time-evolution operator $U(t)=e^{-iHt}$, meaning that if the state of the system at time $0$ is $\rho,$ then at time $t$ it will evolve to $U(t)\rho U^\dagger (t)$. Thus, a natural access model for Hamiltonians, introduced in \cite{huang2023heisenberg}, is to perform \emph{experiments} of the following kind: prepare a state $\rho,$ apply $U(t_1)$ (this involves making a query to $U(t_1)$, which in a lab can be implemented by letting the system evolve for time $t_1$), apply a unitary operator $V_1$ independent of $H$, query $U(t_2)$, apply a unitary operator $V_2$ independent of $H$, query $U(t_2)\dots$, and so on, then measure. See \cref{fig:experiment} for a pictorial representation of such an experiment. In this access model, there are different figures of merit that one may want to minimise. The two usually considered as the most important are the \emph{number of experiments}, and the \emph{total evolution time}, which is the sum of all times $t_i$ at which the algorithm queries $U(t_i)$. 

\begin{figure}[!ht]
    \centering
    \includegraphics[width=0.9\linewidth, page=1]{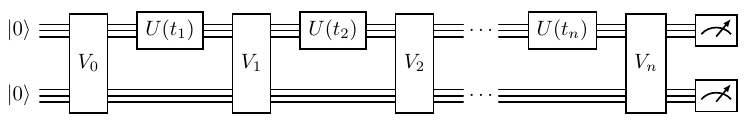}
    \caption{Scheme of an experiment to learn the dynamics $U(t)=e^{-iHt}$. $V_0$ prepares the initial state, $\rho$. Note that all unitaries $V_i$ do not depend on the underlying Hamiltonian $H$ which generates evolutions $U(t_i)$.}
    \label{fig:experiment}
\end{figure}

Learning arbitrary $n$-qubit Hamiltonians in this model becomes prohibitive due to the exponential scaling of the complexity with the number of qubits \cite{caro2023learning}. For that reason, as done in recent work \cite{ma2024learning,zhao2025learning,arunachalam2025testing,hu2025ansatz,sinha2025improved}, we will restrict our attention to the case of $s$-sparse Hamiltonians, which are Hamiltonians with at most $s$ non-zero Pauli coefficients. We will also assume without loss of generality that $h_{I^{\otimes n}}=\Tr[H]/2^n=0,$ as two Hamiltonians that only differ on a multiple of identity produce the same time evolution and the same thermal states. Now, we are ready to formally state the Hamiltonian learning problem.\\

\fbox{\begin{minipage}{40em} \textbf{Hamiltonian Learning:}
Let $H$ be a traceless $n$-qubit $s$-sparse Hamiltonian. Let $\eps>0$. Let $d$ be the distance between Hamiltonians. The goal is to output another Hamiltonian $\widetilde{H}$ such that  $d(H,\widetilde{H})\leq \eps$, by accessing $H$ via its time-evolution operator.
\end{minipage}
}

\subsection{Main result}
Motivated by the role of Hamiltonians in quantum system dynamics, we introduce the time-constrained distance (with budget time $T$) between two Hamiltonians as
\begin{equation*}
    d_T(H_1,H_2)=\max_{t\in [0,T]}\frac12\norm{\mathcal U_{H_1,t}-\mathcal U_{H_2,t}}_\diamond,
\end{equation*}
where $\mathcal U_{H,t}(\cdot)=e^{-itH}(\cdot)e^{itH}$ and $\norm{\cdot}_\diamond$ is the diamond norm. Due to the operational meaning of the diamond norm, the time-constrained distance has a clear physical interpretation: it quantifies the best probability of successfully discriminating the Hamiltonians by optimising over all protocols of the form ``prepare, evolve for a time $t\leq T$, and finally measure".

We also introduce an analogue temperature-constrained distance, motivated by the fact that the thermal equilibrium state of a system governed by $H$ at inverse temperature $\beta$ is the Gibbs state $\rho_H(\beta)=e^{-\beta H}/\Tr[e^{-\beta H}]$ (we refer to \cite{alhambra2023quantum} for background on quantum thermal equilibrium). Thus, we can naturally define the temperature-constrained distance (with inverse temperature budget of $B$) between two Hamiltonians as 
\begin{equation}
    d_B(H_1,H_2)=\max_{\beta\in [0,B]}\frac12\norm{\rho_{H_1}(\beta)-\rho_{H_2}(\beta)}_\trnorm.
\end{equation}
Due to the operational meaning of the trace norm, the temperature-constrained distance has a clear physical interpretation: it quantifies the best probability of successfully discriminating the Hamiltonians by optimising over all protocols of the form ``prepare a thermal state at inverse temperature $\beta\leq B$, and measure".

Now, we are ready to state our main result (for a formal statement, see \cref{cor:learninphysicaldistances,cor:lb_experiments_timeconstrained,cor:lb_timeevol_timeconstrained} below). For simplicity, in the introduction, we will assume that the time budget and the inverse temperature budget are constant. However, in the main text, we explicitly control the dependencies on these parameters.

\begin{result}[Nearly-optimal Hamiltonian learning in physically  motivated distances]\label{result:main}
    Let $H$ be an $n$-qubit $s$-sparse Hamiltonian with bounded operator norm. Then, there is an algorithm that, with probability $\geq 0.9$, $\eps$-learns $H$ in the time-constrained distance, using $\widetilde O(s\log(1/\eps))$ experiments and $\widetilde O(s^2/\eps)$ total evolution time. Furthermore, the algorithm also $\eps$-learns in the temperature-constrained distance, is robust to a constant amount of error in state preparation and measurement, and only uses $n$ ancilla qubits.
    
    The number of experiments of any learning algorithm with $O(n)$ ancilla qubits must be at least $\Omega(s+(s/n)\cdot \log(1/\eps))$ and the total evolution time must be at least $\Omega(\sqrt s/\eps).$ 
\end{result}

To the best of our knowledge, this is the first nearly-optimal Hamiltonian learning result, with respect to the number of experiments, in a physically  meaningful distance. Notably, \cref{result:main} also achieves the Heisenberg scaling, i.e., the $1/\eps$ scaling of the total evolution time, that cannot be surpassed due to the Heisenberg principle. In addition, as we detail in \cref{sec:comparison}, the byproducts of Result~\ref{result:main} improve on several previous works \cite{ma2024learning,zhao2025learning,arunachalam2025testing,hu2025ansatz,sinha2025improved}. 

\subsection{Techniques}
\subsubsection{Relating the physically meaningful distances to the operator norm}
We start by relating the time-constrained and temperature-constrained distance to the operator norm, which is easier to understand from a mathematical point of view. In particular, we prove the following (for a formal statement, see \cref{lem:upperboundtodT,lemma_2} and \ref{lemma_dB_upperbound}). 

\begin{result}[Relation of the operator norm and the  physically  motivated distainces]\label{result:physicaldistancesvsopnorm}
    Let $H_1$ and $H_2$ be two $n$-qubit Hamiltonians with operator norm at most 1. Then,
    \begin{enumerate}
        \item For constant time budget $T>0$, $d_T(H_1,H_2)=\Theta(\norm{H_1-H_2}_{\opnorm})$. 
        \item For constant inverse temperature budget $B>0$, $d_B(H_1,H_2)=O(\norm{H_1-H_2}_{\opnorm})$. 
    \end{enumerate}
\end{result}
The upper bound $d_T(H_1,H_2)\leq T\norm{H_1-H_2}_{\opnorm}$ was already known (see for instance, \cite[Section 7.1.]{ma2024learning}). Our contribution here is to show that the time-constrained distance is not only upper bounded by the operator norm, but also equivalent, i.e., to show $\Omega(\norm{H_1-H_2}_{\opnorm})$. An upper bound of $d_B(H_1,H_2)=e^{B}\norm{H_1-H_2}_{\opnorm}$ was known \cite[Lemma 16]{brandao2017quantum}, and here we exponentially improve the dependence on $B.$ One may wonder if the temperature-constrained distance is also equivalent to the operator norm, however, we show that that is not the case (see \cref{sec:nolowboundfordB}).

In terms of learning, Result~\ref{result:physicaldistancesvsopnorm} means that learning in the time-constrained distance is equivalent to learning in the operator norm, and that upper bounds in the operator norm imply upper bounds in the temperature-constrained distance. Thus, from now on we focus on learning in the operator norm. 

\subsubsection{Upper bounds for learning}
We first note that proper learning in the $\ell_\infty$ norm of the Pauli coefficients implies learning in the operator norm (with a different error parameter). Indeed, a proper $(\eps/2s)$-learner in the $\ell_\infty$ norm outputs an $s$-sparse Hamiltonian $H'$ such that $|h_P-h_P'|\leq \eps/2s.$ Thus, by the triangle inequality and the fact that $\norm{P}_{\opnorm}\leq 1$, it follows that $\norm{H-H'}_{\opnorm}\leq \eps.$ We also note that an improper learner in the $\ell_\infty$ norm can easily be made proper, by rounding the coefficients $h'_P\leq \eps/2s$ to $0.$ Thus, from now on, we focus on showing upper bounds for learning in the $\ell_\infty$ norm.

We divide our learning algorithm into two stages. In the first stage, we learn the effective Pauli support of the Hamiltonian. To be precise, we will find a set $\mathcal P\subseteq \{I,X,Y,Z\}^{\otimes n}$ satisfying, with high probability, that $\mathcal H_{\eps}\subseteq \mathcal P$ and $|\mathcal P|=\widetilde O(s).$\footnote{In other works, this stage is referred to as \emph{structure learning} \cite{bakshi2024structure,zhao2025learning}.} For this stage, we propose a novel \emph{isolation technique} that allows us to have query access to $e^{-ith_PP}$ for given $P\in\mathcal H.$ This isolation technique makes use of the sparsity of the Hamiltonian and it is inspired by a seminal result of Valiant and Vazirani, who showed that \emph{NP is as hard as detecting unique solutions} \cite{valiant1985np}. In the second stage, we devise a new subroutine to learn a single Pauli coefficient, and apply it to learn every element of $\mathcal P$.\footnote{A different algorithm for learning a single Pauli coefficient with the same number of experiments and the same total time evolution was already proposed in \cite{odake2024higher}. However, ours is simpler and it is also efficient in other figures of merit such as query complexity and classical post-processing time, and it is also robust against state preparation and measurement errors.} Before detailing these two stages of our algorithm, we state the result implied by it (see \cref{theo:learninginellinftynorm} for a formal statement).

\begin{result}[Upper bound for learning sparse Hamiltonians in $\ell_\infty$ norm]\label{result:upperinellinfty}
    Let $H$ be an $n$-qubit $s$-sparse Hamiltonian. Then, there is an algorithm that, with probability $\geq 0.9,$ $\eps$-learns every Pauli coefficient with $\widetilde O(s\log(1/\eps))$ experiments and $\widetilde O(s/\eps)$ total evolution time. 
\end{result}

\paragraph{Learning the effective Pauli support.}

The key novelty that allows us to learn the effective Pauli support of a $s$-sparse Hamiltonian $H$ is the isolation technique. This a randomized protocol that for every $P\in\mathcal H$ prepares $e^{-ith_PP}$ with probability $\Omega(1/s)$. Once we have access to the isolated time evolution of $e^{-ith_PP}=\cos(h_Pt) I^{\otimes n}+i\sin(h_Pt)P$ for $P\in\mathcal H_\eps$, if we pick a uniformly random $t\in [\pi/4,1/\eps]$ and perform Pauli sampling of $e^{-ith_PP}$ (i.e., preparing the Choi-Jamiolkowski state of $e^{-ith_PP}$ and measure in the Bell basis, which outputs $P$ with probability $|\sin (h_Pt)|^2$), then we will obtain $P$ as the outcome with constant probability. Hence, repeating this process a constant number of times allows us to detect $P\in \mathcal H_\eps.$ Thus, if we repeat this iteration $\widetilde O(s)$ times and store the outcomes in a set $\mathcal P$, then we will have that $\mathcal H_{\eps}\subseteq \mathcal P$ with high probability. This way, it remains to explain how to access the isolated time evolution. To do that, we introduce some notation. For $Q_1,\dots, Q_r\in \{I,X,Y,Z\}^{\otimes n}$, we define $$H_{Q_1}=\frac{H+Q_1HQ_1}{2}\text{ and } H_{Q_1,\dots,Q_i}=\frac{H_{Q_1,\dots,Q_{i-1}}+Q_{i}H_{Q_1,\dots,Q_{i-1}}Q_i}{2}, \text{ for }2\leq i\leq r.$$

Now, we note three things. First, we can prepare $e^{-itH_{Q_1}}$ by making queries to the time evolution of $H$ with a total evolution time $t$. Indeed, this is true by Trotterization (see \cref{theo:trotterization}), that allows one to approximate $e^{-it(A+B)}$ by making queries to $e^{-itA}$ and $e^{-itB}$, and the fact that $e^{-itQ_1HQ_1/2}=Q_1e^{-itH/2}Q_1$. The same is true for  $e^{-itH_{Q_1,\dots,Q_i}}$, so we may assume that we have access to $e^{-itH_{Q_1,\dots,Q_r}}$ without any overhead in the total evolution time.

Second, notice that $H_Q=\sum_{P:[P,Q]=0}h_PP$, i.e., the action of $Q$ on $H$ \emph{kills} all the Paulis that do not commute with $P$. Similarly, $H_{Q_1,\dots,Q_r}=\sum_{P:[P,Q_1]=0\land \dots\land [P,Q_r]=0}h_PP$. 

Third, getting inspiration from \cite{valiant1984theory}, we prove that, given $P\in \mathcal H$, if we choose $r=\lceil\log(s)\rceil+2$ and pick $Q_1,\dots,Q_r$ uniformly at random from $\{I,X,Y,Z\}^{\otimes n}$, then we have that with probability $\Omega(1/s)$ $P$ is the only element of $\mathcal H$ that commutes with all $Q_1,\dots,Q_r$ (see Lemma \ref{lem:survival} below). 

Putting these three observations together, it follows that we can prepare $e^{-ith_PP}$ with probability $\Omega(1/s).$ Now, we may state our result for learning the effective support of a $s$-sparse Hamiltonian (see \cref{theo:structurelearning} for a formal statement). 

\begin{result}[Learning the effective Pauli support of a sparse Hamiltonian]
    Let $H$ be an $n$-qubit $s$-sparse Hamiltonian with $|h_P|\leq 1$ for every $P$. Then, there is an algorithm making $\widetilde O(s)$ experiments and with $\widetilde O(s/\eps)$ total evolution time, that outputs a set $\mathcal P\subseteq \{I,X,Y,Z\}^{\otimes n}$ with size at most $\widetilde O(s)$ satisfying, with probability $\geq 0.9,$ that $\mathcal H_\eps\subseteq\mathcal P$. 
\end{result}

\paragraph{Learning a single Pauli coefficient.} Our strategy for learning a single Pauli coefficient is also based on the isolation technique. Now, given a known $P\in \{I,X,Y,Z\}^{\otimes n}$ and $s$-sparse Hamiltonian, we pick uniformly random $Q_1,\dots, Q_r$ elements of $\{I,X,Y,Z\}^{\otimes n}$ satisfying $[Q_i,P]=0$, with $r=O(\log(s))$. Then, with high probability, we will have that for every $P'\in (\mathcal H-\{P\})$ there is $i\in [r]$ such that $[P',Q_i]=0$, so $H_{Q_1,\dots,Q_r}=h_PP$. As we can access $e^{-itH_{Q_1,\dots, Q_r}}$ via Trotterization as above, with high probability we have access to $e^{-ih_PP}$. Once there, we can determine $h_P$ efficiently. This way, we have the following result (see \cref{theo:singleparameterlearning} for a formal statement). 

\begin{result}[Learning a single Pauli coefficient of a sparse Hamiltonian]
    Let $H$ be an $n$-qubit $s$-sparse Hamiltonian with $|h_P|\leq 1$ for every $P$. Let $P_0\in \{I,X,Y,Z\}^{\otimes n}$. Then, there is an algorithm that performs $\widetilde O(\log(1/\eps))$ experiments and uses $\widetilde O(1/\eps)$ total evolution time, that outputs $h_{P_0}'\in\mathbb R$ satisfying, with probability $\geq 0.9,$ that $|h_{P_0}-h_{P_0}'|\leq \eps$. 
\end{result}

\paragraph{Other comments about our upper bounds for learning.} For the sake of simplicity, we have deferred to the main text the analysis of figures of merit other than number of experiments and total evolution time. However, we have performed an exhaustive complexity analysis of our algorithms, which are efficient (polynomial dependence with respect to all parameters) in other figures of merit such as classical post-processing or query complexity. Additionally, our algorithms are proper learners. Moreover, when they succeed, they output a Hamiltonian whose Pauli support is contained in the Pauli support of the Hamiltonian to be learned.

\subsubsection{Lower bounds for learning}
\paragraph{Lower bounds for number of experiments.} To show the lower bound for total number of experiments, we use two counting arguments, that together yield the following result (see \cref{thm:lb_experiments_1,thm:lb_experiments_2} for formal statements). 
\begin{result}[Lower bound on the number of experiments]\label{result:lbexperiments}
    An algorithm with $O(n)$ ancilla qubits that, with probability $\geq 0.9,$ $\eps$-learns in operator norm $s$-sparse Hamiltonians with $\norm{H}_{\opnorm}\leq 1$, must make $\Omega((s/n)\log(1/\eps))$ experiments. If $\eps\leq 1/s$ and $s=O(2^n)$, the number of experiments must be $\Omega(s).$
\end{result}
To sketch the proof of \cref{result:lbexperiments}, we start by noting that for each experiment with $O(n)$ ancilla qubits, one can extract $O(n)$ bits of information. Then, it suffices to find large sets of sparse Hamiltonians such that every pair of Hamiltonians in the set is far away from each other. 

If $\eps<1/s$ and $s=O(2^n)$, such a set can be taken to be $\{\sum_{P\in\mathcal H}\eps P:\, |\mathcal H|=s\}$. The size of this set is of the order of $4^{ns},$ so $\Omega(ns)$ bits are required to encode the elements of this set. In particular, the learning algorithm must extract $\Omega(ns)$ bits of information from the experiments, so the number of experiments must be $\Omega(s).$

We fix a subset $\mathcal H\subseteq \{I,X,Y,Z\}^{\otimes n}-\{I^{\otimes n}\}$, and we consider a set of maximum size of Hamiltonians supported on $\mathcal H$ such that every pair of the set is $\eps$-far from each other. 
A volume argument shows that the size of such a set is at least $\Omega((1/\eps)^{s}),$ so $\Omega(s\log(1/\eps))$ bits are required to describe it. This implies the lower bound of $\Omega((s/n)\cdot \log(1/\eps))$ experiments.

\paragraph{Time-evolution lower bounds.} For the time-evolution lower bounds we start by showing that for every algorithm that learns $s$-sparse Hamiltonians with error $\eps$ in $\ell_1$-norm of the Pauli coefficients ($\norm{H}_{\ell_1}=\sum_P|h_P|$), there is $P_0\in\{I,X,Y,Z\}^{\otimes n}$ such that the algorithm learns the Pauli coefficient of $P_0$ with error $\eps/s.$ By the hardness of learning a single Pauli coefficient (see \cite{huang2023heisenberg}, for instance), this implies a $\Omega(s/\eps)$ lower bound in the total time evolution, yielding the following result (see \cref{thm:lb_timeevol_c1} for a formal statement).

\begin{result}[Lower bound on the total time evolution]\label{result:lbl1normtimeevolution}
    An algorithm with $O(n)$ ancilla qubits that, with probability $\geq 0.9,$ $\eps$-learns in $\ell_1$-norm $s$-sparse Hamiltonians with $\norm{H}_{\ell_1}\leq 1$, must use $\Omega(s/\eps)$ total evolution time.
\end{result}

As for $s$-sparse Hamiltonians $\norm{H}_{\ell_1}/\sqrt{s}\leq \norm{H}_{\opnorm}\leq \norm{H}_{\ell_1}$, from \cref{result:lbl1normtimeevolution} it is easy to derive a $\Omega(\sqrt{s}/\eps)$ lower bound for learning in the operator norm. 

\subsection{Relation to previous work}\label{sec:comparison}
Since the seminal work of  Anshu, Arunachalam, Kuwahara, and Soleimanifar \cite{anshu2021sample}, there have been numerous papers about quantum Hamiltonian learning. Many of them assume prior knowledge of the Pauli support of the Hamiltonian, that the Hamiltonian is local (meaning that every Pauli in the support acts on a bounded number of qubits) and some geometrical structure (such as every qubit being acted on by a constant number of Pauli operators) \cite{anshu2021sample,Dutkiewicz.2023,huang2023heisenberg}. The first algorithms for learning without prior knowledge of the support of the Hamiltonian were proposed recently \cite{arunachalam2025testing,bakshi2024structure,ma2024learning}, and the locality assumption was also dropped in a recent series of work that only assumes sparsity \cite{arunachalam2025testing,zhao2025learning,hu2025ansatz,sinha2025improved}. 

Most of the aforementioned works learn in the $\ell_\infty$-norm or the $\ell_2$-norm of the Pauli coefficients. The $\ell_\infty$-norm lacks a direct physical interpretation, but is a coarse distance, easy to work with mathematically. The $\ell_2$-norm is less coarse and has an average-case physical interpretation: for a short time, two Hamiltonians that are close in $\ell_2$-norm produce time evolutions that are close on Haar-random inputs (see \cite[Section 7.2]{ma2024learning}). Only \cite{ma2024learning} learns in the $\ell_1$-norm of the Pauli coefficients, which has a worst-case physical interpretation due to \cref{result:physicaldistancesvsopnorm} and the fact that $\norm{\cdot}_{\opnorm}\leq \norm{\cdot}_{\ell_1}$.

To the best of our knowledge, the only prior optimal result for learning Hamiltonians is the one by Huang, Tong, Fang, and Su \cite{huang2023heisenberg}, who showed that for low interacting Hamiltonians (those that are local and where every qubit is acted on by a bounded number of Pauli operators) $\Theta(\eps^{-1})$ evolution time suffices to learn in the $\ell_\infty$-norm of the Pauli coefficients. 

Thus, \cref{result:main} gives the first learning algorithm for Hamiltonians that is optimal in one of the main figures of merit, considers a physically meaningful distance, does not require locality and does not assume prior knowledge of the Pauli support. 

\paragraph{Direct comparison with prior work.} 
As most prior work considers distances such as the $\ell_\infty$ and $\ell_2$-norms of of the Pauli coefficients, we cannot directly compare \cref{result:main} with prior results. However, we can compare the intermediate results that led to \cref{result:main}. 

For the lower bounds, \cite[Theorem 11]{ma2024learning} showed a lower bound for leaning $s$-sparse Hamiltonians with $\norm{H}_{\ell_\infty}\leq 1$ within the $\ell_1$-norm of the Pauli coefficients of $\Omega(s/\eps)$ total evolution time, in the case that there is a constant amount of state-preparation and measurement (SPAM) errors. Thus, \cref{result:lbl1normtimeevolution} improves \cite[Theorem 11]{ma2024learning} in two ways: $i)$ it substitutes the condition of $\norm{H}_{\ell_\infty}\leq 1$ by the more restrictive condition $\norm{H}_{\ell_1}\leq 1$; $ii)$ it drops the assumption of the presence of SPAM errors.

For the upper bounds, three other works considered the problem of learning $s$-sparse Hamiltonians in the $\ell_\infty$-norm of the Pauli coefficients \cite{zhao2025learning,arunachalam2025testing,hu2025ansatz,sinha2025improved}, so we can compare them with \cref{result:upperinellinfty}. As shown in \cref{table:comparison}, \cref{result:upperinellinfty} improves all of them. 

\begin{table}[!ht]
\centering
\begin{tabular}{c|c|c|}
\cline{2-3}
                               & \multicolumn{1}{c|}{Total evolution time} & \multicolumn{1}{c|}{Number of experiments}                       \\ \hline
\multicolumn{1}{|c|}{\cite{zhao2025learning}$^*$}     & $\widetilde O(s^3/\eps^4)$                           & $\widetilde O(s^4/\eps^4)$                                                  \\ \cline{1-1}
\multicolumn{1}{|c|}{\cite{arunachalam2025testing}$^\dagger$}      & $O(s^2/\eps^3)$                           & $O(s^4/\eps^4)$ \\ \cline{1-1}
\multicolumn{1}{|c|}{\cite{hu2025ansatz}}  & $\widetilde O(s^2/\eps)$                   & $\widetilde O(s^2\log(1/\eps)) $  
\\ \cline{1-1}
\multicolumn{1}{|c|}{\cite{sinha2025improved}$^\circ$}  & $\widetilde O(s/\eps)$                   & $\widetilde O(s^2\log(1/\eps)) $  
\\ \cline{1-1}
\multicolumn{1}{|c|}{\cref{result:upperinellinfty}} & $\widetilde O(s/\eps) $                   & $\widetilde O(s\log(1/\eps))$                                               \\ \hline
\end{tabular}
\caption{Comparison with prior upper bounds for learning $s$-sparse Hamiltonians with $\ell_\infty$-bounded Pauli coefficients in the distance induced by the $\ell_\infty$-norm of the Pauli coefficients.
\newline
\footnotesize \textsuperscript{*} The complexity of this result is originally stated in terms of $\norm{H}_{\opnorm}$, and here we used $\norm{H}_{\opnorm}\leq s\norm{H}_{\ell_\infty}\leq s$. It can be improved, paying huge constant factors, to $O((s/\eps^2)^{1+o(1)})$ total evolution time and $O((s/\eps)^{2+o(1)})$ experiments.\newline
\textsuperscript{$\dagger$} The complexity of this result is stated in terms of $\norm{H}_{\opnorm}$, and here we used $\norm{H}_{\opnorm}\leq s\norm{H}_{\ell_\infty}\leq s.$ This result makes use of a weaker access model, where only one query is made per experiment.
\newline
\textsuperscript{$^\circ$} This independent and concurrent result appeared three days before our work was posted on arXiv.}
\label{table:comparison}
\end{table}

\paragraph{Acknowledgements.} We thank Lauritz Streck for useful conversations. 
F.E.G. was supported by the European union’s Horizon 2020 research and innovation programme under the Marie Sk{\l}odowska-Curie grant agreement no. 945045, and by the NWO Gravitation project NETWORKS under grant no. 024.002.003.
D.~G. acknowledges support by NWO grant NGF.1623.23.025 (“Qudits in theory and experiment”) and NWO Vidi grant (Project No. VI.Vidi.192.109).
F.~A.~M. acknowledges financial support from the European Union (ERC StG ETQO, Grant Agreement no.\ 101165230). Views and opinions expressed are however those of the author only and do not necessarily reflect those of the European Union or the European Research Council. Neither the European Union nor the granting authority can be held responsible for them.
N.~C.~and F.~A.~M.~thank University of Amsterdam and QuSoft for hospitality.

P.S. is supported by the Mike and Ophelia Lazardis Fellowship, the NSERC Alliance Grant and NSERC Discovery Grant.

\section{Preliminaries}\label{sec:preliminaries}

\subsection{Notation}
Given two bit strings $a,b\in\{0,1\}^n$, we define their dot product as $a\cdot b=\sum_{i\in [n]}a_i b_i\ (\mathrm{mod}\ 2)$. Given two bit strings $x=(a,b),\, y=(c,d)\in\{0,1\}^{2n}$, we define their symplectic product as $[x,y]=a\cdot c+ b\cdot d\ (\mathrm{mod}\ 2)$. We use $\{e_i\}_{i\in [n]}$ to refer to the $n$-dimensional canonical basis. We use $I,X,Y,Z$ to refer to the $2 \cross 2$ Pauli matrices. We use $\Id$ to refer to the identity matrix of a dimension that will be clear from the context. Given a matrix $A$ of $(\mathbb C^{2\times 2})^{\otimes n}$, we consider its expansion in the Pauli basis $$A=\sum_{P\in\{I,X,Y,Z\}^{\otimes n}}a_P P,$$ where $a_P=\tr[PA]/2^n$ are its Pauli coefficients. Given such a matrix, we use $\norm{A}_{\opnorm}$ to denote its operator norm when regarded as linear map from $\ell_2$ to $\ell_2,$ which coincides with its largest singular value. We use $\norm{A}_{\ell_p}=(\sum_P |a_P|^p)^{1/p}$ to refer to the $\ell_p$-norm of its Pauli coefficients. 

We will identify $n$-qubit Pauli strings with 2$n$-bit strings. For every Pauli string $P\in \{I,X,Y,Z\}^{\otimes n}$, there are a unique bit string $(a,b)\in \{0,1\}^{2n}$ and a unique number $\alpha_P\in\mathbb C$  such that 
$$P= \alpha_{P} X^{a_1}Z^{b_1}\otimes \dots \otimes X^{a_n}Z^{b_n}.$$
Note that these bit strings capture the commutation relations if the Pauli strings. Namely, if $P= \alpha_{P} X^{a_1}Z^{b_1}\otimes \dots \otimes X^{a_n}Z^{b_n}$ and $P'= \alpha_{P'} X^{a'_1}Z^{b'_1}\otimes \dots \otimes X^{a'_n}Z^{b'_n}$, then $P$ and $P'$ commute if $[(a,b),(c,d)]=0$, and anticommute otherwise.

\subsection{Hamiltonians}

An $n$-qubit Hamiltonian $H$ is a self-adjoint matrix of dimension $(\mathbb C^{2\times 2})^{\otimes n}$.  Its support is given by $\mathcal H\coloneqq\{P:|h_P|>0\}$ and its $\eps$-effective support is given by $\mathcal H_{\eps}\coloneqq\{P:|h_P|\geq \eps\}$. We assume that $|h_P|\leq 1$ for every $P\in \{I,X,Y,Z\}^{\otimes n}$. We note that a matrix $A$ is a Hamiltonian if and only if its Pauli coefficients are real numbers.

We will use the Taylor expansion of the exponential, that ensures that 
\begin{equation}\label{eq:Taylor}
    e^{-itH}=\Id-itH+R_1(t),
\end{equation}
for $t\leq 1/\norm{H}_{\opnorm},$  where $R_1(t)$ is the first-order remainder and satisfies $\norm{R_1(t)}_{\opnorm}\leq Ct^2\norm{H}_{\opnorm}^2$ for some constant $C\geq 1.$
\subsection{Access model} 
Hamiltonians govern the dynamics of quantum systems due to the Schr\"odinger equation. In particular, if a quantum system governed by a time-independent Hamiltonian $H$ and the state describing the system at time $0$ is $\rho,$ 
at time $t$ the state will have evolved to $U(t)\rho U^\dagger(t)$, where $U(t)=\exp(-itH)$ is the time evolution operator of $H$ at time $t$. 

Thus, a natural access model for Hamiltonians is to perform \emph{experiments} of the following kind: prepare a state $\rho,$ apply $U(t_1)$ (this is making a query to $U(t_1)$, which in a lab can be implemented by letting the system evolve for time $t_1$), apply a unitary operator $V_1$ independent of $H$, query $U(t_2)$, apply apply a unitary operator $V_2$ independent of $H$, query $U(t_2)$,$\dots$ and finally measure. 

In this access model, there are different figures of merit that one may care about. The two usually considered as the most important are the \emph{total evolution time}, which is the sum of all times at which the algorithm queries $U(t),$ and the \emph{number of experiments}. Other figures of merit that we will also keep track of are the \emph{number of queries,} the \emph{time resolution} (the minimum time at which the algorithm queries the time evolution operator), the \emph{classical post-processing time}, and the number of \emph{ancilla qubits}.

Finally, our algorithms will also be robust to \emph{state-preparation and measurement (SPAM) error}. Following \cite[Definition 4]{ma2024learning}, an experiment suffers from an $\eps$-amount of SPAM error if the channels applied before the first query to prepare the initial state and the channels applied after the last query to measure have in total $\eps$ error in diamond norm. We will say that an algorithm is \emph{robust} to an $\eps$ amount of SPAM error (or any other error) if the performance guarantees of the algorithm do not change in the presence of that error, maybe after increasing the complexities by constant factors. 

\paragraph{Trotterization.} 
Given access to $e^{-itA}$ and $e^{-itB}$ for two Hamiltonians $A$ and $B$ and arbitrary times $t$, Trotterization allows us to implement $e^{-it(A+B)}$ up to arbitrary error while also preserving the total time evolution and without using extra qubits. Thus, to analyze the number of experiments and the total time evolution required by our algorithms, if we have access to $e^{-itA}$ and $e^{-itB}$, we may assume access to $e^{-it(A+B)}$. However, the number of queries and the time resolution change. To be more precise, we will use the following result. 

\begin{theorem}\cite[Corollary 2]{childs2021theory}\label{theo:trotterization}
    Let $t>0$, let $\eps>0$, let $H_1,\dots,H_R$ Hamiltonians acting on $n$-qubits and let $c=\max\{\norm{H_1}_{\opnorm},\dots,\norm{H_R}_{\opnorm}\}$. Let $l=\left\lceil O\left(\sqrt{(Rct)^3/\eps}\right)\right\rceil$ and $V=(e^{-itH_R/2l}\cdots  e^{-itH_1/2l}e^{-itH_1/2l}\cdots e^{-itH_R/2l})^l$. Then, 
    \begin{equation*}
        \norm{e^{-it(H_1+\dots+H_R)}\cdot e^{it(H_1+\dots+H_R)}-V\cdot V^{\dagger}}_{\diamond}\leq \eps.
    \end{equation*}
\end{theorem}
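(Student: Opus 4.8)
Since this statement is imported from \cite{childs2021theory}, a self-contained proof would just follow the standard error analysis of the symmetric second-order product formula; here is the route I would take. Write $H=H_1+\dots+H_R$, set $\tau=t/l$, and let $S_2(\tau)=e^{-i\tau H_R/2}\cdots e^{-i\tau H_1/2}\,e^{-i\tau H_1/2}\cdots e^{-i\tau H_R/2}$, so that $V=S_2(\tau)^l$. I would first make two reductions. (i) \emph{Diamond to operator norm}: writing $A\rho A^\dagger-B\rho B^\dagger=A\rho(A-B)^\dagger+(A-B)\rho B^\dagger$ and using $\norm{XYZ}_{\trnorm}\le\norm{X}_{\opnorm}\norm{Y}_{\trnorm}\norm{Z}_{\opnorm}$ (the same estimate survives tensoring with an ancilla), one gets $\norm{A(\cdot)A^\dagger-B(\cdot)B^\dagger}_\diamond\le 2\norm{A-B}_{\opnorm}$ for any unitaries $A,B$; hence it suffices to prove $\norm{V-e^{-itH}}_{\opnorm}\le\eps/2$. (ii) \emph{Telescoping}: since $S_2(\tau)$ and $e^{-i\tau H}$ are unitary, $\norm{S_2(\tau)^l-(e^{-i\tau H})^l}_{\opnorm}\le l\,\norm{S_2(\tau)-e^{-i\tau H}}_{\opnorm}$, so everything reduces to bounding the error of a single Trotter step.

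For the single-step bound, let $\mathcal E(\tau)=S_2(\tau)-e^{-i\tau H}$. The formula $S_2$ is a palindrome in its factors, which yields the time-reversal identity $S_2(-\tau)=S_2(\tau)^{-1}$; expanding $S_2(\tau)=\Id+c_1\tau+c_2\tau^2+\dots$ and imposing $S_2(\tau)S_2(-\tau)=\Id$ forces $c_1=-iH$ and $c_2=-H^2/2$, i.e.\ $\mathcal E(0)=\mathcal E'(0)=\mathcal E''(0)=0$ — this is exactly the point of using the symmetric (second-order) formula. Taylor's theorem with integral remainder then gives $\norm{\mathcal E(\tau)}_{\opnorm}\le \tfrac{\tau^3}{6}\sup_{s\in[0,\tau]}\norm{\mathcal E'''(s)}_{\opnorm}$. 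Differentiating the $2R$-fold product three times (generalized Leibniz) produces $O(R^3)$ terms, each a product of $2R$ unitaries together with up to three factors drawn from $\{-iH_j/2\}$ and hence of operator norm $\le c^3$; likewise $\norm{(e^{-isH})'''}_{\opnorm}=\norm{H}^3\le (Rc)^3$. Thus $\norm{\mathcal E'''(s)}_{\opnorm}=O((Rc)^3)$ uniformly in $s$, giving $\norm{S_2(\tau)-e^{-i\tau H}}_{\opnorm}=O\!\big((Rc\tau)^3\big)$ for all $\tau>0$.

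Combining the pieces: $\norm{V-e^{-itH}}_{\opnorm}=O\!\big(l\,(Rct/l)^3\big)=O\!\big((Rct)^3/l^2\big)$, so taking $l=\big\lceil C(Rct)^{3/2}\eps^{-1/2}\big\rceil$ for a suitable absolute constant $C$ makes this $\le\eps/2$, and reduction (i) upgrades it to the claimed diamond-norm bound $\le\eps$. None of this is difficult: the only places that need genuine care are verifying the second-order cancellation $\mathcal E''(0)=0$ and then bounding the third derivative of an ordered product of $2R$ exponentials with the correct $R^3c^3$ scaling — both routine once one exploits the palindromic structure. (The commutator-scaling machinery of \cite{childs2021theory} gives substantially sharper per-step bounds, replacing $R^3c^3$ by sums of nested-commutator norms, but that refinement is unnecessary for the stated scaling, which only tracks $R$, $c$, $t$, and $\eps$.)
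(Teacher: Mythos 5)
The paper does not prove this statement at all: it is imported verbatim as \cite[Corollary 2]{childs2021theory}, so there is no ``paper proof'' to compare against. Your self-contained derivation is correct and is essentially the classical error analysis of the symmetric second-order product formula. The three ingredients all check out: (i) the reduction $\norm{A(\cdot)A^\dagger-B(\cdot)B^\dagger}_\diamond\le 2\norm{A-B}_{\opnorm}$ is the same bound the paper records as Lemma~\ref{eq_dia_inf}; (ii) the telescoping bound $\norm{U^l-W^l}_{\opnorm}\le l\norm{U-W}_{\opnorm}$ for unitaries is standard; and (iii) the single-step bound is sound --- the palindrome identity $S_2(-\tau)=S_2(\tau)^{-1}$ does force $2c_2=c_1^2$, hence $c_2=-H^2/2$, matching $e^{-i\tau H}$ through second order, and the generalized Leibniz count gives $\norm{S_2'''(s)}_{\opnorm}\le (2R)^3(c/2)^3=(Rc)^3$ uniformly, so Taylor with integral remainder yields $\norm{S_2(\tau)-e^{-i\tau H}}_{\opnorm}=O((Rc\tau)^3)$. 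Plugging $\tau=t/l$ and $l=\lceil C(Rct)^{3/2}\eps^{-1/2}\rceil$ recovers exactly the scaling stated in the theorem. Your closing remark is also the right caveat: the cited reference proves a sharper per-step bound in terms of nested commutators, which matters for their applications but is irrelevant to the coarse $(Rct)^3/l^2$ scaling that this paper actually invokes (e.g.\ in Lemma~\ref{lem:TrotterizationOfHQ}). In short, what you have written is a valid replacement for the citation, trading the commutator-scaling refinement for self-containedness.
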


\subsection{Useful facts and lemmas}
\subsubsection{Concentration inequalities}
\begin{lemma}\label{lem:Chebychev}
    Let $X$ be a random variable with expectation $\mu$ and variance $\sigma^2$. Let $k>0.$ Then, 
    $$\Pr[|X-\mu|\geq k\sigma]\leq \frac{1}{k^2}.$$
\end{lemma}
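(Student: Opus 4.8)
This is the classical Chebyshev inequality, and the plan is the textbook route: reduce it to Markov's inequality applied to a cleverly chosen nonnegative random variable. First I would recall (or quickly prove inline) Markov's inequality: for any nonnegative random variable $Y$ and any $a > 0$, one has $\Pr[Y \geq a] \leq \E[Y]/a$. If the paper does not already have Markov available, I would include a one-line proof of it, namely $\E[Y] \geq \E[Y \cdot \indic\{Y \geq a\}] \geq a \cdot \Pr[Y \geq a]$, using nonnegativity of $Y$ in the first inequality.

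Next, the key step is to observe that the event $\{|X - \mu| \geq k\sigma\}$ is exactly the event $\{(X-\mu)^2 \geq k^2\sigma^2\}$, since squaring is monotone on nonnegative reals and both $|X-\mu|$ and $k\sigma$ are nonnegative (here $k > 0$ and $\sigma \geq 0$; if $\sigma = 0$ the statement is trivial or vacuous depending on convention, so I would assume $\sigma > 0$). Then I would apply Markov's inequality to the nonnegative random variable $Y = (X-\mu)^2$ with threshold $a = k^2\sigma^2$, giving
\begin{equation*}
    \Pr[|X-\mu| \geq k\sigma] = \Pr[(X-\mu)^2 \geq k^2\sigma^2] \leq \frac{\E[(X-\mu)^2]}{k^2\sigma^2}.
\end{equation*}
Finally I would use the definition of variance, $\E[(X-\mu)^2] = \sigma^2$, to conclude that the right-hand side equals $\sigma^2/(k^2\sigma^2) = 1/k^2$, which is the claimed bound.

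There is essentially no obstacle here; the only things to be mildly careful about are the degenerate case $\sigma = 0$ (handle separately or note the bound is trivially $\leq 1/k^2$ once $k \geq 1$, and for $k < 1$ the event is empty), and making sure the equivalence of the two events is stated with the right direction of monotonicity. If Markov's inequality is not stated earlier in the paper, the "main" content is really just that one-line proof of Markov; otherwise the whole argument is three lines.
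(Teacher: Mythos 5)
Your proof is correct and is the standard derivation of Chebyshev's inequality from Markov's inequality applied to $(X-\mu)^2$; the paper states this lemma without proof as a textbook fact, so there is nothing to compare against. Your side remark about the degenerate case $\sigma=0$ is a reasonable caveat (the inequality as literally stated can fail there), but it is irrelevant to the paper's applications, where the variance is strictly positive.
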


\begin{lemma}[Hoeffding bound]\label{lem:hoeffding}
Let~$X_1,\dots,X_m$ be independent-random variables that satisfy $-a_i\leq |X_i|\leq a_i$ for some $a_i>0$.
Then, for any $\tau > 0$, we have
$$
\Pr\Big[\Big|\sum_{i\in [m]} X_i-\sum_{i\in [m]}\mathbb E[X_i]\Big| > \tau\big]
\leq
2\exp\left(-\frac{\tau^2}{2(a_1^2 + \cdots + a_m^2)}\right).
$$
\end{lemma}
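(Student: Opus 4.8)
The plan is to use the standard exponential–moment (Chernoff) method. First I would fix a parameter $\lambda>0$ and apply Markov's inequality to the nonnegative random variable $\exp\!\big(\lambda\sum_{i\in[m]}(X_i-\E[X_i])\big)$, obtaining
\[
\Pr\Big[\sum_{i\in[m]}(X_i-\E[X_i])>\tau\Big]\;\le\;e^{-\lambda\tau}\,\E\Big[\exp\Big(\lambda\sum_{i\in[m]}(X_i-\E[X_i])\Big)\Big].
\]
By independence of the $X_i$, the expectation on the right factorizes as $\prod_{i\in[m]}\E\big[e^{\lambda(X_i-\E[X_i])}\big]$, so it suffices to control each factor individually.

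The key step is a sub-Gaussian bound on the moment generating function of each centered variable $Y_i:=X_i-\E[X_i]$. Since $X_i$ takes values in $[-a_i,a_i]$, the variable $Y_i$ has mean zero and takes values in an interval of length at most $2a_i$. I would prove the auxiliary \emph{Hoeffding lemma}: if $Y$ is mean-zero and supported in $[u,v]$, then $\E[e^{\lambda Y}]\le e^{\lambda^2(v-u)^2/8}$. This follows by writing $\psi(\lambda)=\log\E[e^{\lambda Y}]$, noting $\psi(0)=0$ and $\psi'(0)=\E[Y]=0$, and bounding $\psi''(\lambda)$ by the variance of $Y$ under the tilted measure $e^{\lambda Y}\,d\Pr/\E[e^{\lambda Y}]$; since that tilted measure is still supported in $[u,v]$, its variance is at most $(v-u)^2/4$, and a second-order Taylor expansion of $\psi$ gives the claim. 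Applying this with $v-u\le 2a_i$ yields $\E[e^{\lambda Y_i}]\le e^{\lambda^2 a_i^2/2}$.

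Combining the two steps gives $\Pr\big[\sum_{i}(X_i-\E[X_i])>\tau\big]\le \exp\big(-\lambda\tau+\tfrac{\lambda^2}{2}\sum_{i}a_i^2\big)$ for every $\lambda>0$. Since $a_i>0$, the quantity $\sum_i a_i^2$ is positive, and optimizing over $\lambda$ by choosing $\lambda=\tau/\sum_i a_i^2$ produces the one-sided tail bound $\exp\!\big(-\tau^2/(2\sum_i a_i^2)\big)$. Finally, the random variables $-X_1,\dots,-X_m$ satisfy the same hypotheses, so the identical argument bounds the lower tail $\Pr\big[\sum_i(X_i-\E[X_i])<-\tau\big]$; a union bound over the two one-sided events yields the factor $2$ and the stated two-sided inequality. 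The only nonroutine ingredient is the Hoeffding lemma, i.e.\ the sub-Gaussian moment bound for a bounded mean-zero random variable; the rest is bookkeeping.
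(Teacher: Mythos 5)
Your proof is correct: it is the standard Chernoff/exponential-moment argument with Hoeffding's lemma, which is the textbook proof of this inequality. The paper states this lemma without proof as a well-known fact, so there is nothing to compare against; your argument would serve as a complete and standard justification.
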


\subsubsection{Valiant-Vazirani lemma}

Here, we state the lemma by Valiant and Vazirani to show that NP is as easy as detecting unique solutions. We will use this lemma to learn the effective support of a sparse Hamiltonian  \cite[Theorem 2.3]{valiant1985np}. 

\begin{lemma}\label{lem:ValiantVazirani}
    Let $\mathcal X\subseteq \{0,1\}^n-\{0^n\}$. Let $y_1,\dots,y_r$ be iid uniformly random elements of $\{0,1\}^n.$ Let $\mathcal S=\{x\in\mathcal X:\, x\cdot y_1=\dots=x\cdot y_r=0\}$. Then, 
    $$\mathbb E[|\mathcal S|]=2^{-r}|\mathcal X|\quad \text{and}\quad \mathrm{Var}[|\mathcal S|]=2^{-r}(1-2^{-r})|\mathcal X|.$$
\end{lemma}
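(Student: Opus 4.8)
The plan is to compute the two moments of $|\mathcal S|$ directly by writing $|\mathcal S|$ as a sum of indicator random variables and exploiting pairwise independence of the relevant events. First I would fix $x\in\mathcal X$ and consider the event $E_x=\{x\cdot y_1=\dots=x\cdot y_r=0\}$, so that $|\mathcal S|=\sum_{x\in\mathcal X}\indic[E_x]$. Since $x\neq 0^n$, for a single uniformly random $y\in\{0,1\}^n$ the inner product $x\cdot y$ is a uniform bit (the map $y\mapsto x\cdot y$ is a surjective linear functional, so exactly half the inputs give $0$); hence $\Pr[x\cdot y=0]=1/2$. Because $y_1,\dots,y_r$ are independent, $\Pr[E_x]=2^{-r}$, and by linearity $\E[|\mathcal S|]=2^{-r}|\mathcal X|$.

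For the variance I would expand $\E[|\mathcal S|^2]=\sum_{x,x'\in\mathcal X}\Pr[E_x\cap E_{x'}]$ and split into the diagonal and off-diagonal terms. On the diagonal, $\Pr[E_x\cap E_x]=\Pr[E_x]=2^{-r}$, contributing $2^{-r}|\mathcal X|$. For the off-diagonal terms with $x\neq x'$, the key claim is that $E_x$ and $E_{x'}$ are independent, giving $\Pr[E_x\cap E_{x'}]=2^{-2r}$. This reduces to showing that for a single random $y$, the pair of bits $(x\cdot y,\,x'\cdot y)$ is uniform on $\{0,1\}^2$ whenever $x\neq x'$ (and both nonzero): this holds because $x$ and $x'$ are linearly independent over $\mathbb F_2$ when they are distinct nonzero vectors, so $y\mapsto(x\cdot y,x'\cdot y)$ is a surjective linear map $\mathbb F_2^n\to\mathbb F_2^2$ and each output value has exactly $2^{n-2}$ preimages. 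Taking a product over the $r$ independent choices of $y_j$ gives $\Pr[E_x\cap E_{x'}]=2^{-2r}$. Summing, $\E[|\mathcal S|^2]=2^{-r}|\mathcal X|+2^{-2r}|\mathcal X|(|\mathcal X|-1)$, and subtracting $\E[|\mathcal S|]^2=2^{-2r}|\mathcal X|^2$ yields $\mathrm{Var}[|\mathcal S|]=2^{-r}|\mathcal X|-2^{-2r}|\mathcal X|=2^{-r}(1-2^{-r})|\mathcal X|$, as claimed.

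The only genuinely delicate point is the linear-algebra fact that two \emph{distinct nonzero} vectors in $\mathbb F_2^n$ are linearly independent — this is where the hypothesis $\mathcal X\subseteq\{0,1\}^n-\{0^n\}$ is used, and it is exactly what makes the events pairwise independent rather than merely identically distributed. (Over $\mathbb F_2$, distinctness plus nonzeroness forces independence, since the only nontrivial linear relation $ax+bx'=0$ with $a,b\in\mathbb F_2$ not both zero would force $x=x'$ or one of them zero.) Everything else is bookkeeping with indicator variables, so I expect the proof to be short once this observation is in place.
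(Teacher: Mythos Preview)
Your argument is correct and is the standard indicator-variable proof of this fact. Note, however, that the paper does not actually give its own proof of this lemma: it simply cites it as \cite[Theorem 2.3]{valiant1985np} and then uses it as a black box in the subsequent corollaries. So there is nothing to compare against; your proof would serve perfectly well as a self-contained replacement for the citation.
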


\begin{corollary}\label{cor:ValiantVazirani}
    Let $\mathcal X\subseteq \{0,1\}^n-\{0^n\}$. Let $y_1,\dots,y_{\lceil\log(|\mathcal X|)\rceil+2}$ be iid uniformly random elements of $\{0,1\}^n.$ Let $\mathcal S=\{x\in\mathcal X:\, x\cdot y_1=\dots=x\cdot y_{\lceil\log(|\mathcal X|)\rceil+2}=0\}$. Then, 
    $$\Pr[|\mathcal S|=0]\geq \frac{1}{2}.$$
\end{corollary}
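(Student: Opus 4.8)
The plan is to apply Lemma \ref{lem:ValiantVazirani} with $r = \lceil \log(|\mathcal X|)\rceil + 2$ and then use Chebyshev's inequality (Lemma \ref{lem:Chebychev}) to control the deviation of $|\mathcal S|$ from its mean. First I would record the two moments given by Lemma \ref{lem:ValiantVazirani}: $\mu \coloneqq \mathbb E[|\mathcal S|] = 2^{-r}|\mathcal X|$ and $\sigma^2 \coloneqq \mathrm{Var}[|\mathcal S|] = 2^{-r}(1-2^{-r})|\mathcal X| \le 2^{-r}|\mathcal X| = \mu$. With the choice $r = \lceil \log(|\mathcal X|)\rceil + 2$ we have $2^r \ge 4|\mathcal X|$, hence $\mu = 2^{-r}|\mathcal X| \le 1/4$, and in particular $\sigma^2 \le \mu \le 1/4$, so $\sigma \le 1/2$.

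Next I would observe that $|\mathcal S|$ is a nonnegative integer, so the event $\{|\mathcal S| \ne 0\}$ is exactly the event $\{|\mathcal S| \ge 1\}$, which is contained in $\{\,|\,|\mathcal S| - \mu\,| \ge 1 - \mu\,\}$. Since $\mu \le 1/4$ we have $1 - \mu \ge 3/4 \ge \mu \ge \sigma^2$, and I want to phrase the bound as a multiple of $\sigma$. Writing $1 - \mu = k\sigma$ with $k = (1-\mu)/\sigma$, Lemma \ref{lem:Chebychev} gives
$$
\Pr[|\mathcal S| \ge 1] \le \Pr[\,|\,|\mathcal S| - \mu\,| \ge k\sigma\,] \le \frac{1}{k^2} = \frac{\sigma^2}{(1-\mu)^2}.
$$
Now I bound the right-hand side: $\sigma^2 \le \mu$ and $1 - \mu \ge 3/4$, so $\sigma^2/(1-\mu)^2 \le \mu/(3/4)^2 = (16/9)\mu \le (16/9)(1/4) = 4/9 < 1/2$. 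Therefore $\Pr[|\mathcal S| = 0] = 1 - \Pr[|\mathcal S| \ge 1] \ge 1 - 4/9 = 5/9 > 1/2$, which is the claim. (If $\sigma = 0$, i.e. $\mathcal X = \emptyset$, then $|\mathcal S| = 0$ deterministically and the statement is trivial; this degenerate case should be dispatched separately before invoking Chebyshev.)

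There is no real obstacle here — the only thing to be careful about is making the constants line up, i.e. checking that $r = \lceil\log|\mathcal X|\rceil + 2$ is exactly enough to push $\mu$ below $1/4$ so that the Chebyshev bound beats $1/2$; a weaker choice such as $+1$ would only give $\mu \le 1/2$ and the argument would fail. One could alternatively use the second-moment/Paley–Zygmund-type bound $\Pr[|\mathcal S| \ge 1] \le \mathbb E[|\mathcal S|] = \mu \le 1/4$ via Markov's inequality, which is even simpler and gives the stronger conclusion $\Pr[|\mathcal S| = 0] \ge 3/4$; but since the paper explicitly states Chebyshev as the tool and only claims $\ge 1/2$, I would present the Markov argument as the main line (it is cleaner) and note that Chebyshev also suffices.
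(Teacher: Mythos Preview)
Your proof is correct and follows essentially the same approach as the paper: apply Lemma~\ref{lem:ValiantVazirani} to get $\mu\le 1/4$ and $\sigma^2<1/4$, then use Chebyshev together with the fact that $|\mathcal S|$ is integer-valued to conclude $\Pr[|\mathcal S|\ge 1]\le 1/2$. The paper applies Chebyshev with $k=\sqrt{2}$ and checks $\mu+\sqrt{2}\sigma<1$, whereas you take $k=(1-\mu)/\sigma$ and bound $\sigma^2/(1-\mu)^2\le 4/9$; both are the same idea with different bookkeeping.

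Your closing observation that Markov's inequality alone gives $\Pr[|\mathcal S|\ge 1]\le \mathbb E[|\mathcal S|]\le 1/4$ is a genuine simplification over the paper's route: it bypasses the variance computation in Lemma~\ref{lem:ValiantVazirani} entirely and yields the stronger bound $\Pr[|\mathcal S|=0]\ge 3/4$. The paper does not take this shortcut, presumably to keep the argument parallel with Corollary~\ref{cor:ValiantVaziranidelta}, where the variance is actually needed to get the $1-\delta$ guarantee (Markov would only give $1-\delta/4$, which is also fine, so in fact Markov suffices there too).
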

\begin{proof}
    By Lemma \ref{lem:ValiantVazirani}, we have that $$\mathbb E[|\mathcal S|]=\frac{1}{4}\quad \text{and}\quad \mathrm{Var}[|\mathcal S|]< \frac{1}{4}.$$
    Then, by Chebychev's inequality (Lemma~\ref{lem:Chebychev})
    $$\Pr\bigg[|\mathcal{S}|\geq \underbrace{\frac{1}{4}+\frac{\sqrt{2}}{2}}_{\mu+\sqrt{2}\sigma}\bigg]\leq \frac{1}{2}.$$
    As $1/4+\sqrt{2}/2<1$ and $|\mathcal S|$ only takes integer values, we have that $\Pr[|\mathcal S|=0]\geq 1/2.$
\end{proof}

\begin{corollary}\label{cor:ValiantVaziranidelta}
    Let $\mathcal X\subseteq \{0,1\}^n-\{0^n\}$. Let $\delta<1$. Let $y_1,\dots,y_{\lceil\log(|\mathcal X|)+\log(1/\delta)+2\rceil}$ be iid uniformly random elements of $\{0,1\}^n.$ Let $\mathcal S=\{x\in\mathcal X:\, x\cdot y_1=\dots=x\cdot y_{\lceil\log(|\mathcal X|)+\log(1/\delta)+2\rceil}=0\}$. Then, 
    $$\Pr[|\mathcal S|=0]\geq 1-\delta.$$
\end{corollary}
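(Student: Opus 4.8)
The plan is to reduce Corollary~\ref{cor:ValiantVaziranidelta} to the same variance-and-Chebyshev argument used for Corollary~\ref{cor:ValiantVazirani}, only with the number of hash vectors increased from $\lceil\log(|\mathcal X|)\rceil+2$ to $r\coloneqq\lceil\log(|\mathcal X|)+\log(1/\delta)+2\rceil$ so that the expected number of surviving elements is pushed down by an extra factor of $\delta$. First I would invoke Lemma~\ref{lem:ValiantVazirani} with this value of $r$ to get $\mathbb E[|\mathcal S|]=2^{-r}|\mathcal X|$ and $\mathrm{Var}[|\mathcal S|]=2^{-r}(1-2^{-r})|\mathcal X|\le 2^{-r}|\mathcal X|$. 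Since $r\ge \log(|\mathcal X|)+\log(1/\delta)+2$, we have $2^{-r}|\mathcal X|\le \delta/4$, so $\mu\coloneqq\mathbb E[|\mathcal S|]\le \delta/4$ and $\sigma^2\coloneqq\mathrm{Var}[|\mathcal S|]\le \delta/4$, hence $\sigma\le \sqrt{\delta}/2$.

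Next I would apply Chebyshev's inequality (Lemma~\ref{lem:Chebychev}) in the form $\Pr[|\mathcal S|\ge \mu + k\sigma]\le 1/k^2$ and choose $k$ so that $1/k^2\le \delta$, i.e. $k=1/\sqrt{\delta}$. Then $\mu + k\sigma \le \delta/4 + (1/\sqrt{\delta})\cdot(\sqrt{\delta}/2) = \delta/4 + 1/2 < 1$ because $\delta<1$. Since $|\mathcal S|$ is a nonnegative integer, the event $|\mathcal S|<1$ is exactly the event $|\mathcal S|=0$, so $\Pr[|\mathcal S|=0]\ge 1-\delta$, which is the claim.

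The only mild subtlety — and the single place where one must be slightly careful rather than mechanical — is checking the two numerical inequalities $2^{-r}|\mathcal X|\le \delta/4$ and $\mu+k\sigma<1$ simultaneously under the ceiling in the definition of $r$; since the ceiling only increases $r$, the bound $2^{-r}|\mathcal X|\le 2^{-(\log|\mathcal X|+\log(1/\delta)+2)}|\mathcal X| = \delta/4$ still holds, and the final sum $\delta/4+1/2$ stays below $1$ for every $\delta\in(0,1)$. There is no real obstacle here: the corollary is a quantitative strengthening of Corollary~\ref{cor:ValiantVazirani} obtained by the same two-line proof with $\delta$ tracked through the exponent, and a careful reader should view the present statement as essentially immediate once Lemma~\ref{lem:ValiantVazirani} is in hand.
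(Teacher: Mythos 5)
Your proposal is correct and follows essentially the same route as the paper: apply Lemma~\ref{lem:ValiantVazirani} to bound $\mathbb E[|\mathcal S|]$ and $\mathrm{Var}[|\mathcal S|]$ by $\delta/4$, then use Chebyshev with $k=1/\sqrt{\delta}$ and integrality of $|\mathcal S|$ to conclude. Your handling of the ceiling (noting it only decreases $2^{-r}|\mathcal X|$) is in fact slightly more careful than the paper's, which writes the expectation as an exact equality.
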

\begin{proof}
    By Lemma \ref{lem:ValiantVazirani}, we have that $$\mathbb E[|\mathcal S|]= \delta/4\quad \text{and}\quad \mathrm{Var}[|\mathcal S|]< \delta/4.$$
    Then, by Chebychev's inequality (Lemma~\ref{lem:Chebychev})
    $$\Pr\bigg[|\mathcal{S}|\geq \underbrace{\frac{\delta}{4}+\frac{1}{2}}_{\mu+(1/\sqrt{\delta})\sigma}\bigg]\leq \delta.$$
    As $\delta/4+1/2<1$ and $|\mathcal S|$ only takes integer values, we have that $\Pr[|\mathcal S|=0]\geq 1-\delta.$
\end{proof}

\subsubsection{Pauli sampling.} Suppose $U$ is a unitary and we write out its Pauli decomposition as $U=\sum_P u_PP$, then by Parseval's identity  $\sum|u_{P}|^2=\Tr[U^\dagger U]/2^n=1$, i.e.,  $\{|u_{P}|^2\}_P$ is a  \emph{probability distribution}. We will be using the fact below extensively.
 \begin{fact}
 \label{fact:bellsamplingU}
     Given access to a unitary $U$, one can sample from the distribution  $\{|u_P|^2\}_P$.
 \end{fact}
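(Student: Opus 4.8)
\textbf{Proof proposal for Fact~\ref{fact:bellsamplingU}.}

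The plan is to realise the distribution $\{|u_P|^2\}_P$ as the measurement statistics of a Bell-basis measurement on the Choi--Jamio\l kowski state of $U$. First I would prepare the maximally entangled state $\ket{\Phi}=\tfrac{1}{\sqrt{2^n}}\sum_{x\in\{0,1\}^n}\ket{x}\ket{x}$ on a $2n$-qubit system (this uses $n$ ancilla qubits and no queries to $U$), and then apply $U$ to the first register, obtaining the Choi state $\ket{\Psi_U}=(U\otimes \Id)\ket{\Phi}=\tfrac{1}{\sqrt{2^n}}\sum_P u_P\, (P\otimes \Id)\ket{\Phi}$, using the Pauli expansion $U=\sum_P u_P P$. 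The key algebraic input is that the $2^{2n}$ vectors $\ket{\Phi_P}\coloneqq (P\otimes \Id)\ket{\Phi}$, as $P$ ranges over $\{I,X,Y,Z\}^{\otimes n}$, form an orthonormal basis of $\C^{2^n}\otimes\C^{2^n}$ --- the \emph{Bell basis}; this follows from $\langle \Phi_P | \Phi_Q\rangle = \Tr[P^\dagger Q]/2^n = \delta_{P,Q}$, using that Paulis are unitary and trace-orthogonal. Consequently $\ket{\Psi_U}=\sum_P u_P \ket{\Phi_P}$ is already written in this orthonormal basis, so measuring $\ket{\Psi_U}$ in the Bell basis yields outcome $P$ with probability exactly $|u_P|^2$.

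The remaining point is that the Bell-basis measurement is implementable by an $H$-independent unitary followed by a computational-basis measurement, which is exactly what the access model allows: apply (the inverse of) the Clifford circuit that maps the computational basis to the Bell basis --- concretely $\mathrm{CNOT}$s from the first register to the second followed by Hadamards on the first register --- and then measure in the computational basis; reading off the two $n$-bit outcomes and inverting the standard dictionary between Bell pairs and Paulis recovers the label $P$. Since this whole procedure is one experiment in the sense of the access model (prepare a state on $n$ system plus $n$ ancilla qubits, query $U$ once, apply an $H$-independent unitary $V_1$, measure), it samples from $\{|u_P|^2\}_P$ as claimed.

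I do not anticipate a genuine obstacle here: the only thing to be careful about is bookkeeping of the phase conventions --- each $\ket{\Phi_P}$ carries the global phase $\alpha_P$ from writing $P=\alpha_P X^{a_1}Z^{b_1}\otimes\cdots\otimes X^{a_n}Z^{b_n}$, but $|\alpha_P|=1$, so these phases are invisible to the measurement and the output distribution is unaffected. One should also note that Fact~\ref{fact:bellsamplingU} only requires a single use of $U$ and $n$ ancillas, matching the resource claims made elsewhere in the paper; if $U$ is only available as $e^{-itH}$ this costs total evolution time $t$.
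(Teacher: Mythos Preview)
Your proposal is correct and follows essentially the same approach as the paper: prepare the Choi--Jamio\l kowski state $(U\otimes\Id)\ket{\mathrm{EPR}_n}$, expand $U$ in the Pauli basis to write this state as $\sum_P u_P\,(P\otimes\Id)\ket{\mathrm{EPR}_n}$, observe that the $(P\otimes\Id)\ket{\mathrm{EPR}_n}$ form the Bell basis, and measure. Your write-up is more detailed than the paper's (you spell out the orthonormality via $\Tr[P^\dagger Q]/2^n=\delta_{P,Q}$, the Clifford implementation of the Bell measurement, and the irrelevance of the phase $\alpha_P$), but the argument is the same.
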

 \begin{proof}
     The proof simply follows by applying $U\otimes \Id_{2^n}$ to $n$ EPR pairs (i.e., preparing the Choi-Jamiolkowski state of $U$) and measuring in the Bell basis, because $$
 U\otimes\Id_{2^n}\ket{\mathrm{EPR}_n}=\sum_{P\in\{I,X,Y,Z\}^{\otimes n}}u_P \mathop{\bigotimes}_{i\in [n]}(
 P_i\otimes I\ket{\mathrm{EPR}}),
 $$
and the Bell states can be written as $P\otimes I\ket{\mathrm{EPR}}$ for $P\in\{I,X,Y,Z\}$, where $\ket{\mathrm{EPR}}=(\ket{00}+\ket{11})/\sqrt 2$.
 \end{proof}

\subsubsection{The Clifford group} 
We recall that the $n$-qubit Clifford group is the subgroup of the $n$-qubit unitaries that, via conjugation, map $n$ qubit Pauli operators to $n$-qubit Pauli operators, up to a phase. It can alternatively be described as the group of unitary operators that are generated by $H,S$ and $CX$ gates acting upon any of the $n$ qubits. Note that tensor products of Clifford Unitaries are also Clifford Unitaries.
\begin{lemma}\label{lem:clifford_orbit}
    Let $P_1,\ P_2\in\{I,X,Y,Z\}^{\otimes n}-\{I^{\otimes n}\}$. Then, there is a Clifford unitary $C$ and a complex number $c$ such that $CP_1C^{\dagger}=cP_2$. Furthermore, up to a phase, $P\to C^{\dagger}P C$ is a bijection of $\{I,X,Y,Z\}^{\otimes n}$.
\end{lemma}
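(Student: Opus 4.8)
The plan is to first dispose of the ``furthermore'' clause, which is essentially immediate from the definition of the Clifford group, and then to prove transitivity of the Clifford action on non-identity Pauli strings by a normal-form (weight-reduction) argument.

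For the bijectivity statement: by definition of the Clifford group, for any Clifford $C$ and any $P\in\{I,X,Y,Z\}^{\otimes n}$ the operator $C^\dagger P C$ is again a Pauli string up to a phase, so we may write $C^\dagger P C=\lambda_P\,\Phi(P)$ with $\Phi(P)\in\{I,X,Y,Z\}^{\otimes n}$ and $\lambda_P\in\C$. Since $C^\dagger$ is also a Clifford unitary, the map $P\mapsto C P C^\dagger$ enjoys the same property, and the two induced maps on $\{I,X,Y,Z\}^{\otimes n}$ are mutually inverse; hence $\Phi$ is a bijection. This gives the last sentence of the statement.

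For the main claim, by composition it suffices to show that for every $P\in\{I,X,Y,Z\}^{\otimes n}\setminus\{I^{\otimes n}\}$ there is a Clifford $C_P$ and $c_P\in\C$ with $C_P P C_P^\dagger=c_P\,(Z\otimes I^{\otimes(n-1)})$: given this, set $C=C_{P_2}^\dagger C_{P_1}$, so that $C P_1 C^\dagger=c_{P_1}\,C_{P_2}^\dagger (Z\otimes I^{\otimes(n-1)}) C_{P_2}=(c_{P_1}/c_{P_2})\,P_2$. To construct $C_P$ I proceed in two steps. First, on each qubit $i$ where $P$ acts non-trivially, conjugate by a single-qubit Clifford chosen to send that tensor factor to $Z$ (for instance $H$ sends $X$ to $Z$, the identity fixes $Z$, and conjugation by $H S^\dagger$ sends $Y$ to $Z$ up to a sign); after this, $P$ has been conjugated to $c'\bigotimes_{i\in S} Z_i$ for a non-empty $S\subseteq[n]$ and a sign $c'$. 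Second, repeatedly conjugate by $CX$ gates: if $i,j\in S$, conjugation by the $CX$ with control $i$ and target $j$ sends $Z_i Z_j$ to $Z_j$ and fixes the other factors (this is the standard tableau rule $CX_{i\to j} Z_i CX_{i\to j}^\dagger=Z_i$, $CX_{i\to j} Z_j CX_{i\to j}^\dagger=Z_i Z_j$), hence strictly decreases $|S|$ without introducing new non-identity factors; iterating reduces the support to a single qubit $\{k\}$, and a final SWAP of qubits $1$ and $k$ (also Clifford) produces $c_P\,(Z\otimes I^{\otimes(n-1)})$.

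All conjugation rules used above are the standard tableau rules for $H$, $S$, $CX$ and SWAP and introduce only signs, so in fact $c_P\in\{\pm1\}$ and $c=c_{P_1}/c_{P_2}\in\{\pm1\}$; since the statement only asks for $c\in\C$, no careful phase tracking is needed. The only mildly delicate point is verifying the weight-reduction step — that each $CX$-conjugation strictly shrinks $S$ and creates no new non-identity factor — which is exactly the displayed $Z$-type tableau rules above. I do not anticipate any genuine obstacle: the argument is entirely constructive.
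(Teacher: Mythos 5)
Your proposal is correct and follows essentially the same route as the paper: reduce every non-identity Pauli string to a fixed canonical one-qubit Pauli (you use $Z\otimes I^{\otimes(n-1)}$, the paper uses $X\otimes I^{\otimes(n-1)}$) via single-qubit Cliffords, then weight reduction with $CX$, then a SWAP, and conclude by composing the two reductions; the bijectivity clause is handled identically, from invertibility of conjugation. The conjugation rules you invoke ($HXH\propto Z$, $HS^\dagger$ sending $Y$ to $Z$, and the $Z$-type $CX$ tableau rules) are all correct, so no gap remains.
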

\begin{proof}
Any two Pauli operators $P_1$ and $P_2$ for which there exists Clifford unitary $C$ such that $CP_1C^{\dagger}=cP_2$ are said to be Clifford equivalent, and it can be verified that this is indeed an equivalence relation.

For the first part, it suffices to show that each non identity Pauli is Clifford equivalent to $X\otimes I^{\otimes (n-1)}$ up to a phase.
Now, note that all non identity single Qubit Pauli's are equivalent to each other, because 
        \[HXH\propto Z \text{ and }  SXS^\dagger\propto Y.\]
Thus, we have that any $n$-qubit Pauli operator is Clifford equivalent to a tensor product of $X$s and $I$s. We continue by noting that $CX(X\otimes X)CX= I \otimes X$, so every tensor product of $X$s and $I$s is Clifford equivalent to a tensor product of just one $X$ and $I$s.  Finally, by conjugating with swap gates (which also lies in the Clifford group), we can move that $X$ to the first qubit. This shows that all Pauli strings are Clifford equivalent to $X\otimes I^{\otimes (n-1)}$, as claimed.

For the second part, it suffices to note that, by definition, Clifford circuits map, by conjugation, Pauli strings to Pauli strings (up to phases) and that this conjugation operatoion is invertible.
\end{proof}

\subsubsection{Distances between quantum objects}
We recall here two well-known results which will be useful in the following (for proofs, see \cite[Proposition~1.6 and Lemma~3.2(d)]{haah2023query}) 
\begin{lemma}\label{eq_dia_inf}
    For any two unitaries $U$ and $V$, it holds that
    \begin{equation}
        \frac{1}{2}\left\| U(\cdot)U^\dagger - V(\cdot)V^\dagger \right\|_\diamond 
        \;\le\; \min_{\phi\in\mathbb{R}}\|Ue^{i\phi}-V\|_{\opnorm} 
        \;\le\; \left\| U(\cdot)U^\dagger - V(\cdot)V^\dagger \right\|_\diamond .
    \end{equation}
\end{lemma}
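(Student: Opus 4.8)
The statement sandwiches the diamond-norm distance of the two unitary channels $U(\cdot)U^\dagger$, $V(\cdot)V^\dagger$ between $1$ and $2$ times the phase-optimised operator-norm distance $\min_\phi\norm{e^{i\phi}U-V}_{\opnorm}$. I would prove the two inequalities separately: the left one (the diamond distance is not too large) is routine, and the right one (the diamond distance is not too small) is the actual content.

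For the left inequality, fix $\phi\in\R$ and append an arbitrary reference system; with $A=(e^{i\phi}U)\otimes\Id$ and $B=V\otimes\Id$, the identity $A\rho A^\dagger-B\rho B^\dagger=A\rho(A-B)^\dagger+(A-B)\rho B^\dagger$, together with submultiplicativity $\norm{XYZ}_{\trnorm}\le\norm{X}_{\opnorm}\norm{Y}_{\trnorm}\norm{Z}_{\opnorm}$ and $\norm{A}_{\opnorm}=\norm{B}_{\opnorm}=\norm{\rho}_{\trnorm}=1$, gives $\norm{A\rho A^\dagger-B\rho B^\dagger}_{\trnorm}\le 2\norm{e^{i\phi}U-V}_{\opnorm}$. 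Since a global phase does not change the channel $U(\cdot)U^\dagger$, the left side equals $\norm{(U(\cdot)U^\dagger-V(\cdot)V^\dagger)\otimes\id\,(\rho)}_{\trnorm}$; taking the supremum over $\rho$ and the minimum over $\phi$ finishes this direction.

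For the right inequality I would first reduce to a single unitary: composing both channels (on the outside) with $V^\dagger(\cdot)V$ preserves diamond distances, since its inverse is again a channel, so with $W=V^\dagger U$ it suffices to show $\min_\phi\norm{e^{i\phi}W-\Id}_{\opnorm}\le\norm{W(\cdot)W^\dagger-(\cdot)}_{\diamond}$, using $\norm{e^{i\phi}U-V}_{\opnorm}=\norm{e^{i\phi}W-\Id}_{\opnorm}$. Let $e^{i\theta_1},\dots,e^{i\theta_d}$ be the eigenvalues of $W$. On the operator-norm side, $\norm{e^{i\phi}W-\Id}_{\opnorm}=\max_j\abs{e^{i\theta_j}-e^{-i\phi}}$, and since the chord length between two points of the unit circle is increasing in their angular distance, minimising over $\phi$ gives $2\sin(\gamma^*/2)$, where $\gamma^*=\min_{\abs z=1}\max_j\mathrm{dist}(e^{i\theta_j},z)$ is the radius of the smallest circular arc enclosing all eigenvalues. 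On the diamond-norm side, feeding the channel the pure state $\ket\psi=\sum_j\sqrt{p_j}\,\ket{\lambda_j}\otimes\ket j$, with $\ket{\lambda_j}$ the eigenvector of $W$ for $e^{i\theta_j}$, makes both outputs pure, so $\tfrac12\norm{W(\cdot)W^\dagger-(\cdot)}_{\diamond}\ge\sqrt{1-\abs{\langle\psi|(W\otimes\Id)|\psi\rangle}^2}=\sqrt{1-\abs{\sum_j p_j e^{i\theta_j}}^2}$; optimising the probability vector $p$ lets $\sum_j p_j e^{i\theta_j}$ range over $\conv\{e^{i\theta_j}\}$, so $\tfrac12\norm{W(\cdot)W^\dagger-(\cdot)}_{\diamond}\ge\sqrt{1-\mu^2}$ with $\mu=\mathrm{dist}(0,\conv\{e^{i\theta_j}\})$. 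The proof then reduces to the planar inequality $\mu\le\cos(\gamma^*/2)$ (as then $\sqrt{1-\mu^2}\ge\sin(\gamma^*/2)$): if $\mu=0$ it is trivial, and if $\mu>0$ the supporting line of $\conv\{e^{i\theta_j}\}$ at its point closest to the origin confines every eigenvalue to an arc of angular half-width $\arccos\mu<\pi/2$, whence $\gamma^*\le\arccos\mu$ and $\mu\le\cos\gamma^*\le\cos(\gamma^*/2)$.

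The main obstacle is this last geometric step, which must reconcile an operator-norm quantity (the smallest enclosing arc of the eigenvalues) with a diamond-norm quantity (the distance from the origin to the convex hull of the eigenvalues). The point that makes it go through cleanly is the observation that a strictly positive $\mu$ already forces all eigenvalues into less than a half-circle, hence $\gamma^*<\pi/2$; without this remark one is tempted to bound $\mu$ only by the distance from the origin to the chord joining the two extreme eigenvalues, which is too weak once the eigenvalues spread over more than a half-circle. Everything else — the Schatten-norm estimates, the global-phase invariance and composition-contractivity of the diamond norm, and the trace-distance formula $\norm{\ketbra{a}{a}-\ketbra{b}{b}}_{\trnorm}=2\sqrt{1-\abs{\langle a|b\rangle}^2}$ for pure states — is standard.
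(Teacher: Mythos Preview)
The paper does not prove this lemma itself; it states the result and defers to \cite[Proposition~1.6 and Lemma~3.2(d)]{haah2023query} for a proof. Your argument is correct and follows the standard route. The left inequality via the telescoping identity $A\rho A^\dagger-B\rho B^\dagger=A\rho(A-B)^\dagger+(A-B)\rho B^\dagger$ and H\"older is routine. For the right inequality, your reduction to $W=V^\dagger U$ against the identity, followed by the eigenvalue-geometry comparison---smallest enclosing arc of the spectrum for the operator-norm side, distance from the origin to the convex hull of the eigenvalues for the diamond-norm side (cf.\ Eq.~\eqref{eq_diamond_pure1})---is exactly how this is usually done; the supporting-hyperplane step cleanly gives $\gamma^*\le\arccos\mu$ and hence $\mu\le\cos\gamma^*\le\cos(\gamma^*/2)$ when $\mu>0$, and the case $\mu=0$ is trivial since $\cos(\gamma^*/2)\ge 0$.
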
 
\begin{lemma}\label{lemma_commm}
    Let $X,Y$ be anti-Hamiltonians. Then
    \begin{equation}
        \big\| e^X e^Y - e^{X+Y} \big\|_{\opnorm} 
        \;\le\; \tfrac{1}{2}\,\|[X,Y]\|_{\opnorm}.
    \end{equation}
\end{lemma}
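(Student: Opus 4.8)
\section*{Proof proposal for Lemma~\ref{lemma_commm}}

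The plan is to interpolate between the two products along a one-parameter path of unitaries and bound the derivative. For $s\in[0,1]$ set $F(s)=e^{sX}e^{sY}e^{(1-s)(X+Y)}$, so that $F(1)=e^Xe^Y$, $F(0)=e^{X+Y}$, and hence $e^Xe^Y-e^{X+Y}=\int_0^1 F'(s)\,\di s$. It therefore suffices to prove the pointwise bound $\norm{F'(s)}_{\opnorm}\le s\,\norm{[X,Y]}_{\opnorm}$ and integrate.

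First I would compute $F'(s)$ by the product rule and simplify. The three terms produced are $e^{sX}Xe^{sY}e^{(1-s)(X+Y)}$ (using $Xe^{sX}=e^{sX}X$), $e^{sX}Ye^{sY}e^{(1-s)(X+Y)}$, and $-e^{sX}e^{sY}(X+Y)e^{(1-s)(X+Y)}$. Since $Y$ commutes with $e^{sY}$, the second term equals $e^{sX}e^{sY}Ye^{(1-s)(X+Y)}$, which exactly cancels the $Y$-part of the third term. What remains is
\[
F'(s)=e^{sX}\bigl(Xe^{sY}-e^{sY}X\bigr)e^{(1-s)(X+Y)}=e^{sX}\,[X,e^{sY}]\,e^{(1-s)(X+Y)}.
\]
Next I would turn the commutator $[X,e^{sY}]$ into an integral of $[X,Y]$: from $\frac{\di}{\di u}\bigl(e^{-uY}Xe^{uY}\bigr)=e^{-uY}[X,Y]e^{uY}$ we get $e^{-sY}Xe^{sY}-X=\int_0^s e^{-uY}[X,Y]e^{uY}\,\di u$, and multiplying on the left by $e^{sY}$ gives $[X,e^{sY}]=\int_0^s e^{(s-u)Y}[X,Y]e^{uY}\,\di u$. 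Substituting,
\[
F'(s)=\int_0^s e^{sX}\,e^{(s-u)Y}\,[X,Y]\,e^{uY}\,e^{(1-s)(X+Y)}\,\di u.
\]

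Finally, because $X$, $Y$ and $X+Y$ are anti-Hermitian, every exponential appearing above is unitary and has operator norm $1$; by submultiplicativity and the triangle inequality for the integral, $\norm{F'(s)}_{\opnorm}\le s\,\norm{[X,Y]}_{\opnorm}$. Hence
\[
\norm{e^Xe^Y-e^{X+Y}}_{\opnorm}\le\int_0^1\norm{F'(s)}_{\opnorm}\,\di s\le\norm{[X,Y]}_{\opnorm}\int_0^1 s\,\di s=\tfrac12\norm{[X,Y]}_{\opnorm},
\]
as claimed. The only point requiring care is getting the cancellation in $F'(s)$ correct — in particular remembering that $X$ does \emph{not} commute with $e^{sY}$, so one genuine commutator survives — after which anti-Hermiticity makes all the norm estimates immediate; no subtle analytic issue arises since everything is finite-dimensional and the integrands are continuous.
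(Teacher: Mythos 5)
Your proof is correct. The paper does not prove this lemma itself --- it simply cites \cite[Lemma~3.2(d)]{haah2023query} --- and your interpolation argument is the standard derivation of this first-order Trotter error bound: the path $F(s)=e^{sX}e^{sY}e^{(1-s)(X+Y)}$ has the right endpoints, the product rule gives exactly the surviving commutator $e^{sX}[X,e^{sY}]e^{(1-s)(X+Y)}$ after the $Y$-terms cancel, the integral representation $[X,e^{sY}]=\int_0^s e^{(s-u)Y}[X,Y]e^{uY}\,\di u$ is right, and anti-Hermiticity makes every exponential unitary so the norm bound $\norm{F'(s)}_{\opnorm}\le s\norm{[X,Y]}_{\opnorm}$ and the final integration are immediate.
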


We prove state, and prove for completeness, a few folklore results about relations between $\norm{H}_{\opnorm}$ and various $\ell_p$ norms of the Pauli coefficients.
\begin{lemma}\label{lem:comp_linf_opnorm}
    For an $s$-sparse Hamiltonian $H$,
    \[s\norm{h}_{\ell_{\infty}}\geq \norm{H}_{\opnorm}\geq \norm{h}_{\ell_\infty}.\]
\end{lemma}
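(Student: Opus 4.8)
The plan is to prove both inequalities in $s\,\norm{h}_{\ell_\infty} \geq \norm{H}_{\opnorm} \geq \norm{h}_{\ell_\infty}$ separately, each by an elementary argument.

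For the upper bound $\norm{H}_{\opnorm} \leq s\,\norm{h}_{\ell_\infty}$, I would write $H = \sum_{P \in \mathcal H} h_P P$ and apply the triangle inequality for the operator norm, getting $\norm{H}_{\opnorm} \leq \sum_{P \in \mathcal H} |h_P|\,\norm{P}_{\opnorm}$. Since each Pauli string is unitary, $\norm{P}_{\opnorm} = 1$, so $\norm{H}_{\opnorm} \leq \sum_{P \in \mathcal H} |h_P| \leq |\mathcal H|\cdot \norm{h}_{\ell_\infty} \leq s\,\norm{h}_{\ell_\infty}$, using $s$-sparsity in the last step. (This is really just the bound $\norm{H}_{\opnorm} \leq \norm{h}_{\ell_1} \leq s\,\norm{h}_{\ell_\infty}$.)

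For the lower bound $\norm{H}_{\opnorm} \geq \norm{h}_{\ell_\infty}$, the cleanest route uses the Hilbert–Schmidt inner product. Fix any Pauli string $P_0$. The Pauli strings are orthogonal with respect to $\langle A, B\rangle = \Tr[A^\dagger B]/2^n$, and $\norm{P_0}$ in this normalized Hilbert–Schmidt norm is $1$. Hence $|h_{P_0}| = |\Tr[P_0^\dagger H]|/2^n = |\langle P_0, H\rangle| \leq \norm{P_0}_{\mathrm{HS,norm}} \cdot \norm{H}_{\mathrm{HS}}$... but this gives the Hilbert–Schmidt norm of $H$, not the operator norm, and since $\norm{H}_{\mathrm{HS}} \geq \norm{H}_{\opnorm}$ in the normalized convention this is the wrong direction. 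Instead I would use the duality $|\Tr[P_0^\dagger H]| \leq \norm{P_0}_{\opnorm}\cdot \norm{H}_{\trnorm}$ — still wrong direction. The correct elementary argument: $|h_{P_0}| = |\Tr[P_0 H]|/2^n$, and by von Neumann's trace inequality (or directly, since $P_0$ is unitary) $|\Tr[P_0 H]| \leq \norm{P_0}_{\opnorm}\cdot\norm{H}_{\trnorm}$ — no. The genuinely right approach: write $P_0 = U D U^\dagger$ with $D$ diagonal $\pm 1$ entries; then $\Tr[P_0 H]/2^n$ is an average of diagonal entries of $U^\dagger H U$ (with signs), each of which is bounded in absolute value by $\norm{H}_{\opnorm}$; hence $|h_{P_0}| \leq \norm{H}_{\opnorm}$. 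Taking the maximum over $P_0$ gives $\norm{h}_{\ell_\infty} \leq \norm{H}_{\opnorm}$.

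The main obstacle is getting the direction of the lower-bound inequality right: it is tempting to invoke Hilbert–Schmidt/trace-norm duality, which points the wrong way, so one must instead argue that each Pauli coefficient is a signed average of diagonal matrix entries in an orthonormal basis (the eigenbasis of $P_0$) and therefore bounded by the largest singular value. Everything else is routine, and neither bound is hard once the right pairing is used.

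\begin{proof}
    For the upper bound, write $H=\sum_{P\in\mathcal H}h_P P$. Since each Pauli string $P$ is unitary, $\norm{P}_{\opnorm}=1$, so by the triangle inequality
    \[
        \norm{H}_{\opnorm}\leq \sum_{P\in\mathcal H}|h_P|\,\norm{P}_{\opnorm}=\sum_{P\in\mathcal H}|h_P|\leq |\mathcal H|\cdot\norm{h}_{\ell_\infty}\leq s\,\norm{h}_{\ell_\infty},
    \]
    where the last step uses that $H$ is $s$-sparse.

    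For the lower bound, fix any $P_0\in\{I,X,Y,Z\}^{\otimes n}$. Since $P_0$ is a self-adjoint unitary, it has a spectral decomposition $P_0=U\Lambda U^\dagger$ with $U$ unitary and $\Lambda=\diag(\lambda_1,\dots,\lambda_{2^n})$, $\lambda_j\in\{-1,+1\}$. Writing $M=U^\dagger H U$, which has the same operator norm as $H$, we obtain
    \[
        |h_{P_0}|=\frac{1}{2^n}\left|\Tr[P_0 H]\right|=\frac{1}{2^n}\left|\Tr[\Lambda U^\dagger H U]\right|=\frac{1}{2^n}\left|\sum_{j=1}^{2^n}\lambda_j M_{jj}\right|\leq \frac{1}{2^n}\sum_{j=1}^{2^n}|M_{jj}|.
    \]
    Each diagonal entry satisfies $|M_{jj}|=|\langle e_j|M|e_j\rangle|\leq \norm{M}_{\opnorm}=\norm{H}_{\opnorm}$, so $|h_{P_0}|\leq \norm{H}_{\opnorm}$. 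Taking the maximum over $P_0$ yields $\norm{h}_{\ell_\infty}\leq \norm{H}_{\opnorm}$.
\end{proof}
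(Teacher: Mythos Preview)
Your proof is correct and follows essentially the same approach as the paper. For the lower bound the paper is slightly more compact---it writes $|h_P|=\tfrac{1}{2^n}|\Tr[HP]|\leq \norm{HP}_{\opnorm}\leq \norm{H}_{\opnorm}\norm{P}_{\opnorm}=\norm{H}_{\opnorm}$ in one line---whereas you unpack the step $\tfrac{1}{2^n}|\Tr[A]|\leq\norm{A}_{\opnorm}$ by diagonalising $P_0$ and bounding diagonal entries; but this is the same argument in substance.
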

\begin{proof}
    The first inequality follows by the triangle inequality. For the second, observe that for $H=\sum h_PP$. Then, for each $P$, we have

    \[h_P=\frac{1}{2^n}\Tr[HP]\leq \norm{HP}_{\opnorm}\leq \norm{H}_{\opnorm}\cdot \norm{P}_{\opnorm}=\norm{H}_\opnorm.\]
\end{proof}
\begin{lemma}\label{lem:comp_l1_opnorm}
    For an $s$-sparse Hamiltonian $H$,
    \[\norm{h}_{\ell_1}\geq \norm{H}_{\opnorm}\geq \norm{h}_{\ell_2}\geq \frac 1 {\sqrt s}\norm{h}_{\ell_1}.\]
    
\end{lemma}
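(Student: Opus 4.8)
\textbf{Proof plan for Lemma~\ref{lem:comp_l1_opnorm}.}

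The plan is to establish the chain of four quantities $\norm{h}_{\ell_1}\geq \norm{H}_{\opnorm}\geq \norm{h}_{\ell_2}\geq \frac{1}{\sqrt s}\norm{h}_{\ell_1}$ by proving each of the three inequalities separately, since two of them are elementary and the third is a standard consequence of $s$-sparsity. I will take them in order from left to right.

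First, for $\norm{h}_{\ell_1}\geq \norm{H}_{\opnorm}$: this is just the triangle inequality applied to $H=\sum_P h_P P$, using that each Pauli string satisfies $\norm{P}_{\opnorm}=1$. Concretely, $\norm{H}_{\opnorm}\leq \sum_P |h_P|\,\norm{P}_{\opnorm}=\sum_P|h_P|=\norm{h}_{\ell_1}$.

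Second, for $\norm{H}_{\opnorm}\geq \norm{h}_{\ell_2}$: this is the orthonormality of the (normalized) Pauli basis combined with the fact that the operator norm dominates the normalized Hilbert--Schmidt (Frobenius) norm. Indeed, $\norm{h}_{\ell_2}^2=\sum_P |h_P|^2 = \Tr[H^\dagger H]/2^n = \norm{H}_{\mathrm{HS}}^2/2^n$, and for any $2^n\times 2^n$ matrix $\Tr[H^\dagger H] = \sum_j \sigma_j(H)^2 \leq 2^n \max_j \sigma_j(H)^2 = 2^n\norm{H}_{\opnorm}^2$; dividing by $2^n$ and taking square roots gives the claim. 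Alternatively, one can apply any unit vector $\ket{\psi}$: the average of $\norm{H\ket{\psi}}^2$ over a suitable ensemble equals $\Tr[H^2]/2^n$, which is at most $\norm{H}_{\opnorm}^2$.

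Third, for $\norm{h}_{\ell_2}\geq \frac{1}{\sqrt s}\norm{h}_{\ell_1}$: this is exactly where the $s$-sparsity enters. The vector $h$ has at most $s$ nonzero entries, so by Cauchy--Schwarz applied to the support, $\norm{h}_{\ell_1}=\sum_{P\in\mathcal H}|h_P|\cdot 1 \leq \big(\sum_{P\in\mathcal H}|h_P|^2\big)^{1/2}\big(\sum_{P\in\mathcal H}1\big)^{1/2}=\norm{h}_{\ell_2}\sqrt{|\mathcal H|}\leq \sqrt s\,\norm{h}_{\ell_2}$, which rearranges to the stated bound. None of these steps presents a real obstacle; the only point requiring a moment's care is remembering to normalize the Hilbert--Schmidt inner product by $2^n$ so that the Pauli strings form an \emph{orthonormal} basis, which is what makes $\sum_P|h_P|^2 = \Tr[H^2]/2^n$ and hence ties the middle inequality to the operator norm cleanly.
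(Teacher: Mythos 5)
Your proof is correct and follows essentially the same route as the paper: triangle inequality for the first inequality, the bound $\Tr[H^2]/2^n \le \norm{H}_{\opnorm}^2$ together with Pauli orthogonality for the second, and Cauchy--Schwarz on the $s$-element support for the third.
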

\begin{proof}
    The first inequality follows by the triangle inequality and third by Cauchy-Schwarz inequality. For the second inequality, let $H=\sum h_P P$. Observe that

    \begin{align*}
        \norm{H}_{\opnorm}^2&=\norm {H^2}_{\opnorm}\geq \frac {1}{2^n}\Tr[H^2]=\sum h_P^2=\norm{H}_{\ell_2}^2,
    \end{align*}
    where we have used that the sum of eigenvalues of $H^2$ is at most $2^n$ times the operator norm of $H^2$, and that $\Tr[PQ]=\delta_{P,Q}2^n$ for every pair of Pauli operators $P,Q$.
\end{proof}

    \subsubsection{Nayak's bound.} 
    For our lower bounds, we will need the following lemma showed for the communication complexity of the input guessing game proved by Nayak \cite{nayak1999optimal}. In the input guessing game there are two players, Alice and Bob. Alice receives an uniformly random input from $[K]$ and the goal is that Bob guesses that input via a two-way communication protocol. 

    \begin{lemma}\label{lem:Nayaksbound}
        Any quantum algorithm that solves the input guessing game for $K\in\mathbb N$ with success probability at least 0.9  must exchange $\Omega(\log K)$ qubits.
    \end{lemma}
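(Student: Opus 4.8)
The plan is to run the standard quantum information-flow argument across the Alice--Bob cut and then invoke Fano's inequality; alternatively one may just cite \cite{nayak1999optimal}, but we sketch it for completeness. Model the protocol as follows: Alice holds a register $A$ containing her input $X$ (a classical register, uniform on $[K]$) together with a private workspace, Bob holds a register $B$, and the parties alternately apply local unitaries and hand a message register back and forth, transmitting $c$ qubits in total. We may assume the protocol is ``clean'' (each sending party applies a unitary on its side and then hands over the message register), since an arbitrary protocol can be purified and padded into this form without increasing $c$. At the end Bob measures $B$ to produce a guess $\tilde X\in[K]$ with $\Pr[\tilde X=X]\ge 0.9$.

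First I would track the quantum mutual information $I(X:B)$ along the execution. Initially $I(X:B)=0$, since $X$ is generated independently of any shared state. Local unitaries on $A$ or on $B$ leave $I(X:B)$ unchanged. When Alice transmits $k$ qubits to Bob, writing $B'=BM$ with $M$ the $k$-qubit message register, subadditivity gives $S(B')-S(B)\le S(M)\le k$ and the Araki--Lieb inequality gives $S(XB)-S(XB')\le S(M)\le k$, so that $I(X:B')\le I(X:B)+2k$. When Bob transmits $k$ qubits to Alice, $B$ loses a subsystem, and monotonicity of mutual information under partial trace (equivalently, strong subadditivity) implies $I(X:B)$ can only decrease. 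Summing over the protocol, at the end $I(X:B)\le 2c$.

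Next, since $\tilde X$ is obtained from $B$ by a measurement, the data-processing inequality gives $I(X:\tilde X)\le I(X:B)\le 2c$. On the other hand, Fano's inequality applied to the classical pair $(X,\tilde X)$ with error probability $p\le 0.1$ yields $H(X\mid\tilde X)\le h_2(p)+p\log(K-1)\le h_2(0.1)+0.1\log K$, where $h_2$ is the binary entropy function, and therefore $I(X:\tilde X)=\log K-H(X\mid\tilde X)\ge 0.9\log K-h_2(0.1)$. Combining the two bounds gives $2c\ge 0.9\log K-h_2(0.1)$, hence $c=\Omega(\log K)$ for all sufficiently large $K$; for $K=O(1)$ the claimed bound $c=\Omega(\log K)$ holds trivially, since a success probability exceeding $1/K$ already forces $c\ge 1$.

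The step requiring the most care is the information-flow claim $I(X:B)\le 2c$: one must treat the two directions of communication asymmetrically (an Alice$\to$Bob transmission of $k$ qubits can raise $I(X:B)$ by up to $2k$, whereas a Bob$\to$Alice transmission can only lower it) and one must justify the reduction to the clean model. Both are standard facts about quantum protocols, so I do not expect a genuine obstacle; the remaining steps are routine applications of data processing and Fano's inequality.
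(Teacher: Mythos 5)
Your proposal is correct. The paper itself gives no proof of this lemma---it is imported directly from Nayak's work---and your argument is precisely the standard one underlying that result: the asymmetric information-flow bound $I(X:B)\le 2c$ (an Alice-to-Bob message of $k$ qubits raises $I(X:B)$ by at most $2k$ via subadditivity and Araki--Lieb, a Bob-to-Alice message cannot raise it by monotonicity under partial trace), followed by data processing and Fano's inequality. The individual steps check out, including the entropy accounting $I(X:B')-I(X:B)=[S(B')-S(B)]+[S(XB)-S(XB')]\le 2k$ and the final combination $2c\ge 0.9\log K-h_2(0.1)$, which in fact already yields $c=\Omega(\log K)$ for every $K\ge 2$ without needing a separate small-$K$ case.
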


\section{Upper bound for learning}
\subsection{Learning the effective support}
In this section, we show that the effective support of an $s$-sparse Hamiltonian can be learned efficiently. The key idea in the proof of this result (\cref{theo:structurelearning} below) is a random procedure that, for every $P\in \mathcal H$ allows us to prepare the \emph{isolated Hamiltonian evolution} $e^{-ith_PP}$ with some non-negligible probability. Once we have access to  $e^{-ith_PP}$, it is easy to detect $P$ if $|h_P|\geq \eps.$ To sketch the procedure to prepare the isolated Hamiltonian evolution, we should introduce some notation. 

\begin{definition}
    Let $Q_1,\dots, Q_r\in \{I,X,Y,Z\}^{\otimes n}$ and let $H$ be an $n$-qubit Hamiltonian. We define $$H_{Q_1}=\frac{H+Q_1HQ_1}{2}\text{ and } H_{Q_1,\dots,Q_i}=\frac{H_{Q_1,\dots,Q_{i-1}}+Q_{i}H_{Q_1,\dots,Q_{i-1}}Q_i}{2}, \text{ for }2\leq i\leq r.$$
\end{definition}

First, note that we can prepare $e^{-itH_{Q_1}}$ by making queries to the time evolution of $H$ with a total evolution time $t$ and without using extra qubits. Indeed, this is true because of \cref{theo:trotterization} and the fact that $e^{-itQ_1HQ_1/2}=Q_1e^{-itH/2}Q_1$. The same is true for  $e^{-itH_{Q_1,\dots,Q_i}}$, so we may assume that we have access to $e^{-itH_{Q_1,\dots,Q_r}}$ without any overhead in the total evolution time nor in the number of extra qubits.

Second, notice that $H_Q=\sum_{P:[P,Q]=0}h_PP$, i.e., the action of $Q$ on $H$ \emph{kills} all the Paulis that do not commute with $P$. Similarly, $H_{Q_1,\dots,Q_r}=\sum_{P:[P,Q_1]=0\land \dots\land [P,Q_r]=0}h_PP$. 

Third, we prove (see Lemma \ref{lem:survival} below) that, given $P\in \mathcal H$, if we choose $r=\lceil\log(s)\rceil+2$ and pick $Q_1,\dots,Q_r$ uniformly at random from $\{I,X,Y,Z\}^{\otimes n}$, then we have that with probability $\Omega(1/s)$ $P$ is the only element of $\mathcal H$ that commutes with all $Q_1,\dots,Q_r.$ Putting these three observations together, it follows that we can prepare $e^{-ith_PP}$ with probability $\Omega(1/s).$

\begin{lemma}\label{lem:survival}
    Let $r=\lceil \log(s)\rceil+2$. Let $Q_1,\dots,Q_{r}$ iid uniformly distributed elements of $\{I,X,Y,Z\}^{\otimes n}$. Let $H$ be an $s$-sparse Hamiltonian and let $P\in \mathcal H$. Then, the probability that $P$ is the only element of $\mathcal H$ that commutes with all $Q_1,\dots,Q_r$ is $\Omega(1/s)$.
\end{lemma}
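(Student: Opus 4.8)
The plan is to reduce this statement to the Valiant–Vazirani-type estimate already available in \cref{cor:ValiantVazirani}, via the bit-string encoding of Pauli commutation relations introduced in the preliminaries. Recall that each $P'\in\{I,X,Y,Z\}^{\otimes n}$ corresponds to a vector $v(P')\in\{0,1\}^{2n}$, and $P'$ commutes with $Q$ iff $[v(P'),v(Q)]=0$, where $[\cdot,\cdot]$ is the symplectic product. A uniformly random $Q\in\{I,X,Y,Z\}^{\otimes n}$ corresponds to a uniformly random $v(Q)\in\{0,1\}^{2n}$, and the symplectic form is a nondegenerate bilinear form, so pairing against a uniform $v(Q)$ behaves exactly like pairing against a uniform element under the ordinary dot product after a linear change of coordinates. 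Concretely, I would fix the given $P\in\mathcal H$ and set $\mathcal X = \{\,v(P')-v(P) : P'\in\mathcal H,\ P'\neq P\,\}\subseteq\{0,1\}^{2n}\setminus\{0^{2n}\}$; this set has size $|\mathcal H|-1\leq s-1$. The event ``$P$ is the unique element of $\mathcal H$ commuting with all of $Q_1,\dots,Q_r$'' is equivalent to: for every $P'\in\mathcal H\setminus\{P\}$ there is some $i$ with $[v(P'),v(Q_i)]\neq 0$.

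The subtlety is that we want $P$ \emph{itself} to commute with all the $Q_i$, which is \emph{not} automatic — so the event is the conjunction of (a) $[v(P),v(Q_i)]=0$ for all $i$, and (b) for each $P'\neq P$ in $\mathcal H$, some $i$ has $[v(P'),v(Q_i)]=1$. I would handle this by conditioning. Event (a) has probability exactly $2^{-r}$, since each $[v(P),v(Q_i)]$ is an independent unbiased coin (as $v(P)\neq 0$ and $v(Q_i)$ is uniform). Conditioned on (a), write $v(Q_i) = w_i + (\text{correction})$ — more cleanly, condition on the affine subspace $\{v : [v(P),v]=0\}$, on which $v(Q_i)$ is still uniform; then $[v(P'),v(Q_i)] = [v(P')-v(P),v(Q_i)] = [x,v(Q_i)]$ for $x = v(P')-v(P)\in\mathcal X$, and restricted to this subspace the linear functional $v\mapsto[x,v]$ is either identically zero (which happens only if $x$ lies in the span of $v(P)$, i.e. $x = v(P)$, impossible since that would force $v(P')=0$) or an unbiased coin. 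So conditioned on (a), the $v(Q_i)$ restricted to the relevant quotient behave like iid uniform elements of a $(2n-1)$-dimensional space, and event (b) is exactly the event ``$\mathcal S=\emptyset$'' in the language of \cref{lem:ValiantVazirani}/\cref{cor:ValiantVazirani} applied to $\mathcal X$ inside that space. Since $|\mathcal X|\leq s-1 < s$ and $r = \lceil\log s\rceil + 2 \geq \lceil\log|\mathcal X|\rceil + 2$, \cref{cor:ValiantVazirani} gives $\Pr[(b)\mid (a)]\geq 1/2$. Multiplying, the probability of the desired event is at least $2^{-r}\cdot\tfrac12 = \Omega(2^{-\log s}) = \Omega(1/s)$.

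The main obstacle — really the only place care is needed — is the bookkeeping around conditioning on (a): one must check that after imposing $[v(P),v(Q_i)]=0$, the conditional law of the relevant symplectic pairings $[x,v(Q_i)]$ for $x\in\mathcal X$ still has the independent-unbiased-coin structure required to invoke \cref{lem:ValiantVazirani}, and in particular that no $x\in\mathcal X$ becomes ``trivialized'' by the conditioning (which is ruled out because $x\ne v(P)$ and $x\ne 0$, and $\{0, v(P)\}$ is precisely the kernel of the restriction map). An alternative route that sidesteps conditioning entirely is to directly mimic the proof of \cref{lem:ValiantVazirani}: let $N$ be the number of $P'\in\mathcal H$ commuting with all $Q_i$, compute $\E[N] = 1 + (|\mathcal H|-1)2^{-r}$ and $\mathrm{Var}[N]$ via pairwise symplectic-independence of distinct nonzero vectors, and use Chebyshev to force $N=1$ with probability $\Omega(1)$ — but then one still pays the separate $2^{-r}$ factor for $P$ surviving, so the conditioning presentation is cleaner. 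Either way the bound is $\Omega(1/s)$.
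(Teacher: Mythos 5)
Your proposal is correct and follows essentially the same route as the paper's proof: the same factorization into the probability $2^{-r}$ that $P$ survives times the conditional probability, bounded below by $1/2$ via \cref{cor:ValiantVazirani}, that every other element of $\mathcal H$ is killed. The only difference is presentational — the paper first conjugates by a Clifford to put $P=Z\otimes I^{\otimes(n-1)}$ so that the conditioning becomes ``one fixed coordinate of each $y^j$ is zero'' and the remaining coordinates are literally iid uniform bits, whereas you carry out the same conditioning in general coordinates via the hyperplane $\{v:[v(P),v]=0\}$ and its kernel $\{0,v(P)\}$; both handle the one genuine subtlety (that no $x\in\mathcal X$ is trivialized by the conditioning) correctly.
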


\begin{proof}
    For the second part, without loss of generality, we can assume $P=Z\otimes I^{\otimes {(n-1)}}$. Indeed, for any other non-identity $P\in \{I,X,Y,Z\}^{\otimes n}$, by Lemma \ref{lem:clifford_orbit}, there is a Clifford operation $C$ such that $CPC^{\dagger}=Z\otimes I^{\otimes (n-1)}$, so instead of working with $H=\sum a_iP_i$, we can assume that we are working with $CHC^{\dagger}=\sum a_i CPC^{\dagger}$. This is does not affect the commutation relations nor the randomness of $Q$, because $Q$ and $P$ commute if and only if $CPC^\dagger$ and $CQC^\dagger$ commute, and $Q$ is uniformly random if and only if $C^\dagger QC$ is uniformly random.
    
    Now, we associate every Pauli string with a $2n$-bit string as in \cref{sec:preliminaries}. We  then write $H$ as 
    \begin{equation*}
        H=\sum_{i\in [s]}h_i\alpha_i X^{a^i_1}Z^{b^i_1}\otimes \dots\otimes X^{a^i_n}Z^{b^i_n},
    \end{equation*}
    for some $x^1=(a^1,b^1),\dots,x^s=(a^s,b^s)\in \{0,1\}^{2n}-\{0^{2n}\}$ and $\alpha_1,\dots,\alpha_s\in\mathbb C.$ We may assume without loss of generality that $P=Z\otimes I^{\otimes n}$ corresponds to $x^1=e_{n+1}.$ Also, we write $Q_l$ as $$\beta_l X^{c^l_1}Z^{d^l_1}\otimes \dots\otimes X^{c^l_n}Z^{d^l_n},$$
    for some $y^1=(c^1,d^1),\dots,y^{r}=(c^{r},d^{r})\in \{0,1\}^{2n}$ and $\beta_1,\dots,\beta_{r}\in\mathbb C.$
    
    With this notation, we can write 
    \begin{align*}
         &\Pr[P\text{ is the only element of }\mathcal H\text{ that commutes with } Q_1,\dots,Q_{r}]\\
         &=\Pr[\{[e_{n+1},y^j]=0\ \forall j\in [r]\}\land \{\forall\  2\leq i\leq s,\ \exists\ j\ [x^i,y^j]=1\}]\\
        &=\Pr[\{y^j_{1}=0\ \forall j\in [r]\}\land \{\forall\  2\leq i\leq s,\ \exists\ j\ [x^i,y^j]=1\}].
    \end{align*}   
    Now, we write
    \begin{align*}
        \Pr[H_{Q_1,\dots,Q_{r}}=h_PP]&=\underbrace{\Pr[y^j_1=0\ \forall j\in [r]]}_{(*)}\cdot \underbrace{\Pr[\{\forall\  2\leq i\leq s,\ \exists\ j\ [x^i, y^j]=1\}|\{y^j_1=0\ \forall j\in [r]\}]}_{(**)}.
    \end{align*}
    To finish, it suffices to show that $(*)=\Omega(1/s)$ and that $(**)=\Omega(1).$ We prove both things separately. 

    First, to show that $(*)=\Omega(1/s)$, note that for a uniformly random $y\in \{0,1\}^{2n}$ satisfies $y_1=0$ with probability $1/2$, so
    \begin{equation*}
        (*)=\frac{1}{2^r}=\frac{1}{4s}.
    \end{equation*}

    Second, to show that $(**)=\Omega(1)$ we notice that 
    
    \begin{align*}
        (**)=\Pr[\forall\  2\leq i\leq s,\ \exists\ j\ \sum_{l\neq n+1}x^i_ly^j_l=1]
    \end{align*}
    where $y^j_l$ are iid uniformly random elements of $\{0,1\}.$    
    Finally, we have that $$(x^i_1,\dots,x^i_{n},x^i_{n+2},\dots,x^i_{2n})\neq 0^{2n-1}$$ because $x^i\neq 0^{2n}$ and $x^i\neq x^1=e_{n+1},$ so we can apply Corollary \ref{cor:ValiantVazirani} to conclude that $(**)\geq 1/2=\Omega(1)$. 
\end{proof}

To control the query overhead required to apply $e^{itH_{Q_1,\dots,Q_r}}$ we will use the following consequence of  \cref{theo:trotterization}. 

\begin{lemma}\label{lem:TrotterizationOfHQ}
    Let $Q_1,\dots,Q_r\in \{I,X,Y,Z\}^{\otimes n}$, let $H$ be an $n$-qubit Hamiltonian with $|h_P|\leq 1$ for every $P\in \{I,X,Y,Z\}^{\otimes n}$. Then, one can implement $e^{-itH_{Q_1,\dots,Q_{r}}}$ up to error $\eps$ in diamond norm with $T=O\left(2^r\sqrt{(\norm{H}_{\opnorm}t)^3/\eps}\right)$ queries to the time evolution operator of $H$ at time $\Omega\left((1/2^r)\cdot \sqrt{\eps/(t\norm{H}_{\opnorm}^3)}\right)$.
\end{lemma}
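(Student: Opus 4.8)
The plan is to unfold the recursive definition of $H_{Q_1,\dots,Q_r}$ into an explicit uniform mixture of $2^r$ unitary conjugates of a rescaled copy of $H$, and then invoke the Trotterization bound \cref{theo:trotterization}, checking that each elementary factor in the resulting Trotter product can be realized by a single query to $e^{-it'H}$ conjugated by fixed Clifford gates.

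\emph{Step 1 (unfolding the definition).} By induction on $r$ I would establish
\[
H_{Q_1,\dots,Q_r}=\frac{1}{2^r}\sum_{b\in\{0,1\}^r}W_b\,H\,W_b^{\dagger},\qquad W_b:=\!\!\prod_{i:\,b_i=1}\!\!Q_i,
\]
with the product taken in, say, decreasing order of indices: the base case $r=1$ is exactly the definition of $H_{Q_1}$, and the inductive step distributes the map $X\mapsto\tfrac12(X+Q_rXQ_r)$ over the sum, using $Q_r^{\dagger}=Q_r$ so that $Q_rW_b$ again has the stated form. Each $W_b$, being a product of Paulis, is a Pauli up to a global phase, hence unitary, and conjugation by a unitary preserves the operator norm; thus $H_{Q_1,\dots,Q_r}$ is a sum of $R:=2^r$ Hamiltonians $H_b:=2^{-r}W_bHW_b^{\dagger}$, each with $\norm{H_b}_{\opnorm}\le 2^{-r}\norm{H}_{\opnorm}$.

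\emph{Step 2 (Trotterize and implement each factor).} Applying \cref{theo:trotterization} to the family $\{H_b\}$ gives $c=\max_b\norm{H_b}_{\opnorm}\le 2^{-r}\norm{H}_{\opnorm}$, hence $Rc\le\norm{H}_{\opnorm}$ and $l=\lceil O(\sqrt{(\norm{H}_{\opnorm}t)^3/\eps})\rceil$, together with a product $V$ of $2l\cdot R=O\!\big(2^r\sqrt{(\norm{H}_{\opnorm}t)^3/\eps}\big)$ elementary factors satisfying $\norm{e^{-itH_{Q_1,\dots,Q_r}}(\cdot)e^{itH_{Q_1,\dots,Q_r}}-V(\cdot)V^{\dagger}}_{\diamond}\le\eps$. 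Each factor has the form $e^{-i(t/2l)H_b}=W_b\,e^{-itH/(2^{r+1}l)}\,W_b^{\dagger}$ (using $H_b=2^{-r}W_bHW_b^{\dagger}$ and $e^{UAU^{\dagger}}=Ue^{A}U^{\dagger}$): a single query to the time-evolution operator of $H$ at time $t':=t/(2^{r+1}l)$, conjugated by the fixed Pauli — hence Clifford — unitary $W_b$, which costs no queries, no evolution time, and no ancilla. Therefore the whole procedure makes $T=2l\cdot 2^r=O\!\big(2^r\sqrt{(\norm{H}_{\opnorm}t)^3/\eps}\big)$ queries, all at the single time $t'$, and substituting the value of $l$ yields $t'=\Omega\!\big(2^{-r}\sqrt{\eps/(t\norm{H}_{\opnorm}^3)}\big)$, as claimed. (As a consistency check, $T\cdot t'=t$, matching the ``no overhead in total evolution time'' remark preceding the lemma; the hypothesis $|h_P|\le 1$ is not actually needed here, since every bound depends only on $\norm{H}_{\opnorm}$.)

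\emph{Expected main obstacle.} There is no deep difficulty: essentially all the content is in Step 1 — recognizing that the recursively defined $H_{Q_1,\dots,Q_r}$ is literally a uniform average of $2^r$ Clifford conjugates of $H$ — after which everything reduces to bookkeeping with the parameters of \cref{theo:trotterization}. The two points needing mild care are (i) that conjugating a Trotter factor by $W_b$ is free, so the $2^r$ distinct summands incur no extra query overhead beyond the factor $R=2^r$ already sitting inside the formula for $l$; and (ii) tracking where the $2^{-r}$ rescaling goes, which is exactly what makes the query count scale as $2^r$ while the per-query time $t'$ carries the compensating $2^{-r}$ factor.
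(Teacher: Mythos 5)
Your proof is correct and follows essentially the same route as the paper's: unfold the recursion into a uniform average of $2^r$ Pauli-conjugates of $H/2^r$ (the paper indexes these by subsets $S\subseteq[r]$ rather than bit strings $b$, but it is the same decomposition), apply \cref{theo:trotterization} with $R=2^r$ and $c=\norm{H}_{\opnorm}/2^r$, and realize each Trotter factor as one query to $e^{-it'H}$ conjugated by a fixed Pauli. The parameter bookkeeping matches the paper's as well.
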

\begin{proof}
    For $S\subseteq [r]$, we order its elements $i_1,\dots,i_w$ in a increasing order, i.e., $i_w>i_{w-1}>\dots>i_1$. We define $H_S=Q_{i_w}\dots Q_{i_1}HQ_{i_1}\dots Q_{i_w}/2^r$. Now, note that 
    \begin{equation*}
        H_{Q_1,\dots,Q_r}=\sum_{S\subseteq [r]} H_S,
    \end{equation*}
    which can be easily verified by induction on $r$. Thus, applying \cref{theo:trotterization} with $R=2^r$ and $c=\norm{H}_{\opnorm}/R,$ ensures that $V=(e^{-tH_\emptyset/2l}\cdots  e^{-tH_{[r]}/2l}e^{-tH_{[r]}/2l}\cdots e^{-tH_\emptyset/2l})^l$, for $l=O\left(\sqrt{(\norm{H}_{\opnorm}t)^3/\eps}\right)$, is $\eps$-close in diamond distance to $e^{-itH_{Q_1,\dots,Q_{r}}}$. Finally, we conclude by noticing that every $e^{itH_S/2l}$ can be implemented with one query to the time evolution operator of $H$ at time $t/2^{r+1}l$ so $V$ can be implemented with $2^rl=2^rO\left(\sqrt{(\norm{H}_{\opnorm}t)^3/\eps}\right)$ queries to the time evolution operator of $H.$
\end{proof}

Now, we are ready to show that the effective support of a sparse Hamiltonian can be learned efficiently. 
\begin{theorem}\label{theo:structurelearning}
    Let $H$ be a $s$-sparse $n$-qubit Hamiltonian and let $\eps>0$. Then, the \cref{alg:theorem34} outputs a set $\mathcal P\subseteq \{I,X,Y,Z\}^{\otimes n}$ such that $\mathcal H_{\eps}\subseteq \mathcal{ P}$ and $|\mathcal P|=O(s\log(s/\delta))$ with $O(s\log(s/\delta))$ experiments and $O(s\log(s/\delta)/\eps)$ total evolution time with probability at least $1-\delta$. 
    
    Furthermore, the \cref{alg:theorem34} uses $n$ extra qubits, it only makes $O(s^2(\norm{H}_{\opnorm}/\eps)^{3/2}\log(s/\delta))$ queries, the time resolution is $\Omega(\eps^{1/2}/(s\norm{H}_{\opnorm}^{3/2}))$, it only uses $O(ns\log(s/\delta))$ classical post-processing time, it is non-adaptive, and is robust against a constant amount of SPAM errors. 
\end{theorem}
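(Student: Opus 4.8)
The plan is to implement the isolation strategy described above directly, so I would first fix the parameters, then prove correctness, and finally read off the claimed complexities. Set $T=\Theta(1/\eps)$ with $\eps T$ equal to a sufficiently large absolute constant, and $N=\Theta(s\log(s/\delta))$. \cref{alg:theorem34} runs $N$ independent rounds; in each round it samples $r=\lceil\log s\rceil+2$ Pauli strings $Q_1,\dots,Q_r$ uniformly at random, samples a time $t$ uniformly from $[0,T]$, invokes \cref{lem:TrotterizationOfHQ} to build a unitary $\widetilde U$ that is $\eps'$-close in diamond norm to $e^{-itH_{Q_1,\dots,Q_r}}$ for a small absolute constant $\eps'$, performs one Pauli sampling of $\widetilde U$ via \cref{fact:bellsamplingU}, and adds the measured Pauli to a set $\mathcal P$ (initially empty).

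\emph{Correctness.} Fix $P\in\mathcal H_\eps$. Using the ``killing'' identity $H_{Q_1,\dots,Q_r}=\sum_{P':\,[P',Q_j]=0\ \forall j}h_{P'}P'$ together with \cref{lem:survival}, with probability $\Omega(1/s)$ over the choice of $Q_1,\dots,Q_r$ we have $H_{Q_1,\dots,Q_r}=h_PP$, in which case $e^{-itH_{Q_1,\dots,Q_r}}=\cos(h_Pt)\,I^{\otimes n}+i\sin(h_Pt)\,P$, so an exact Pauli sampling would output $P$ with probability $\sin^2(h_Pt)$. Since a diamond-norm error of $\eps'$ between channels implies a trace-norm error of at most $\eps'$ between their Choi states, the actual sampling outputs $P$ with probability at least $\sin^2(h_Pt)-\eps'$. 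Averaging over $t\in[0,T]$ and using $|h_P|\ge\eps$,
\[\mathbb E_{t}\big[\sin^2(h_Pt)\big]=\frac12-\frac{\sin(2h_PT)}{4h_PT}\ \ge\ \frac12-\frac1{4h_PT}\ \ge\ \frac12-\frac1{4\eps T}=\Omega(1).\]
Hence, in each round, $P$ is added to $\mathcal P$ with probability $\Omega(1/s)$, so after $N=\Theta(s\log(s/\delta))$ rounds the probability that $P\notin\mathcal P$ is at most $\delta/s$; a union bound over the at most $s$ Paulis in $\mathcal H_\eps$ gives $\mathcal H_\eps\subseteq\mathcal P$ with probability $\ge 1-\delta$. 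False positives do not matter, since $|\mathcal P|$ is at most the total number of Pauli-sampling outcomes, which is $O(N)=O(s\log(s/\delta))$.

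\emph{Complexity.} The number of experiments is $O(N)=O(s\log(s/\delta))$. By \cref{lem:TrotterizationOfHQ}, implementing one $\widetilde U$ uses total evolution time $O(t)=O(1/\eps)$ — the factor $2^r$ and the number of Trotter steps $l=O(\sqrt{(\norm{H}_{\opnorm}t)^3/\eps'})$ cancel against the per-query time $\Theta(t/(2^{r+1}l))$ — so the total evolution time is $O(N/\eps)=O(s\log(s/\delta)/\eps)$. The number of queries per round is $O(2^rl)=O(s(\norm{H}_{\opnorm}/\eps)^{3/2})$, giving $O(s^2(\norm{H}_{\opnorm}/\eps)^{3/2}\log(s/\delta))$ in total, and the smallest query time is $\Omega\big((1/2^r)\sqrt{\eps'/(t\norm{H}_{\opnorm}^3)}\big)=\Omega(\eps^{1/2}/(s\norm{H}_{\opnorm}^{3/2}))$. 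Pauli sampling needs $n$ ancilla qubits for the EPR pairs, whereas Trotterization and conjugation by the $Q_j$ need none; all the randomness is drawn before any measurement, so the algorithm is non-adaptive; and generating the $O(Nr)$ random Pauli strings and maintaining $\mathcal P$ costs $O(ns\log(s/\delta))$ classical time. Finally, a constant amount of SPAM error perturbs each Pauli-sampling distribution by a constant in total variation, which is absorbed into the $\Omega(1)$ slack in the per-round capture probability, yielding robustness.

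\emph{Main obstacle.} No single step is deep once \cref{lem:survival} and \cref{lem:TrotterizationOfHQ} are available; the care is in composing the two additive error terms (the Trotter error $\eps'$ and the SPAM error) against the constant lower bound on $\mathbb E_t[\sin^2(h_Pt)]$, and in the bookkeeping that makes the total evolution time telescope to $O(t)$ regardless of $r$. With those in place, correctness reduces to a Valiant--Vazirani-style union bound, using crucially that the analysis only needs a per-target capture probability and that over-collecting Paulis is harmless.
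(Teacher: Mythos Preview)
Your proposal is correct and follows essentially the same approach as the paper: invoke \cref{lem:survival} for the $\Omega(1/s)$ isolation probability, use Pauli sampling on the (Trotterized) evolution to detect the isolated $P$ with constant probability, and conclude by a union bound over $\mathcal H_\eps$. The only cosmetic difference is that the paper draws $t$ uniformly from $[\pi/4,1/\eps]$ and argues that $|\sin(h_Pt)|^2\ge 1/2$ holds with constant probability, whereas you draw $t\in[0,T]$ and compute the average $\mathbb{E}_t[\sin^2(h_Pt)]=\tfrac12-\tfrac{\sin(2h_PT)}{4h_PT}$ directly; both give the same $\Omega(1)$ bound and the complexity bookkeeping is identical.
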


\begin{algorithm}[H]
\caption{Learning the Pauli support of the Hamiltonian}
\label{alg:theorem34}
\begin{algorithmic}[1]
\Require $s \in \mathbb{N}$, $n \in \mathbb{N}$, $H$ an $s$-sparse $n$-qubit Hamiltonian, 
error parameter $\varepsilon > 0$, failure probability $\delta \in (0,1)$
\Ensure A set $\mathcal{P} \subseteq \{I, X, Y, Z\}^{\otimes n}$ of size $O(s \log(s/\delta))$ such that 
$\mathcal{H}_\varepsilon \subseteq \mathcal{P}$ with probability at least $1-\delta$
\State Initialize $\mathcal{P} \gets \varnothing$.
\For{$i=1$ to $T = \Theta(s \log(s/\delta))$}
  \State Pick $Q_1, \ldots, Q_{\log(s)+2}$ i.i.d.\ elements of $\{I, X, Y, Z\}^{\otimes n}$.
  \State Pick uniformly random $t \in [\pi/4, 1/\varepsilon]$.
  \State Sample $P$ be performing Pauli sampling of $e^{-it H_{Q_1,\ldots,Q_{\log(s)+2}}}$ (see \cref{fact:bellsamplingU}).
  \State $\mathcal{P} \gets P$.
\EndFor \\
\Return $\mathcal{P}$
\end{algorithmic}
\end{algorithm}
\begin{proof}[Proof of \cref{theo:structurelearning}]  
    We show correctness of \cref{alg:theorem34} and perform a complexity analysis.
    
    \textbf{Correctness.} Now, we prove the correctness, i.e., that $\mathcal P$ contains $\mathcal H_{\eps}$ with probability $\geq 1-\delta$. Fix $P\in\mathcal H_{\eps}$. Using union bound over the at most $s$ elements of $\mathcal H_{\eps}$, it suffices to prove that $P\in \mathcal P$ with probability at least $1-\delta/s$. By Lemma \ref{lem:survival}, for every choice of $Q_1,\dots,Q_{\log(s)+2}$ the probability of $H_{[Q_1,\dots,Q_{\log(s)+2}]}= h_PP$ is at least $\Omega(1/s)$. Assume this happens in some iteration. Then,
    \begin{equation*}
        e^{-itH_{Q_1,\dots, Q_{\log(s)+2}}}=\cos(h_pt)I+i\sin(h_Pt) P.
    \end{equation*}
    Hence, the probability of observing $P$,  when measuring the Choi state, is $|\sin(h_Pt)|^2$. As $|h_P|\in [\eps,1]$, we have that with constant probability over the choice of $t\in [\pi/4,1/\eps]$ it is satisfied that $|\pi/2-h_Pt|\leq \pi/4,$ so $|\sin(h_Pt)|^2\geq 1/2$. Thus, in each iteration, there is an $\Omega(1/s)$ probability of sampling $P$. Therefore, the probability that we do not sample $P$ in any of the iterations is

    \begin{equation*}
        \left(1-\Omega\left(\frac{1}{s}\right)\right)^{\Theta(s\log(s/\delta))}\leq O\left(\frac \delta s\right),
    \end{equation*}
    
    \textbf{Complexity analysis.} Finally, we analyze the complexity. Note that the above process just uses $O(s\log(s/\delta))$ experiments, and $O(s\log(s/\delta)\cdot 1/\eps)= O(s\log(s/\delta)/\eps)$ total evolution time. So far, we have assumed perfect access to $e^{-itH_{Q_1.\dots.Q_r}}$ for $t\in [\pi/4,1/\eps]$, which does not affect the number of experiments nor the total evolution time, but it does affect the number of queries and time resolution. For the algorithm to be correct, we just need to ensure that in the case that $H_{Q_1,\dots,Q_r}=h_PP$ and $|\sin(h_Pt)|^2\geq 1/2$, the probability of measuring $P$ is greater than a constant, say $1/4.$ Without error, we know that if all steps succeed, that probability is $|\sin(h_Pt)|^2\geq 1/2$, so it suffices to approximate  $e^{-itH_{Q_1.\dots.Q_r}}$ up to error $1/8$ and we can allow for a SPAM error of $1/8$. The approximation of $e^{-itH_{Q_1.\dots.Q_r}}$ can be done with a query overhead of $O(2^r(\norm{H}_{\opnorm}/\eps)^{3/2})=O(s(\norm{H}_{\opnorm}/\eps)^{3/2})$ queries, thanks to Lemma \ref{lem:TrotterizationOfHQ}. As the time for which we have to query $H_{Q_1,\dots,Q_r}$ is $O(1/\eps)$, by Lemma \ref{lem:TrotterizationOfHQ} we also have that the time resolution $\Omega(\eps^{1/2}/(s\norm{H}_{\opnorm}^{3/2}))$. As we only need one query to $e^{-itH_{Q_1.\dots.Q_r}}$ per experiment, the total number of queries is $ O(s(\norm{H}_{\opnorm}/\eps)^{3/2})O(s\log^3(s/\delta))=O(s^2(\norm{H}_{\opnorm}/\eps)^{3/2}\log(s/\delta))$. Regarding the classical post-processing times, it is $O(ns\log(s/\delta))$. This is true because the only classical post-processing done after each of the $O(s\log(s/\delta))$ experiments is storing the sampled Paulis, which can be represented with $n$ bits. 
\end{proof}

\subsection{Single parameter learning}
In this section, we show how to learn a single Pauli coefficient of a sparse Hamiltonian (see \cref{theo:singleparameterlearning}). Before that, we prove an auxiliary lemma that allows to estimate a single parameter of an isolated Hamiltonian evolution $e^{-ith_PP}$, provided that $h_P$ is small enough. 

\begin{lemma}\label{lem:singleparameterlearningconstrained}
    Let $P\in \{I,X,Y,Z\}^{\otimes n}$, let $\eps,\delta>0$, let $H=h_PP$ for some $|h_P|\leq 10 \eps$. Then, there is an algorithm that with $O((1/\eps)\cdot\log(1/\delta))$ total evolution time and $O(\log(1/\delta))$ experiments outputs $h_P'$ such that $|h_P-h_P'|\leq \eps$ with probability $\geq 1-\delta.$

    The algorithm uses just $n$ extra qubits, it only makes $O(\log(1/\delta))$ queries to the time evolution operator of $H,$ the time resolution is $\Omega(1/\eps)$, it only uses $O(\log(1/\delta))$ classical post-processing time, and is robust against a constant amount of SPAM error.
\end{lemma}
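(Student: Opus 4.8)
The plan is to access the isolated single‑Pauli evolution $e^{-ith_P P} = \cos(h_P t) I + i\sin(h_P t) P$ directly (no isolation technique is needed here since the Hamiltonian is already $H = h_P P$), and to extract $h_P$ from the phase $h_P t$ via Pauli sampling at a well‑chosen time $t$. Since $|h_P| \le 10\eps$, picking $t = \Theta(1/\eps)$ puts the phase $h_P t$ in a bounded window, say $[-10, 10]$, so we are in a regime where $\sin(h_P t)$ is (up to sign) a well‑behaved invertible function of $h_P t$ on the relevant sub‑interval, or at least where a constant number of distinct queried times suffice to pin it down.

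First I would fix a time $t_0 = c/\eps$ for a small constant $c$ chosen so that $|h_P t_0| \le 1$ (using $|h_P|\le 10\eps$, take $c = 1/10$). Then $\Pr[\text{Pauli sampling of } e^{-it_0 h_P P} \text{ outputs } P] = \sin^2(h_P t_0)$. Repeating this experiment $O(\log(1/\delta))$ times and taking the empirical frequency $\hat p$ gives, by a Hoeffding bound (\cref{lem:hoeffding}), an estimate with $|\hat p - \sin^2(h_P t_0)| \le c'$ for any desired constant $c'$ with probability $\ge 1-\delta$. Since on $|x|\le 1$ the map $x \mapsto \sin^2(x)$ has derivative bounded below away from $0$ except near $x=0$, a constant additive error in $\sin^2(h_P t_0)$ translates — after inverting — to an additive error of $O(c')$ in $h_P t_0$, hence $O(c'\eps/c) = O(\eps)$ in $h_P$, provided we also resolve the sign of $h_P$ and the fact that $\sin^2$ is even. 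To resolve the sign (and the degeneracy of $\sin^2$ near $0$), I would run a second batch of experiments at a slightly shifted time, e.g. $t_1 = t_0 + \Theta(1/\eps)$, or instead perform a phase‑estimation–style measurement that is sensitive to $\sin(h_P t)$ rather than $\sin^2(h_P t)$ (for instance by applying a controlled‑$e^{-it_0 h_P P}$ with an ancilla and measuring in the $X$ basis, which yields a bias proportional to $\cos(h_P t_0)$ or $\sin(h_P t_0)$). Combining the estimates of $\sin^2(h_P t_0)$ and of $\sin(h_P t_0)$ (or of two shifted values) uniquely determines $h_P t_0$ in the window $|h_P t_0|\le 1$, hence $h_P$ up to error $\eps$.

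For the complexity bookkeeping: there are $O(\log(1/\delta))$ experiments, each querying $e^{-it h_P P}$ once at a time $t = \Theta(1/\eps)$, for $O((1/\eps)\log(1/\delta))$ total evolution time and $O(\log(1/\delta))$ queries; the time resolution is $\Omega(1/\eps)$ since every queried time is $\Theta(1/\eps)$; the only ancilla qubits are the $n$ EPR‑pair halves used for Pauli sampling (plus possibly one control qubit, absorbed into constants); the classical post‑processing is just averaging $O(\log(1/\delta))$ bits and inverting a fixed smooth function, so $O(\log(1/\delta))$ time; and robustness to a constant amount of SPAM error follows because a constant diamond‑norm perturbation of the prepare/measure channels shifts the observed frequency $\hat p$ by a constant, which we can absorb by choosing the target constant $c'$ small enough and the window constant $c$ accordingly — exactly as in the correctness argument of \cref{theo:structurelearning}.

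The main obstacle I anticipate is handling the non‑injectivity of $\sin^2$ near the origin: when $|h_P|$ is very small, $\sin^2(h_P t_0) \approx (h_P t_0)^2$ is quadratically flat, so a constant additive error in the estimated probability no longer gives a constant additive error in $h_P t_0$ — it gives only a $\sqrt{\cdot}$ bound, which would force $t_0$ to grow. The fix is precisely to use a measurement linear in $\sin(h_P t_0)$ (the controlled‑evolution trick), whose bias has derivative bounded below near $0$, so that a constant additive estimation error yields an $O(1)$ error in $h_P t_0$ uniformly over the whole window $|h_P t_0| \le 1$; this is the step that must be carried out carefully to get the stated $O((1/\eps)\log(1/\delta))$ total evolution time rather than something worse.
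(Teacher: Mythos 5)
Your magnitude-estimation step is essentially the paper's: Pauli-sample $e^{-ith_PP}$ at a single time $t=\Theta(1/\eps)$ chosen so that $|h_Pt|$ lies in a small window, and recover $|h_P|$ from the frequency of outcome $P$. (Your anticipated obstacle about the quadratic flatness of $\sin^2$ near the origin is not actually an obstacle: the empirical frequency $\hat p$ estimates $|u_P|^2=\sin^2(h_Pt)$, and since $|\sqrt{\hat p}-|u_P||\le\sqrt{|\hat p-|u_P|^2|}$, a sufficiently small \emph{constant} additive error in the probability already gives a constant additive error in $|h_Pt|$, hence an $O(\eps)$ error in $|h_P|$, with $O(\log(1/\delta))$ samples and no growth in $t$. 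The paper exploits exactly this, working with $|u_P|'=\sqrt{\hat p}$ and a first-order Taylor expansion $u_P\approx -ith_P$.)

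The genuine gap is in your sign-determination step: neither of your two proposed fixes works within the access model. Running a second batch at a shifted time $t_1$ only yields another estimate of $\sin^2(h_Pt_1)$, which, like $\sin^2(h_Pt_0)$, is an even function of $h_P$ for every choice of $t_1$; so no collection of plain Bell-sampling experiments at different times can distinguish $h_P$ from $-h_P$. Your alternative, a controlled-$e^{-ith_PP}$ interfered with an ancilla, is not available: in this access model one may only insert the black-box evolution $U(t_i)$ itself between $H$-independent unitaries $V_i$, and controlled access to an unknown time evolution cannot be synthesized from uncontrolled access. The paper's resolution stays inside the model: after estimating $|h_P|'$, it runs the \emph{composed} evolution $e^{-it(H+|h_P|'P)}$ (the known rotation $e^{-it|h_P|'P}$ is one of the free unitaries $V_i$, combined with $e^{-itH}$ via Trotterization) and Bell-samples it. The resulting statistics depend on $\sin^2((h_P+|h_P|')t)$, which is no longer even in $h_P$: it is near zero iff $h_P\approx -|h_P|'$, so a threshold test on the second estimate determines the sign. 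You need some such interference with a known reference rotation to break the $h_P\mapsto -h_P$ symmetry; as written, your proof does not establish the sign and hence does not achieve $|h_P-h_P'|\le\eps$.
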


\begin{algorithm}[H]
\caption{Learning a small isolated single Pauli coefficient $h_P$}
\label{alg:singleparam_small}
\begin{algorithmic}[1]
\Require Pauli operator $P \in \{I, X, Y, Z\}^{\otimes n}$, error $\varepsilon > 0$, failure probability $\delta \in (0,1)$, Hamiltonian $H=h_PP$ with $|h_P| \leq 10\varepsilon$
\Ensure Estimate $h_P'$ such that $|h_P - h_P'| \leq \varepsilon$ with probability $\ge 1-\delta$
\State Set $t \gets 1/(1600 \varepsilon C)$
\Comment{First stage: estimate $|h_P|$}
\State Sample $U = e^{-itH}$ a total of $O(\log(1/\delta))$ times.
\State Get empirical estimate $|u_P|'$ of $P$-th Pauli coefficient of $U$.
\State Set $|h_P|' \gets |u_P|'/t$.
\Comment{Second stage: determine sign of $h_P$}
\State Sample $\widetilde{U} = e^{-it(H + |h_P|'P)}$ a total of $O(\log(1/\delta))$ times.
\State Get empirical estimate $|\widetilde{u}_P|'$ of $P$-th Pauli coefficient of $\widetilde{U}$.
\State Set $|\widetilde{h}_P|' \gets |\widetilde{u}_P|'/t$.
\If{$|\widetilde{h}_P|' \geq \varepsilon/2$}
    \State \Return $h_P' \gets |h_P|'$
\Else
    \State \Return $h_P' \gets -|h_P|'$
\EndIf
\end{algorithmic}
\end{algorithm}

\begin{proof}
    We show correctness of \cref{alg:singleparam_small} and perform a complexity analysis.
    
   \textbf{Correctness.} We first claim that, with probability $\geq 1-\delta$, we have that 
   \begin{align}
       ||h_P|'-|h_P||&<\eps/4\label{eq:1learningsingleparameter},\\
       ||\widetilde h_P|'-|h_P+|h_P|'||&< \eps/4\label{eq:2learningsingleparameter}.
   \end{align} 
   
   Assume that claim for the moment. If $|\widetilde h_P|'\geq \eps/2,$ by \cref{eq:2learningsingleparameter}, we then have that $|h_P+|h_P|'|> \eps/4$. Now, by \cref{eq:1learningsingleparameter}, we have that have that $|h_P-|h_P|'|\leq \eps/4$, so $|h_P|'$ ($\eps/4$)-approximates $h_P$, as desired.  On the other hand, if $|\widetilde h_P|'< \eps/2,$ by \cref{eq:2learningsingleparameter} we have that $|h_P+|h_P|'|\leq 3\eps/4,$ so $-|h_P|'$ $(3\eps/4)$-approximates $h_P,$ as desired. 

   Now, we prove \cref{eq:1learningsingleparameter,eq:2learningsingleparameter}. By Taylor expansion, \cref{eq:Taylor}, we have that 
   \begin{equation*}
       |u_P-(-ith_P)|\leq Ct^2h_P^2,
   \end{equation*}
   where $C\geq 1$ is the constant in \cref{eq:Taylor}.
   Now, as $t=1/800\eps C$ and $|h_P|\leq 10\eps$, we have that 
   \begin{equation}\label{eq:3learningsingleparameter}
       |u_P-(-ith_P)|\leq \frac{t|h_P|}{80}.
   \end{equation}
   On the other hand, by the Hoeffding bound, \cref{lem:hoeffding}, we have that with probability $\geq 1-\delta/2$,
   \begin{equation}\label{eq:4learningsingleparameter}
       ||u_P|'-|u_P||< \frac{1}{6400C}.
   \end{equation}
   Putting \cref{eq:3learningsingleparameter,eq:4learningsingleparameter} together, and recalling that $|h_P|\leq 10\eps$ and $t=1/800C\eps$, we have that 
   \begin{equation*}
   \left||h_P|-\underbrace{\frac{|u_P|'}{t}}_{\coloneqq|h_P|'}\right|< \frac{|h_P|}{80}+ \frac{1}{6400Ct}\leq \frac{\eps}{8}+\frac{\eps}{8}=\frac{\eps}{4},   
   \end{equation*}
   which proves \cref{eq:1learningsingleparameter}. \cref{eq:2learningsingleparameter} is proved in the same way. 
   
   \textbf{Complexity analysis.} The number of experiments and the number of queries is $O(\log(1/\delta))$. Each query is done at time $t=\Theta(1/\eps),$ so the total evolution time is $O((1/\eps)\cdot\log(1/\delta))$ and the time resolution is $t=\Omega(1/\eps).$ The classical post processing time is $O(\log(1/\delta))$, as one just has to count the number of appearances of $P$ in $O(\log(1/\delta))$ repetitions of Pauli sampling. The number of extra qubits is $n$, which are those required to perform Pauli sampling. Note that, by \cref{eq:4learningsingleparameter}, we only need a $(1/6400C)$-error approximation of $|u_P|$, so we can allow for SPAM error of $(1/12800C)$ and perform Pauli sampling more times (but still constant) to obtain the approximation of \cref{eq:4learningsingleparameter}.
\end{proof}

Now, we drop the condition of $h_P$ being small enough via a binary search argument, which results in just a logarithmic overhead.

\begin{lemma}\label{lem:singleparameterlearningunconstrained}
    Let $P\in \{I,X,Y,Z\}^{\otimes n}$, let $\eps,\delta>0$, let $H=h_PP$ for some $|h_P|\leq 1$. Then, there is an algorithm that with $O((1/\eps)\cdot\log(\log(1/\eps)/\delta))$ total evolution time and $O(\log(\log(1/\eps)/\delta)\log(1/\eps))$ experiments outputs $h_P'$ such that $|h_P-h_P'|\leq \eps$ with probability $\geq 1-\delta.$

    The algorithm uses just $n$ extra qubits, it only makes $O((1/\eps)\cdot\log(\log(1/\eps)/\delta)\log(1/\eps))$ queries to the time evolution operator of $H,$ the time resolution is $\Omega(1)$, and it only uses $O(\log(\log(1/\eps)/\delta)\log(1/\eps))$ classical post-processing time, and is robust against a constant amount of SPAM error.
\end{lemma}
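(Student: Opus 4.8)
The plan is to reduce Lemma~\ref{lem:singleparameterlearningunconstrained} to Lemma~\ref{lem:singleparameterlearningconstrained} by a doubling/binary-search argument that first locates the rough magnitude of $|h_P|$ and then invokes the constrained estimator at the right scale. First I would run a coarse estimation phase: for a geometrically decreasing sequence of test errors $\eps_0 = 1/2 \geq \eps_1 \geq \dots \geq \eps_L$ with $\eps_j = 2^{-(j+1)}$ and $L = \lceil \log(1/\eps) \rceil$, I would run a slightly modified version of the first stage of \cref{alg:singleparam_small} (estimate $|u_P|$ at time $t_j = \Theta(1/\eps_j)$ via Pauli sampling) to produce a test that, with high probability, certifies whether $|h_P| \leq 10\eps_j$ or $|h_P| \geq \eps_j$ (these overlap, so at least one answer is always correct). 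Walking $j$ upward from $0$, I stop at the first index $j^\star$ where the test reports $|h_P| \leq 10\eps_{j^\star}$, or I reach $j^\star = L$. Since $|h_P| \leq 1 = 2\eps_0$ we have $|h_P| \leq 10\eps_0$ trivially, so actually the search should instead walk $j$ downward (increasing precision) and stop at the \emph{last} $j$ for which the hypothesis $|h_P| \leq 10\eps_j$ still passes; at that point either $\eps_j \leq \eps$ and we are already at target precision, or $|h_P| \geq \eps_{j+1} \geq \eps$, in which case $|h_P| \in [\eps_{j+1}, 10\eps_j]$ pins down the scale up to a constant factor.

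Second, having identified a scale $\eps^\star$ with $\eps^\star = \Theta(\max(|h_P|,\eps))$ and $\eps^\star \geq \eps$, I would invoke \cref{lem:singleparameterlearningconstrained} with error parameter $\eps$ (not $\eps^\star$) and failure probability $\delta/2$: the hypothesis $|h_P| \leq 10\eps^\star$ of that lemma is not quite in the form needed, so I would restate \cref{lem:singleparameterlearningconstrained} with the mild generalization that it works whenever $|h_P| \leq 10\eps'$ for the scale parameter $\eps'$ used to set $t \gets 1/(1600 \eps' C)$, while still outputting an $\eps$-accurate estimate provided $\eps \leq \eps'$ — inspecting its proof, the only place $|h_P|$ enters is through the bound $Ct^2 h_P^2 \leq t|h_P|/80$, which holds as long as $t = \Theta(1/\eps')$ and $|h_P| = O(\eps')$, and the final accuracy $\eps/4$ comes from $1/(6400 C t) \leq \eps/8$, which needs $t = \Omega(1/\eps)$, i.e. $\eps' = O(\eps)$. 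So I would actually take $\eps^\star = \Theta(\eps)$ by choosing the test thresholds so that the stopping scale is comparable to $\eps$ itself whenever $|h_P| \lesssim \eps$, and otherwise run the constrained estimator with $\eps' = \eps$ after the coarse phase has reduced $|h_P|$ to the regime $|h_P| \leq 10\eps$ — but wait, we cannot shrink $|h_P|$; rather the coarse phase merely \emph{certifies} $|h_P| \leq 10\eps$ in the easy case, and in the hard case ($|h_P| > 10\eps$) it certifies $|h_P| \in [\eps_{j+1}, 10\eps_{j+1}]$ for the relevant $j$, and we finish with the generalized constrained lemma at scale $\eps' = \eps_{j+1} = \Theta(|h_P|)$, yielding an $\eps_{j+1}$-accurate estimate; since $\eps_{j+1} \geq \eps$ this is not good enough, so one final call to the generalized constrained lemma with $\eps' = \eps_{j+1}$ but target $\eps$ (legitimate since then $\eps \leq \eps' $ and $|h_P| \leq 10 \eps'$) gives the $\eps$-accurate answer. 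Concretely: one coarse search, then one fine call.

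Third, I would assemble the complexity bounds. The coarse search has $L = O(\log(1/\eps))$ stages, each using $O(\log(L/\delta))$ experiments (by a union bound, each stage must succeed with probability $1 - \delta/(2L)$) with queries at time $t_j = \Theta(1/\eps_j) = \Theta(2^j)$; the total evolution time across all stages is $\sum_{j=0}^{L} O(2^j \log(L/\delta)) = O((1/\eps)\log(L/\delta))$ since the geometric sum is dominated by its last term $2^L = \Theta(1/\eps)$. This matches the claimed $O((1/\eps)\log(\log(1/\eps)/\delta))$ total evolution time. The number of experiments is $O(L \log(L/\delta)) = O(\log(1/\eps)\log(\log(1/\eps)/\delta))$, and the number of queries is correspondingly $O((1/\eps)\log(1/\eps)\log(\log(1/\eps)/\delta))$, because in each of the $O(L\log(L/\delta))$ experiments we query at time $O(1/\eps)$ and need $O(1)$ queries per experiment (no Trotterization overhead here since $H = h_P P$ is a single Pauli, whose evolution is native). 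The time resolution is $\Omega(1)$ because the \emph{smallest} time we ever query is at the coarsest stage $t_0 = \Theta(1)$. The final fine call from the generalized \cref{lem:singleparameterlearningconstrained} adds only $O((1/\eps)\log(1/\delta))$ total evolution time and $O(\log(1/\delta))$ experiments, absorbed into the bounds. SPAM robustness is inherited stagewise from the constant SPAM tolerance built into the Pauli-sampling estimates, exactly as in \cref{lem:singleparameterlearningconstrained}.

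The main obstacle I anticipate is making the stagewise certification test rigorous: I need a single-sided test that, given samples of $|u_P|$ from $e^{-it_j h_P P}$ at $t_j = \Theta(1/\eps_j)$, reliably distinguishes $|h_P| \leq \eps_j$ from $|h_P| \geq 10\eps_j$ using only $O(\log(L/\delta))$ samples, and I must verify the Taylor bound $|u_P - (-it_j h_P)| \leq C t_j^2 h_P^2$ stays controlled across \emph{all} scales simultaneously (it does, since $t_j |h_P| \leq O(1)$ in the regime the test is invoked, by the downward-walk stopping rule), so that the empirical estimate $|u_P|'/t_j$ is a faithful $\Theta(\eps_j)$-accurate proxy for $|h_P|$ exactly in the window where the decision is being made. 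Getting the constants to line up so the overlapping intervals $[\,\eps_j, 10\eps_j\,]$ genuinely cover every possible value of $|h_P| \in [0,1]$ and the walk terminates at a scale $\Theta(\max(|h_P|, \eps))$ is routine but fiddly; everything else is bookkeeping on geometric sums.
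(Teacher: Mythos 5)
Your coarse ``locate the scale of $|h_P|$'' phase is fine as far as it goes, but the endgame is broken, and the gap is exactly where the paper's key idea lives. In the hard case $|h_P|\gg\eps$ you end up knowing $|h_P|\in[\eps_{j+1},10\eps_{j+1}]$ and then claim that ``one final call to the generalized constrained lemma with $\eps'=\eps_{j+1}$ but target $\eps$'' gives an $\eps$-accurate answer. It cannot. At time scale $t=\Theta(1/\eps')$ the estimator $|u_P|'/t$ carries a linearization bias of order $Ct^2h_P^2/t=\Theta(h_P^2/\eps')=\Theta(\eps')$ when $|h_P|=\Theta(\eps')$, and even ignoring the bias, $O(\log(1/\delta))$ Pauli samples only pin down $|u_P|$ to constant accuracy, giving statistical error $\Theta(1/t)=\Theta(\eps')$ in $h_P$. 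Both error sources are $\Theta(\eps')$, not $\Theta(\eps)$ --- you yourself observe that $\eps$-accuracy forces $t=\Omega(1/\eps)$, which contradicts $t=\Theta(1/\eps')$ with $\eps'\gg\eps$. Pushing $t$ up to $\Theta(1/\eps)$ instead violates the precondition $|h_P|\le 10\eps$ (the phase $h_Pt$ wraps around and the Taylor bound is useless). So no single invocation of \cref{lem:singleparameterlearningconstrained}, however the parameters are tuned, closes the gap between the located scale and the target precision; knowing the scale of $|h_P|$ does not by itself buy Heisenberg scaling.

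The missing ingredient is \emph{residual subtraction}, which is what the paper's \cref{alg:singleparamunconstrained} does: after obtaining an estimate $h_l'$ accurate to $\eps_l=10^{-l}$, replace the Hamiltonian by $H_{l+1}=H_l-h_l'P$ (implementable exactly, since $e^{-it(H_l-h_l'P)}=e^{-itH_l}e^{ith_l'P}$ as the two terms commute and $h_l'P$ is known). The residual coefficient now satisfies $|h_{l+1}|\le\eps_l=10\eps_{l+1}$, so the \emph{unmodified} constrained lemma applies again at the finer scale with a tenfold longer evolution time and still only $O(\log(L/\delta))$ samples. Iterating $L=O(\log(1/\eps))$ times accumulates the corrections to an $\eps$-accurate estimate, and the geometric sum of evolution times is dominated by the last stage, giving $O((1/\eps)\log(\log(1/\eps)/\delta))$. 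Your coarse search is essentially a weaker, one-sided version of this loop (it tests scales without correcting the Hamiltonian), which is why it stalls at precision $\Theta(|h_P|)$; replacing it by the subtract-and-recurse loop both fixes the correctness and makes the scale search unnecessary.
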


\begin{algorithm}[H]
\caption{Learning an isolated single Pauli coefficient $h_P$}
\label{alg:singleparamunconstrained}
\begin{algorithmic}[1]
\Require Pauli operator $P \in \{I, X, Y, Z\}^{\otimes n}$, error $\varepsilon > 0$, failure probability $\delta \in (0,1)$, Hamiltonian $H=h_PP$ where $|h_P| \leq 1$
\Ensure Estimate $h_P'$ such that $|h_P - h_P'| \leq \varepsilon$ with probability $\ge 1-\delta$
\State Set $L \gets \lceil\log_{10}(1/\varepsilon)\rceil$.
\State Initialize $H_1 \gets H$ and $h_P' \gets 0$.
\For{$l=1$ to $L$}
    \State Use algorithm \cref{alg:singleparam_small} with Hamiltonian $H_l$.
    \State Use parameters $\delta_l \gets \delta/L$ and $\varepsilon_l \gets 1/10^l$.
    \State Get estimate $h_l'$ of $h_l$ using the algorithm.
    \State Set $h_P' \gets h_P' + h_l'$.
    \State Set $H_{l+1} \gets H_l - h_l'P$.
\EndFor
\State \Return $h_P'$
\end{algorithmic}
\end{algorithm}

\begin{proof}
   We show correctness of \cref{alg:singleparamunconstrained} and perform a complexity analysis.
    
   \textbf{Correctness.} Note that the first iteration succeeds with probability $\geq 1-\delta/L$, because $|h_P|\leq 1=10\cdot (1/10^1).$ If the $l$-th iteration succeeds, then  $|h_l-h_l'|=|h_{l+1}|\leq 1/10^l=10\cdot(1/10^{l+1}),$ so one can perform the $(l+1)$-th iteration and and succeed with probability $\geq 1-\delta/L.$  Thus, with probability $\geq 1-\delta$ all iterations succeed and $|h_L-h_L'|\leq 1/10^L\leq\eps$. As $h_L=h_P-\sum_{i\in[L-1]}h_i'$, we have that $\sum_{i\in [L]}h_i'$ is an $\eps$-approximation of $h_P$ with probability $\geq 1-\delta.$ 

    \textbf{Complexity analysis.} The complexity analysis follows from applying the complexity analysis of Lemma \ref{lem:singleparameterlearningconstrained} to each of the $L$ iterations. The number of experiments is clearly $L\cdot O(\log (L/\delta))= O(\log(\log(1/\eps)/\delta)\cdot \log (1/\eps)) $. The total time evolution is $\sum _{l=1}^{L}O(1/\eps_l\cdot \log({L/\delta}))=O(\log (L/\delta)\cdot\sum_{l=1}^{L} 10^l)=O(\log (L/\delta)10^L)=O( \frac 1 \eps\cdot \log(\log(1/\eps)/\delta))$. The robustness against SPAM errors is inherited from Lemma~\ref{lem:singleparameterlearningconstrained}.
   \textbf{}
\end{proof}

Now, we are ready to prove the main result of this section.

\begin{theorem}\label{theo:singleparameterlearning}
        Let $H$ be an $n$-qubit $s$-sparse Hamiltonian. Let $P\in\{I,X,Y,Z\}^{\otimes n}$ with $|h_P|\leq 1$. Then, there is an algorithm that with $O( (1/\eps)\cdot\log(\log(1/\eps)/\delta))$ total evolution time and $O(\log(\log(1/\eps)/\delta)\log(1/\eps))$ experiments outputs $h_P'$ such that $|h_P-h_P'|\leq \eps$ with probability $\geq 1-\delta.$

        The algorithm uses just $n$ extra qubits, it only makes $\widetilde O(s/(\delta \eps^{5/2}\norm{H}_{\opnorm}^{3/2}))$ queries to the time evolution operator of $H,$ the time resolution is $\Omega(\delta\eps^{1/2}/(s\norm{H}_{\opnorm}^{3/2}))$, and it only uses $O(\log( \log(1/\eps)/\delta)\log(1/\eps))$ classical post-processing time.
\end{theorem}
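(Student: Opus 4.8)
The plan is to reduce the problem to the isolated single-parameter learner of \cref{lem:singleparameterlearningunconstrained} by means of the isolation technique, applied now with the auxiliary Paulis constrained to the centralizer of $P$. Concretely, the algorithm proceeds as follows. First, using \cref{lem:clifford_orbit}, conjugate $H$ (and $P$) by a Clifford $C$ so that we may assume $P=Z\otimes I^{\otimes(n-1)}$; as in the proof of \cref{lem:survival} this preserves $s$-sparsity, commutation relations, and the uniform distribution over Paulis, and it costs nothing operationally since $C$ is independent of $H$. (If $P\notin\mathcal H$ then $h_P=0$, and the analysis below still applies.) Second, sample $Q_1,\dots,Q_r$ i.i.d.\ uniformly from $\{Q\in\{I,X,Y,Z\}^{\otimes n}:[Q,P]=0\}$ with $r=\lceil\log(s)+\log(2/\delta)+2\rceil$. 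Third, argue that with probability $\ge 1-\delta/2$ the killing procedure isolates $P$, i.e.\ $H_{Q_1,\dots,Q_r}=h_PP$. Fourth, note that for any \emph{known} real $a$ we can implement $e^{-it(h_P+a)P}=e^{-itaP}e^{-itH_{Q_1,\dots,Q_r}}$, where $e^{-itaP}$ is a fixed Hamiltonian-independent single-Pauli rotation applied for free, and $e^{-itH_{Q_1,\dots,Q_r}}$ is implemented up to a small constant diamond-norm error via Trotterization (\cref{lem:TrotterizationOfHQ}); since $H_{Q_1,\dots,Q_r}=h_PP$, this is an approximate implementation of the time-evolution operator of the single-Pauli Hamiltonian $(h_P+a)P$. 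Finally, run \cref{alg:singleparamunconstrained} of \cref{lem:singleparameterlearningunconstrained} with error $\eps$ and failure probability $\delta/2$ — inspecting that algorithm, every time-evolution it queries is of the form $e^{-it(h_P-c)P}$ for a known $c$, hence implementable by the previous step — and return its output $h_P'$.

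The heart of the argument is the isolation step, which is a Valiant--Vazirani-type counting conditioned on the centralizer of $P$. Identifying Paulis with $2n$-bit strings as in \cref{sec:preliminaries}, $P=Z\otimes I^{\otimes(n-1)}$ corresponds to $x^1=e_{n+1}$, so a Pauli with string $y=(c,d)$ commutes with $P$ iff the $(n+1)$-th coordinate of $y$ is $0$; thus $Q_1,\dots,Q_r$ have their $(n+1)$-th coordinate fixed to $0$ and all remaining $2n-1$ coordinates uniform and independent. For $P'\in\mathcal H\setminus\{P\}$ with string $x^i$ we have $[x^i,y^\ell]=\sum_{j\ne n+1}x^i_j y^\ell_j\ (\mathrm{mod}\ 2)$, and the truncation $(x^i_j)_{j\ne n+1}$ is a nonzero element of $\{0,1\}^{2n-1}$ precisely because $x^i\ne 0^{2n}$ and $x^i\ne e_{n+1}$. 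Applying \cref{cor:ValiantVaziranidelta} to $\mathcal X=\{(x^i_j)_{j\ne n+1}:P'\in\mathcal H\setminus\{P\}\}\subseteq\{0,1\}^{2n-1}\setminus\{0^{2n-1}\}$ with failure parameter $\delta/2$ (using $|\mathcal X|\le s$, so that our choice of $r$ is large enough), with probability $\ge 1-\delta/2$ every $P'\in\mathcal H\setminus\{P\}$ anticommutes with some $Q_\ell$. By the killing identity $H_{Q_1,\dots,Q_r}=\sum_{P:\,[P,Q_1]=\dots=[P,Q_r]=0}h_PP$ recalled before \cref{lem:survival}, this forces $H_{Q_1,\dots,Q_r}=h_PP$ (and $=0=h_PP$ in the case $P\notin\mathcal H$).

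Correctness then follows from a union bound: on the intersection of the isolation event and the success event of \cref{lem:singleparameterlearningunconstrained} — total probability $\ge 1-\delta$ — the returned $h_P'$ satisfies $|h_P-h_P'|\le\eps$. For the resources, since Trotterization preserves total evolution time and number of experiments, both are inherited verbatim from \cref{lem:singleparameterlearningunconstrained}, namely $O((1/\eps)\log(\log(1/\eps)/\delta))$ total evolution time and $O(\log(\log(1/\eps)/\delta)\log(1/\eps))$ experiments; the $n$ ancillas are those used for Pauli sampling and Trotterization uses none. The number of queries and the time resolution, however, incur the Trotter overhead of \cref{lem:TrotterizationOfHQ} with $2^r=O(s/\delta)$ and with evolution times up to $\Theta(1/\eps)$ (the largest time used in \cref{alg:singleparamunconstrained}): each simulated query to $e^{-itH_{Q_1,\dots,Q_r}}$ costs $O((s/\delta)(\norm{H}_{\opnorm}/\eps)^{3/2})$ queries to $U(\cdot)$ at time resolution $\Omega(\delta\eps^{1/2}/(s\norm{H}_{\opnorm}^{3/2}))$; multiplying by the number of experiments and collecting logarithmic factors yields the claimed $\widetilde O(s/(\delta\eps^{5/2}\norm{H}_{\opnorm}^{3/2}))$ query bound, exactly as in the complexity analysis of \cref{theo:structurelearning}. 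Finally, the constant Trotter error (and a constant SPAM error) is absorbed by the SPAM-robustness guarantee of \cref{lem:singleparameterlearningunconstrained}, and the classical post-processing is dominated by that of \cref{alg:singleparamunconstrained}, giving $O(\log(\log(1/\eps)/\delta)\log(1/\eps))$.

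The step I expect to require the most care is the isolation analysis under the centralizer constraint: one must check that restricting the random $Q_i$ to commute with $P$ — which removes one degree of freedom — still leaves enough randomness to separate every other support element from $P$, and the key point making \cref{cor:ValiantVaziranidelta} applicable is precisely that the truncation of $x^i$ to the $2n-1$ coordinates other than $n+1$ is nonzero for every $P'\in\mathcal H\setminus\{P\}$. A secondary technical point is routing the known shifts $aP$ (and the $cP$ subtractions internal to \cref{alg:singleparamunconstrained}) through the Trotterized access without spending extra evolution time, together with verifying that the diamond-norm Trotter error may be taken to be a small constant — so that it is covered by SPAM-robustness — rather than needing to scale with $\eps$.
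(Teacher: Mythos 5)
Your proposal is correct and follows essentially the same route as the paper: conjugate so that $P=Z\otimes I^{\otimes(n-1)}$, sample $Q_1,\dots,Q_r$ from the centralizer of $P$ with $r=\lceil\log(2s/\delta)+2\rceil$, establish isolation $H_{Q_1,\dots,Q_r}=h_PP$ with probability $\ge 1-\delta/2$ via \cref{cor:ValiantVaziranidelta} applied to the truncated $(2n-1)$-bit strings, and then run \cref{alg:singleparamunconstrained} on the Trotterized evolution, with the identical query/time-resolution overhead from \cref{lem:TrotterizationOfHQ}. The only difference is that you spell out more carefully than the paper how the known shifts $aP$ inside \cref{alg:singleparamunconstrained} are routed through the simulated access, which is a welcome clarification rather than a deviation.
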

\begin{proof}
       We first present the algorithm, then show correctness, and finally perform a complexity analysis.

       \textbf{The algorithm.} The algorithm consists of two steps: 
        \begin{enumerate}
            \item Let $r=\lceil \log(2s/\delta)+2\rceil$. Sample $Q_1,\dots,Q_r$ iid uniform elements of $\{Q\in\{I,X,Y,Z\}^{\otimes n}:\ [Q,P]=0\}$.
            \item Obtain an estimate $h_P'$ of $h_P$ using the \cref{alg:singleparamunconstrained} with $H_{Q_1,\dots,Q_r}$ (treating $H_{Q_1,\dots,Q_r}$ as it was $h_PP$). 
        \end{enumerate}

        \textbf{Correctness.} We claim that the probability of $H_{Q_1,\dots,Q_r}=h_PP$ is greater or equal than $\geq 1-\delta/2$. In that case, by \cref{lem:singleparameterlearningunconstrained} we have that $|h_P'-h_P|\leq \eps$ with probability $\geq 1-\delta/2$, as desired. Thus, it only remains to prove the claim. We write 
        \begin{equation*}   
            \Pr[H_{Q_1,\dots,Q_r}=h_PP]=\Pr[P\text{ is the only element of }\mathcal H\text{ that commutes with all }Q_1,\dots,Q_r].        \end{equation*}
        By the same argument that we used in the proof of Lemma~\ref{lem:survival}, we may assume that $P=Z\otimes I^{\otimes (n-1)}$. Now, following the identification of the Pauli strings with $(2n)$-bit strings as done in the proof of Lemma \ref{lem:survival}, we can associate $(2n)$-bit strings $x^1,\dots,x^{s'}$ to the Paulis in  $\mathcal H-\{P\}$ ($s'=s$ if $P\notin \mathcal H$ and $s'=s-1$ otherwise), $x^0=e_{n+1}$ to $P=Z\otimes I^{\otimes (n-1)}$, and $(y^1,\dots,y^r)$ to $Q_1,\dots,Q_r$. Thus, we can write
        \begin{equation*}
            \Pr[H_{Q_1,\dots,Q_r}=h_PP]=\Pr[ \{\forall\  1\leq i\leq s',\ \exists\ j\ [x^i,y^j]=1\}|\{[x^0,y^j]=0\ \forall j\in [r]\}]. 
        \end{equation*}
        where $y^j$ are iid uniform elements of $\{0,1\}^{2n}$. Now, we use that $x^0=e_{n+1}$, so
        \begin{align*}
             \Pr[H_{Q_1,\dots,Q_r}=h_PP]&=\Pr[ \{\forall\  1\leq i\leq s',\ \exists\ j\ \sum_{l\neq  [n+1]}x^i_{l}y^j_l=1\}].
        \end{align*}
        Finally, note that $(x^i_1,\dots,x^i_{n},x^{i}_{n+2},\dots,x^i_{2n})\neq 0^{2n-1}$ for $i\in [s']$, because $x^i\neq x^0=e^{n+1}$ and $x^i\neq 0^{2n}.$ Thus, we an apply Corollary \ref{cor:ValiantVaziranidelta} to ensure that $\Pr[H_{Q_1,\dots,Q_r}=h_PP]\geq 1-\delta/2.$

        \textbf{Complexity analysis.} All the complexities, but the number of queries and the time resolution, are the same as the ones we would get in Lemma \ref{lem:singleparameterlearningunconstrained} in the case that we used Lemma \ref{lem:singleparameterlearningunconstrained} with $H_{Q_1,\dots,Q_r}=h_PP$. The number of queries and time resolution have an overhead due to the implementation of $e^{-itH_{Q_1,\dots,Q_r}}$ with queries to $e^{-itH}$. Tracing back to \cref{eq:4learningsingleparameter}, we notice that it suffices to approximate $e^{-itH_{Q_1,\dots,Q_r}}$ up to constant error and time $t= O(1/\eps)$. By Lemma \ref{lem:TrotterizationOfHQ}, this can be done with $O(2^r(\norm{H}_{\opnorm}/\eps)^{3/2})=O((s/\delta)\cdot(\norm{H}_{\opnorm}/\eps)^{3/2})$ queries to the time evolution operator of $H$ at time $\Omega(\delta\eps^{1/2}/(s\norm{H}_{\opnorm}^{3/2})).$ Thus the total number of queries is $O((s/\delta)\cdot(\norm{H}_{\opnorm}/\eps)^{3/2}(1/\eps)\cdot\log(\log(1/\eps)/\delta)\log(1/\eps))=\widetilde O(s/(\delta \eps^{5/2}\norm{H}_{\opnorm}^{3/2}))$ and the time resolution is $\Omega(\delta\eps^{1/2}/(s\norm{H}_{\opnorm}^{3/2})).$ The robustness against SPAM errors is inherited from Lemma~\ref{lem:singleparameterlearningconstrained}.
\end{proof}
\subsection{Putting everything together}
We are now ready to prove the upper bound for learning sparse Hamiltonians. 
\begin{theorem}\label{theo:learninginellinftynorm}
        Let $H$ be an $n$-qubit $s$-sparse Hamiltonian with $|h_P|\leq 1$ for every $P\in\{I,X,Y,Z\}^{\otimes n}$. Then, there is an algorithm that with $\widetilde O( (s/\eps)\cdot\log(1/\delta))$ total evolution time and $\widetilde O(s\log(1/\delta)\log(1/\eps))$ experiments outputs $H'$ such that $\mathcal H'\subseteq \mathcal H$ and $|h_P-h_P'|\leq \eps$ for every $P\in\{I,X,Y,Z\}^{\otimes n}$ with probability $\geq 1-\delta.$

        The algorithm uses just $n$ extra qubits, it only makes $\widetilde O(s^2/(\delta \eps^{5/2}\norm{H}_{\opnorm}^{3/2}))$ queries to the time evolution operator of $H,$ the time resolution is $\Omega(\delta\eps^{1/2}/(s\norm{H}_{\opnorm}^{3/2}))$, and it only uses $O(ns\log(s/\delta) \cdot \log(1/\eps))$ classical post-processing time, and it is robust against a constant amount of SPAM errors.
\end{theorem}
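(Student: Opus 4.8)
The plan is to compose the two subroutines of this section. First I would run the structure-learning algorithm (\cref{alg:theorem34}, with guarantee \cref{theo:structurelearning}) with error parameter $\eps$ and failure probability $\delta/2$; with probability $\geq 1-\delta/2$ it returns a set $\mathcal P\subseteq\{I,X,Y,Z\}^{\otimes n}$ of size $|\mathcal P|=O(s\log(s/\delta))$ with $\mathcal H_\eps\subseteq\mathcal P$. Then, for each $P\in\mathcal P$, I would run the single-Pauli-coefficient learner of \cref{theo:singleparameterlearning} with error $\eps/4$ and failure probability $\delta/(2|\mathcal P|)$ to obtain an estimate $\widetilde h_P$, and for $P\notin\mathcal P$ I set $\widetilde h_P:=0$. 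The output is $H':=\sum_P h'_P P$, where $h'_P:=\widetilde h_P$ if $|\widetilde h_P|>\eps/2$ and $h'_P:=0$ otherwise; this final thresholding is what guarantees $\mathcal H'\subseteq\mathcal H$.

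For correctness, I would condition on the intersection of the two good events, which by a union bound holds with probability $\geq 1-\delta$: the structure learner succeeded, and every one of the $|\mathcal P|$ coefficient estimates satisfies $|\widetilde h_P-h_P|\leq\eps/4$. A short case analysis then finishes the proof. If $P\notin\mathcal P$ then $|h_P|<\eps$ (since $\mathcal H_\eps\subseteq\mathcal P$) and $h'_P=0$, so the error is $<\eps$. If $P\in\mathcal P$ with $|h_P|\geq\eps$, then $|\widetilde h_P|\geq 3\eps/4>\eps/2$, so $h'_P=\widetilde h_P$ and the error is $\leq\eps/4$. If $P\in\mathcal P$ with $|h_P|<\eps$, then either $|\widetilde h_P|>\eps/2$, which forces $|h_P|>\eps/4>0$ (hence $P\in\mathcal H$) and the error is $\leq\eps/4$, or $h'_P=0$ and the error is $|h_P|<\eps$. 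In every case $|h_P-h'_P|\leq\eps$, and whenever $h'_P\neq0$ we have $|h_P|>\eps/4>0$, i.e.\ $P\in\mathcal H$; hence $\mathcal H'\subseteq\mathcal H$, which incidentally makes $H'$ automatically $s$-sparse.

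The complexity bounds follow by adding the cost of the single call to \cref{theo:structurelearning} and of the $|\mathcal P|=O(s\log(s/\delta))$ calls to \cref{theo:singleparameterlearning}. Reducing the per-call failure probability from $\delta$ to $\delta/(2|\mathcal P|)$ only changes logarithmic factors in the number of experiments, the total evolution time, and the classical post-processing time, so these become the structure-learning cost plus $|\mathcal P|$ times the corresponding per-call bound, giving $\widetilde O((s/\eps)\log(1/\delta))$ total evolution time, $\widetilde O(s\log(1/\delta)\log(1/\eps))$ experiments, $O(ns\log(s/\delta)\log(1/\eps))$ classical post-processing, $n$ ancilla qubits, and robustness to a constant amount of SPAM error, with the query complexity and time resolution obtained by aggregating the per-call bounds of the two subroutines.

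I do not expect a genuine obstacle, since this is essentially a black-box composition; the two points requiring care are (i) choosing the error fed to the single-parameter learner ($\eps/4$) and the threshold ($\eps/2$) so that the thresholding step is simultaneously compatible with $\mathcal H'\subseteq\mathcal H$ and with $\eps$-accuracy on \emph{every} Pauli coefficient (those outside $\mathcal P$, and those inside $\mathcal P$ whose true value is tiny), and (ii) splitting the failure budget over the $\widetilde O(s)$ invocations of the single-parameter learner — this keeps the number of experiments and the total evolution time at $\widetilde O(\cdot)$, while the isolation overhead inside each invocation (scaling with $2^r$ for $r=\Theta(\log(|\mathcal P|/\delta))$) is what drives up the query complexity to a polynomial-in-$s$ bound.
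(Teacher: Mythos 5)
Your proposal is correct and follows essentially the same route as the paper: run the structure learner of \cref{theo:structurelearning} with failure budget $\delta/2$, run the single-coefficient learner of \cref{theo:singleparameterlearning} on each element of the recovered set with failure probability divided by its size, and threshold the estimates to enforce $\mathcal H'\subseteq\mathcal H$, with the complexities obtained by black-box composition. The only (immaterial) difference is your choice of accuracy $\eps/4$ with threshold $\eps/2$ where the paper uses accuracy $\eps/2$ with the same threshold; both case analyses go through.
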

\begin{proof}
   We first present the algorithm, then show correctness, and finally perform a complexity analysis.

   \textbf{The algorithm.} The algorithm has three stages. 
   \begin{enumerate}
       \item Use \cref{alg:theorem34} to find $\mathcal S$ such that $\mathcal H_\eps\subseteq \mathcal S$ and $|\mathcal S|=\widetilde O(s)$ with probability at least $1-\delta/2.$
       \item For every $P\in\mathcal S$, run the algorithm of \cref{theo:singleparameterlearning} with $\delta'=\delta/2|\mathcal S |$ and $\eps'=\eps/2$, to obtain an estimate $h'_P$ of $h_P$.
       \item For every $P$, define $h_P''=0$ if $|h_P'|\leq \eps/2$ and $h''_P=h_P'$ otherwise. 
   \end{enumerate}
   Output $H''=\sum_{P\in\mathcal P}h_P''P.$

   \textbf{Correctness.} From \cref{theo:singleparameterlearning,theo:structurelearning} and a union bound follows that, with probability $\geq 1-\delta$, $\mathcal P$ contains $\mathcal H_\eps$ and $|h_P-h_P'|\leq \eps/2$ for every $P\in\mathcal P$. 
   
   Now, we show that $|h_P-h_P''|\leq \eps$ for every $P\in\{I,X,Y,Z\}^{\otimes n}$. If $P\notin\mathcal P,$ then $|h_P|\leq \eps$ and $h_P''=h_P'=0,$ so $|h_P''-h_P|\leq \eps.$ If $P\in\mathcal P$, then we have both $|h_P-h_P'|\leq \eps/2$ and $|h'_P-h_P''|\leq \eps/2$, so by triangle inequality $|h_P''-h_P|\leq \eps$.
   
   Finally, note that $\mathcal H''\subseteq \mathcal H$, because if $h_P=0$ and $P\in\mathcal P$, then $|h_P'|\leq \eps/2$ by \cref{theo:singleparameterlearning}, so by definition $h''_P=0$; and if $h_P=0$ and $P\notin\mathcal P$, then $h_P'=0,$ so also $h_P''=0.$ 

   \textbf{Complexity analysis.} The complexities follow from the complexities of invoking once the algorithm of \cref{theo:structurelearning} and $\widetilde O(s)$ times the algorithm of \cref{theo:singleparameterlearning}.
\end{proof}

\begin{corollary}\label{cor:learninginellinoperatornom}
     Let $H$ be an $n$-qubit $s$-sparse Hamiltonian with $|h_P|\leq 1$ for every $P\in\{I,X,Y,Z\}^{\otimes n}$. Then, there is an algorithm that with $\widetilde O( (s^2/\eps)\cdot\log(1/\delta))$ total evolution time and $\widetilde O(s\log(1/\delta)\log(1/\eps))$ experiments outputs an $s$-sparse Hamiltonian $H'$ such that $\sum_{P}|h_P-h_P'|\leq \eps$ (which implies $\norm{H-H'}_{\opnorm}\leq\eps$) with probability $\geq 1-\delta.$
\end{corollary}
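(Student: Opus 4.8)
The plan is to invoke \cref{theo:learninginellinftynorm} essentially as a black box, with a rescaled accuracy parameter. Concretely, I would run the algorithm of \cref{theo:learninginellinftynorm} with error parameter $\eps' = \eps/s$ and the same failure probability $\delta$. By that theorem, with probability at least $1-\delta$ this outputs a Hamiltonian $H'$ satisfying $\mathcal H'\subseteq\mathcal H$ and $|h_P-h_P'|\leq \eps/s$ for every $P\in\{I,X,Y,Z\}^{\otimes n}$.

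First I would use the support-containment property: since $\mathcal H'\subseteq\mathcal H$ and $H$ is $s$-sparse, $H'$ is also $s$-sparse, and in particular $H-H'$ is supported on at most $s$ Pauli strings. Therefore $\sum_P |h_P-h_P'| = \sum_{P\in\mathcal H}|h_P-h_P'| \leq s\cdot(\eps/s) = \eps$, which is the claimed $\ell_1$ bound. The parenthetical remark then follows immediately from \cref{lem:comp_l1_opnorm}, which gives $\norm{H-H'}_{\opnorm}\leq \norm{H-H'}_{\ell_1}\leq \eps$.

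Next I would translate the complexity guarantees by substituting $\eps'=\eps/s$ into the bounds of \cref{theo:learninginellinftynorm}. The total evolution time becomes $\widetilde O\big((s/\eps')\log(1/\delta)\big) = \widetilde O\big((s^2/\eps)\log(1/\delta)\big)$, and the number of experiments becomes $\widetilde O\big(s\log(1/\delta)\log(1/\eps')\big) = \widetilde O\big(s\log(1/\delta)\log(s/\eps)\big)$; since $\log(s/\eps)=O(\log s + \log(1/\eps))$ and the $\widetilde O$ notation already absorbs polylogarithmic factors in $s$, this is $\widetilde O\big(s\log(1/\delta)\log(1/\eps)\big)$, as claimed. (The number of queries, time resolution, classical post-processing time, and SPAM robustness are likewise inherited from \cref{theo:learninginellinftynorm} with $\eps\mapsto\eps/s$, though these are not part of the statement.)

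I do not expect a genuine obstacle here: this is a direct corollary of \cref{theo:learninginellinftynorm}. The only point requiring a little care is ensuring that the output is genuinely $s$-sparse, so that the $\ell_\infty$-to-$\ell_1$ conversion costs only a factor of $s$ (rather than the $4^n$ one would incur for an arbitrary Hamiltonian); this is exactly what the property $\mathcal H'\subseteq\mathcal H$ in \cref{theo:learninginellinftynorm} provides. Beyond that, one just has to check that the additional $\log s$ factor introduced by rescaling is swallowed by the $\widetilde O$ notation.
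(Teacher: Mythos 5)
Your proposal is correct and matches the paper's own proof essentially verbatim: run the algorithm of \cref{theo:learninginellinftynorm} with $\eps'=\eps/s$, use the support containment $\mathcal H'\subseteq\mathcal H$ to convert the $\ell_\infty$ guarantee into an $\ell_1$ bound at the cost of a factor $s$, and substitute $\eps'=\eps/s$ into the complexity bounds. Your extra remark about the $\log s$ factor being absorbed by the $\widetilde O$ notation is a fine (and correct) level of care.
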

\begin{proof}
   We first present the algorithm, then show correctness, and finally perform a complexity analysis.

   \textbf{The algorithm.} The algorithm is the algorithm of \cref{theo:learninginellinftynorm} with $\eps'=\eps/s.$ Let $H'$ be the output of this algorithm.

   \textbf{Correctness.} By \cref{theo:learninginellinftynorm}, with probability $\geq 1-\delta$ we have that $\mathcal H'\subseteq \mathcal H$ and that
   \begin{equation*}
       |h_P-h_P'|\leq \eps/s
   \end{equation*}
   for every $P\in\mathcal H.$ In that case, 
   \begin{equation*}
      \sum_{P\in\{I,X,Y,Z\}^{\otimes n}}|h_P-h_P'|=\sum_{P\in\mathcal H}|h_P-h_P'|\leq\sum_{P\in\mathcal H}\frac{\eps}{s}\leq \eps.
   \end{equation*}
   As $\norm{H-H'}_{\opnorm}\leq \sum|h_P-h_P'|$ by triangle inequality, we are done.

   \textbf{Complexity analysis.} The complexity follows immediately from \cref{theo:learninginellinftynorm} with $\eps'=\eps/s.$
\end{proof}

Combining Corollary~\ref{cor:learninginellinoperatornom} with Lemmas~\ref{lem:upperboundtodT} and \ref{lemma_dB_upperbound}, we have the following result for learning Hamiltonians in physically motivated distances.

\begin{corollary}[Learning in physically  motivated distances]\label{cor:learninphysicaldistances}
    Let $H$ be an $n$-qubit $s$-sparse Hamiltonian with $|h_P|\leq 1$ for every $P\in\{I,X,Y,Z\}^{\otimes n}$. Let $T,B> 0.$ Then, there is an algorithm that with $\widetilde O( (s^2/\eps)\cdot\log(1/\delta))$ total evolution time and $\widetilde O(s\log(1/\delta)\log(1/\eps))$ experiments outputs an $s$-sparse Hamiltonian $H'$ such that $d_T(H,H')\leq O(\max\{T,1\})\eps$ and $d_B(H,H')\leq O(\max\{B,1\}) \eps$, with probability $\geq 1-\delta.$
\end{corollary}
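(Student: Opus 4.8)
The plan is to reduce the statement to the operator-norm learning guarantee already established in Corollary~\ref{cor:learninginellinoperatornom}, and then convert that guarantee into bounds on the two physically motivated distances by invoking the comparison lemmas~\ref{lem:upperboundtodT} and \ref{lemma_dB_upperbound}. Concretely, I would run the algorithm of Corollary~\ref{cor:learninginellinoperatornom} with error parameter $\eps$ and failure probability $\delta$. Its output is an $s$-sparse Hamiltonian $H'$ satisfying $\sum_{P}|h_P-h_P'|\leq \eps$ with probability $\geq 1-\delta$, and, since $\norm{P}_{\opnorm}\leq 1$ for every Pauli $P$, the triangle inequality gives $\norm{H-H'}_{\opnorm}\leq \sum_P|h_P-h_P'|\leq \eps$ on the same event. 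The quoted complexities — $\widetilde O((s^2/\eps)\log(1/\delta))$ total evolution time and $\widetilde O(s\log(1/\delta)\log(1/\eps))$ experiments — are inherited verbatim from Corollary~\ref{cor:learninginellinoperatornom}, so no new complexity analysis is needed (and SPAM robustness, if desired, is likewise inherited).

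The second step is purely analytic: on the event $\norm{H-H'}_{\opnorm}\leq\eps$, I apply Lemma~\ref{lem:upperboundtodT}, which gives $d_T(H,H')\leq T\norm{H-H'}_{\opnorm}$; since $T\leq \max\{T,1\}$ this is at most $\max\{T,1\}\eps=O(\max\{T,1\})\eps$. Similarly, Lemma~\ref{lemma_dB_upperbound} gives $d_B(H,H')\leq O(\max\{B,1\})\norm{H-H'}_{\opnorm}\leq O(\max\{B,1\})\eps$. Since both conclusions are derived from the single high-probability event provided by the learner, they hold simultaneously with probability $\geq 1-\delta$, which is exactly the claim; in particular no additional union bound is required beyond the one already absorbed into Corollary~\ref{cor:learninginellinoperatornom}.

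There is essentially no hard step here — the corollary is a bookkeeping combination — so the "main obstacle'' is really just a matter of due diligence: one must make sure the hypotheses of Lemmas~\ref{lem:upperboundtodT} and \ref{lemma_dB_upperbound} actually cover the relevant pair $(H,H')$. Note that $\norm{H}_{\opnorm}$ and $\norm{H'}_{\opnorm}$ may be as large as $s+\eps$ rather than bounded by $1$, so one needs the versions of these comparison inequalities whose constants do not degrade with the operator norms of the two Hamiltonians, and in particular the temperature-budget dependence in Lemma~\ref{lemma_dB_upperbound} must be of the form $O(\max\{B,1\})$ rather than exponential in $B$ — this is precisely the exponential improvement over the $e^{B}$ bound of \cite{brandao2017quantum} recorded in Result~\ref{result:physicaldistancesvsopnorm}. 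Once those lemmas are in hand with the stated dependencies, the proof is immediate: instantiate the learner, read off the operator-norm error, and push it through the two inequalities.
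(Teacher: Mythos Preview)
Your proposal is correct and matches the paper's approach exactly: the paper simply states that the corollary follows by combining Corollary~\ref{cor:learninginellinoperatornom} with Lemmas~\ref{lem:upperboundtodT} and~\ref{lemma_dB_upperbound}. Your due-diligence worry is moot, since in the paper both comparison lemmas are stated for arbitrary Hamiltonians with no operator-norm hypothesis (Lemma~\ref{lem:upperboundtodT} gives $d_T\le T\|H_1-H_2\|_{\opnorm}$ and Lemma~\ref{lemma_dB_upperbound} gives $d_B\le (B/2)\|H_1-H_2\|_{\opnorm}$ unconditionally), so they apply directly to $(H,H')$ regardless of $\|H\|_{\opnorm}$ or $\|H'\|_{\opnorm}$.
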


\section{Lower bounds for learning}
\subsection{Number of experiments lower bound}

In this section, we prove lower bounds on the number of experiments  for learning an $s$ sparse Hamiltonian in the operator norm. We provide two lower bounds. The first one is  $\Omega((s/n)\cdot \log (1/\eps))$, which holds even when the $s$ Pauli operators are already known. The second one is $\Omega(s)$ for the special case when $\eps\leq  1/s$.

\begin{theorem}\label{thm:lb_experiments_1}
     Let $\mathcal H$ be a subset of $\{I,X,Y,Z\}^{\otimes n}-\{I^{\otimes n}\}$ with size $s$. Assume that there is an algorithm with $O(n)$ ancilla qubits that learns all Hamiltonians with operator norm at most 1 whose Pauli support lies in $\mathcal H,$ up to error $\eps$ in operator norm with success probability at least 0.9. Then, the algorithm makes at least $\Omega((s/n)\cdot \log(1/\eps))$ experiments.
\end{theorem}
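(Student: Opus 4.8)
The plan is to combine an information-theoretic bottleneck ---``each experiment with $O(n)$ ancillas reveals only $O(n)$ bits''--- with a packing argument that produces exponentially many pairwise-$\eps$-separated Hamiltonians supported on the fixed set $\mathcal{H}$. The skeleton is the standard one behind sample-complexity lower bounds: (i) build a large ``code'' $\mathcal{C}$ of Hamiltonians, any two of which are $2\eps$-far in operator norm; (ii) observe that an $\eps$-learner must, with probability $\ge 0.9$, identify which codeword it was given, hence solves the input-guessing game on $|\mathcal{C}|$ inputs; (iii) invoke Nayak's bound (\cref{lem:Nayaksbound}) to conclude that the whole protocol must exchange $\Omega(\log|\mathcal{C}|)$ qubits; (iv) bound the qubits exchanged per experiment by $O(n)$, so that the number of experiments is $\Omega(\log|\mathcal{C}|/n)$.

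For step (i), I would take $\mathcal{C}$ to be a maximal $2\eps$-separated (in operator norm) subset of the $\ell_\infty$-ball of radius $1$ inside $\R^{\mathcal{H}}\cong\R^s$, i.e.\ Hamiltonians $H=\sum_{P\in\mathcal{H}}h_PP$ with $\|h\|_{\ell_\infty}\le 1$. Using $\|H\|_{\opnorm}\le\|h\|_{\ell_1}\le s\|h\|_{\ell_\infty}$ is too lossy; instead I would rather work directly in the operator norm, or more cheaply use only a \emph{subset} of coordinates so that the $\ell_1$/$\opnorm$ gap is harmless. Concretely: restrict to coefficient vectors $h$ supported on all $s$ Paulis but taking values in $\{0,\gamma\}$ for a suitable $\gamma$; a Gilbert--Varshamov / volumetric argument gives a code of size $2^{\Omega(s)}$ with pairwise $\ell_1$-distance $\Omega(\gamma s)$, hence (by $\|\cdot\|_{\ell_2}\le\|\cdot\|_{\opnorm}$ and $\|\cdot\|_{\ell_2}\ge\|\cdot\|_{\ell_1}/\sqrt s$) pairwise operator-norm distance $\Omega(\gamma\sqrt s)$. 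Picking $\gamma=\Theta(\eps/\sqrt s)$ makes this $\Omega(\eps)$ while keeping $\|H\|_{\opnorm}\le\gamma\sqrt s\le 1$. But this only yields $\log|\mathcal{C}|=\Omega(s)$, giving $\Omega(s/n)$ experiments with \emph{no} $\log(1/\eps)$ factor. To get the $\log(1/\eps)$ factor I instead use a continuum packing: the $\ell_\infty$-ball of radius $1$ in $\R^s$, equipped with a metric comparable to $\|H\|_{\opnorm}$, contains an $\eps$-separated set of size $(\Omega(1/\eps))^s$ by a standard volume comparison (the $\eps$-balls are disjoint, each of volume $\asymp \eps^s$ up to the norm-equivalence constants from \cref{lem:comp_l1_opnorm}, and they all fit inside a ball of radius $O(1)$). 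Hence $\log|\mathcal{C}|=\Omega(s\log(1/\eps))$.

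For step (ii): if $\widetilde H$ satisfies $\|H-\widetilde H\|_{\opnorm}\le\eps$ and the code is $2\eps$-separated, then rounding $\widetilde H$ to the nearest codeword recovers $H$ exactly, so any $0.9$-confidence $\eps$-learner is a $0.9$-confidence solver of the guessing game on $K=|\mathcal{C}|$ inputs (Alice holds the index, encodes it as the Hamiltonian she lets the algorithm query, Bob runs the learner). Step (iii) is immediate from \cref{lem:Nayaksbound}: the transcript has $\Omega(\log K)=\Omega(s\log(1/\eps))$ qubits. For step (iv), I would argue that one experiment of the form ``prepare $\rho$, query $U(t_1)$, apply $V_1$, \dots, measure'' with $O(n)$ ancilla qubits can be simulated by an exchange of $O(n)$ qubits between the learner and an oracle holding $H$: the learner sends the current ($O(n)$-qubit) register to the oracle, the oracle applies $U(t_i)$ and returns it; across all queries within one experiment the register is the same $O(n)$ qubits, so the communication per experiment is $O(n)\cdot(\text{\#queries per experiment})$. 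If the model allows unboundedly many queries per experiment this last point is the gap --- so I would either assume (as is standard, and as the paper's access model implicitly does when counting ``experiments'' as the costly resource) that each experiment makes $O(1)$ queries, or, better, note that only the \emph{total} information extracted matters: collapsing the entire run of $m$ experiments into one interactive protocol, the learner's side of the channel is always an $O(n)$-qubit register, queried some number of times, and a cleaner accounting shows the message register re-sent per experiment is $O(n)$ qubits, for a total of $O(mn)$. Equating $O(mn)=\Omega(s\log(1/\eps))$ gives $m=\Omega((s/n)\log(1/\eps))$.

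The main obstacle is step (iv): making rigorous the claim that ``an experiment with $O(n)$ ancillas leaks only $O(n)$ qubits of communication,'' independently of how many times $U$ is queried inside that experiment and independently of adaptivity between experiments. The clean way is to phrase the whole learning procedure as a two-party protocol where the only channel is the at-most-($n+O(n)$)-qubit register passed back and forth to the ``$H$-oracle,'' observe that this register is re-transmitted a bounded number of times \emph{per experiment} (one send + one receive per query, but the register size, not the query count, is what enters Nayak's bound once we also bound queries, or we simply count total qubits sent $=O(n)\times\#\text{queries}$ and separately lower-bound \#experiments $\ge$ \#queries$/O(1)$ under the standard convention). I expect the paper resolves this by a careful but routine reduction; the packing construction in step (i) and the guessing-game reduction in steps (ii)--(iii) are the conceptually load-bearing parts and are essentially forced.
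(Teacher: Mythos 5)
Your proposal is correct and follows essentially the same route as the paper: a volume-based packing of $(1/2\eps)^{s}$ pairwise $2\eps$-separated Hamiltonians in $\R^{s}$, a reduction to the input-guessing game by rounding to the nearest codeword, and Nayak's bound giving $O(Nn)=\Omega(s\log(1/\eps))$. The step-(iv) issue you flag is resolved in the paper more simply than your oracle-communication framing: Alice holds $H$ and runs the \emph{entire} (possibly adaptive, multi-query) algorithm locally, transmitting to Bob only the classical measurement outcome of each experiment, which is $O(n)$ bits because the measured register has $n+O(n)$ qubits — so no per-query accounting is needed. One small correction to your volume argument: the lower bound $|\mathcal C|\ge(1/2\eps)^{s}$ comes from the \emph{covering} direction (a maximal $2\eps$-separated set is a $2\eps$-net, so the $2\eps$-balls around codewords cover $B(0,1)$), not from disjointness of the balls, which gives the opposite inequality.
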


\begin{proof}
    We identify the space of Hamiltonians whose Pauli support lies in $\mathcal H$ with $\mathbb R^{s}$. For any $H\in \R^s$, define $B(H,\eps)= \{H'\in \R^s: \norm{H-H'}_{\opnorm}\leq \eps\}$. Let $\mathscr H$ be a subset of $\{H\in \R^s:\, \norm{H}_{\opnorm}\leq 1\}$ with maximum size such that every pair of Hamiltonians in $\mathscr H$ are $2\eps$-far in operator norm. We will show a lower bound on the size of $\mathscr H$ by a volume argument.

    Let $\mu$ be the Lebesgue measure in $\R^s$. Then,
    \begin{equation}\label{eqn:lebesgue_ball}
        \mu(B{(H,2\eps)})=\mu(B(0,2\eps))=(2\eps)^s \mu(B(0,1))
    \end{equation}

    \noindent Since $\mathscr H$ is of maximum size, we know that there is no $H'\in \R^s$ with $\norm{H'}_{\opnorm}\leq 1$ such that $H'$ is $(2\eps)$-far from all $H\in \mathscr H$. This means that $$ B(0,1)\subseteq \cup_{H\in\mathscr H}B(H,2\eps).$$ As $\mu$ is a measure, we have that
    \begin{equation}\label{eq:lowerboundonmathscrH}
        \sum_{H\in \mathscr H}\mu( B(H,2\eps))\geq \mu(B(0,1))\quad 
        \underset{\eqref{eqn:lebesgue_ball}}{\implies}\quad |\mathscr H|\geq \left(\frac 1 {2\eps}\right)^s.
    \end{equation}

    Now, we claim that any algorithm to learn Hamiltonians from $\mathscr H$ with $N$ experiments and with $O(n)$ ancilla qubits with success probability 0.9, determines an algorithm to
    win the input guessing game of $K=\lfloor (1/2\eps)^s\rfloor$ with success probability 0.9. Indeed, before playing the game, by \cref{eq:lowerboundonmathscrH}, Alice and Bob can agree on an encoding of $[K]$ into the elements of $\mathscr H$. Then, when Alice receives the input $x\in [K]$, she can run the $N$ experiments algorithm for learning the Hamiltonian $H_x$ corresponding to $x$. This produces $O(Nn)$ bits, which she can send to Bob, who can post-process it and determine a Hamiltonian $H'$, which will be $\eps$-close to $H_x$ with probability at least 0.9. As all the elements of $\mathscr H$ are $(2\eps)$-far to each other, we have that $x$ is the only element of $[K]$ such that its corresponding Hamiltonian is $\eps$-close. Thus, Bob can determine $x$ with probability at least 0.9. Finally, by \cref{lem:Nayaksbound} we have that 
    \begin{equation*}
        O(Nn)=\Omega(s\log(1/\eps)), 
    \end{equation*}
    so $N=\Omega((s/n)\cdot \log(1/\eps)),$ as claimed. 
\end{proof}

Now, we prove our second lower bound.
\begin{theorem}\label{thm:lb_experiments_2}
    Let $\eps\leq 1/(2s)$. Consider an algorithm with $O(n)$ ancilla qubits that $\eps$-learns in operator norm all $n$-qubit $s$-sparse Hamiltonians with operator norm at most 1, with success probability $\geq 0.9$. Then, the algorithm must make $\Omega(s)$ experiments.
\end{theorem}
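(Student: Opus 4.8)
The plan is to use a counting argument over $s$-sparse Hamiltonians with coefficients in $\{0,\eps\}$, exactly as sketched in the introduction. First I would fix the candidate family $\mathscr{H}=\{H_{\mathcal{A}}:=\sum_{P\in\mathcal{A}}\eps P \ :\ \mathcal{A}\subseteq\{I,X,Y,Z\}^{\otimes n}-\{I^{\otimes n}\},\ |\mathcal{A}|=s\}$. Each such $H_{\mathcal{A}}$ is $s$-sparse and, since $\eps\le 1/(2s)$, by \cref{lem:comp_linf_opnorm} (or directly the triangle inequality) has $\|H_{\mathcal{A}}\|_{\opnorm}\le s\eps\le 1/2\le 1$, so it lies in the promised class. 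The key separation claim is that for $\mathcal{A}\neq\mathcal{A}'$ we have $\|H_{\mathcal{A}}-H_{\mathcal{A}'}\|_{\opnorm}\ge \eps > 2\eps'$ where $\eps'$ is the learning accuracy — actually I need to be careful here: the statement says the algorithm is an $\eps$-learner, so I want the Hamiltonians to be $2\eps$-separated. The cleanest fix is to use coefficients in $\{0,2\eps\}$ (still norm $\le 2s\eps\le 1$), or equivalently observe that for $\mathcal A\neq\mathcal A'$, picking any $P\in\mathcal A\triangle\mathcal A'$, \cref{lem:comp_linf_opnorm} gives $\|H_{\mathcal A}-H_{\mathcal A'}\|_{\opnorm}\ge \|h-h'\|_{\ell_\infty}\ge 2\eps$. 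So an $\eps$-learner, given access to $H_{\mathcal A}$, outputs $\widetilde H$ with $\|\widetilde H - H_{\mathcal A}\|_{\opnorm}\le\eps$, and since the balls of radius $\eps$ around distinct elements of $\mathscr H$ are disjoint, rounding $\widetilde H$ to the nearest element of $\mathscr H$ recovers $\mathcal A$ with probability $\ge 0.9$.

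Next I would lower bound $|\mathscr H|$: the number of $s$-subsets of a set of size $4^n-1$ is $\binom{4^n-1}{s}$. Using $s=O(2^n)$ (hence $s\le c\cdot 4^n$ for the relevant constant, and certainly $s\le (4^n-1)/2$ for $n$ large enough, or we absorb small $n$ into constants), we get $\binom{4^n-1}{s}\ge (4^n/s)^s\ge 2^s$ as long as $s\le 4^{n-1}$, say; more carefully $\binom{4^n-1}{s}\ge ((4^n-1)/s)^s$, and since $s=O(2^n)$ this base is at least some constant $>1$ once $n$ is large, giving $\log_2|\mathscr H|=\Omega(s)$. (For the regime where $s$ is comparable to $4^n$ one still gets $\binom{4^n-1}{s}\ge \binom{4^n-1}{\lfloor (4^n-1)/2\rfloor}\ge 2^{\Omega(4^n)}=2^{\Omega(s)}$.) So in all cases $\log_2|\mathscr H|=\Omega(s)$.

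Finally I would run the same reduction to the input-guessing game as in the proof of \cref{thm:lb_experiments_1}: Alice and Bob agree on an injection from $[K]$, $K=|\mathscr H|$, into $\mathscr H$; on input $x$ Alice runs the learning algorithm on $H_x$ using $N$ experiments with $O(n)$ ancilla qubits, producing a transcript of $O(Nn)$ bits (each experiment on $O(n)$ qubits yields $O(n)$ classical bits of measurement outcomes), sends it to Bob, who post-processes and, by the separation property above, recovers $x$ with probability $\ge 0.9$. \cref{lem:Nayaksbound} then forces $O(Nn)=\Omega(\log K)=\Omega(s)$, hence $N=\Omega(s/n)$ — wait, that only gives $\Omega(s/n)$, not $\Omega(s)$.

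Here is where I need the sharper information-per-experiment bound, and this is the step I expect to be the main obstacle. The naive "$O(n)$ bits per experiment" accounting loses a factor of $n$. To get $\Omega(s)$ I would instead argue that each experiment reveals only $O(1)$ bits of information \emph{about which Hamiltonian in $\mathscr H$ we have}, because the Hamiltonians in $\mathscr H$ differ only through coefficients of size $\eps\le 1/(2s)$: over a single experiment of total evolution time $t_i$, the induced channel $\mathcal{U}_{H_{\mathcal A},t_i}$ differs from $\mathcal{U}_{H_{\mathcal A'},t_i}$ by at most $O(t_i\|H_{\mathcal A}-H_{\mathcal A'}\|_{\opnorm})=O(t_i s\eps)$ in diamond norm, which is $O(1)$ when $t_i=O(1/(s\eps))$ — but the algorithm may use large $t_i$. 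The right way is a mutual-information / chain-rule argument à la Holevo: bound the information gain of experiment $i$ by the Holevo quantity of the ensemble of post-$i$-th-query states, which one shows is $O(1)$ per experiment when the Hamiltonian family is "flat" in the sense that $h_P\in\{0,\eps\}$ with $\eps\le 1/(2s)$; alternatively, package the whole argument directly through a strong-converse/Fano bound using that $\frac12\|\mathcal{U}_{H_{\mathcal A},t}-\mathcal{U}_{H_{\mathcal A'},t}\|_\diamond$ is small relative to $1$ for the \emph{relevant} time scale, or reduce to the single-Pauli sensing lower bound of \cite{huang2023heisenberg} applied coordinate-wise. I would follow whichever of these the authors have set up in their lower-bound toolkit; the essential point, and the crux of the proof, is replacing "$O(n)$ bits per experiment" by "$O(1)$ bits of payload-relevant information per experiment," after which $N\cdot O(1)=\Omega(\log K)=\Omega(s)$ closes the argument.
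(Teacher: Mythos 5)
Your construction, separation argument, and reduction to the input-guessing game all match the paper's proof of \cref{thm:lb_experiments_2}. The gap is in the counting step, and it is what sends you down the unnecessary (and unfinished) detour at the end. You correctly write $\binom{4^n-1}{s}\ge \bigl((4^n-1)/s\bigr)^s$ but then only extract that the base is ``some constant $>1$,'' concluding $\log_2|\mathscr H|=\Omega(s)$. In fact, under the hypothesis $s=O(2^n)$ the base satisfies $(4^n-1)/s=\Omega(2^n)$, so $\log_2|\mathscr H|=\Omega(ns)$ --- you are undercounting by exactly a factor of $n$. This is the whole point of the construction: because the \emph{support} is unknown, encoding an element of $\mathscr H$ requires $\Omega(ns)$ bits (each of the $s$ Paulis in the support costs $\Theta(n)$ bits to specify), which precisely cancels the $O(n)$ bits extractable per experiment. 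With the corrected count, the very same Nayak reduction you already wrote gives $O(Nn)=\Omega(\log|\mathscr H|)=\Omega(ns)$, hence $N=\Omega(s)$, and the proof is complete.

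Consequently, the entire final paragraph --- replacing ``$O(n)$ bits per experiment'' by ``$O(1)$ bits of payload-relevant information per experiment'' via Holevo/Fano-type arguments --- is not needed and is not what the paper does; as written it is only a sketch of several candidate strategies, none carried out, so the proof as submitted does not close. The one-line fix above (sharpen $\Omega(s)$ to $\Omega(ns)$ in the cardinality bound) repairs it and recovers the paper's argument exactly.
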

\begin{proof}
Let $\mathscr H=\{\sum_{P\in \mathcal H} 2\eps P:\, \mathcal H\in\{I,X,Y,Z\}^{\otimes n}-\{I^{\otimes n}\}$. Notice that as $s=O(2^n)$, $$|\mathscr H|=\binom{4^n}{s}=\Omega(4^{ns}).$$ Note that, as $\eps\leq 1/(2s)$, we have that every Hamiltonian of $\mathscr H$ has operator norm at most 1.  Also, by Lemma \ref{lem:comp_linf_opnorm}, any two such Hamiltonians differ in operator norm by at least $2\eps$. 

Now, by the same argument via the the input guessing game used in the proof of \cref{thm:lb_experiments_1}, we have that the number of experiments $N$ made by the algorithm must satisfy 
\begin{align*}
    O(Nn)=\Omega\left(\log(4^{ns})\right)=\Omega(ns).
\end{align*}
Hence, $N=\Omega(s),$ as claimed. 
\end{proof}
Combining \cref{thm:lb_experiments_1,thm:lb_experiments_2,lemma_2}, we get the following lower bound for learning in the time-constrained distance.
\begin{corollary}\label{cor:lb_experiments_timeconstrained}
    Let $T>0$ be a constant. Assume that there is an algorithm with $O(n)$ ancilla qubits that learns all $n$-qubit $s$-sparse Hamiltonians $H$ with $d_T(0,H)\leq 1$, up to error $\eps$ in time-constrained distance, with success probability at least 0.9. Then, the algorithm makes at least $\Omega((s/n)\cdot \log(1/\eps))$ experiments. Furthermore, if $\eps\leq 1/(2s)$, the number of experiments must be at least $\Omega(s).$
\end{corollary}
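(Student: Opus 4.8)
The plan is to reduce \cref{cor:lb_experiments_timeconstrained} to the operator-norm lower bounds already established in \cref{thm:lb_experiments_1,thm:lb_experiments_2}, using the equivalence between the time-constrained distance and the operator norm for constant time budget. Concretely, \cref{lemma_2} (the lower-bound half of \cref{result:physicaldistancesvsopnorm}, item 1) should give that for a constant $T>0$ there are constants $c_1, c_2 > 0$ with $c_1 \norm{H_1 - H_2}_{\opnorm} \le d_T(H_1, H_2) \le c_2 \norm{H_1 - H_2}_{\opnorm}$ for Hamiltonians of bounded operator norm. The upper bound $d_T \le T \norm{\cdot}_{\opnorm}$ converts the hypothesis $d_T(0, H) \le 1$ into a (rescaled) operator-norm bound $\norm{H}_{\opnorm} \le 1/c_1$, so the promised class of Hamiltonians in the corollary contains, up to a constant rescaling, the class $\{H : \norm{H}_{\opnorm} \le 1, \ \mathcal H \subseteq \mathcal{H}\}$ used in \cref{thm:lb_experiments_1}.

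First I would fix an arbitrary support set $\mathcal{H} \subseteq \{I,X,Y,Z\}^{\otimes n} - \{I^{\otimes n}\}$ of size $s$, and restrict attention to Hamiltonians supported on $\mathcal{H}$ with operator norm at most $1$. I would then observe that a hypothetical algorithm $\mathcal A$ that $\eps$-learns in $d_T$-distance all $s$-sparse Hamiltonians with $d_T(0,H)\le 1$ can be used as a black box to $\eps'$-learn in operator norm, for $\eps' = \eps/c_1$ (and possibly after rescaling the input promise by a harmless constant factor absorbed into the $O(\cdot)$): given a target $H$ with $\norm{H}_{\opnorm}\le 1$ supported on $\mathcal H$, we have $d_T(0,H) \le T \le \mathrm{const}$, so (after the rescaling) $\mathcal A$ applies and returns $\widetilde H$ with $d_T(H,\widetilde H)\le \eps$; by the lower bound in the equivalence, $\norm{H - \widetilde H}_{\opnorm} \le \eps/c_1 = \eps'$. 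Hence $\mathcal A$ yields an operator-norm $\eps'$-learner with the same number of experiments and $O(n)$ ancillas, so \cref{thm:lb_experiments_1} forces $\Omega((s/n)\log(1/\eps')) = \Omega((s/n)\log(1/\eps))$ experiments, the constant $c_1$ only changing the hidden constant and an additive $O(s/n)$ term which is dominated. For the second claim, when $\eps \le 1/(2s)$ — again up to adjusting by the constant $c_1$, which one can handle by taking $\eps$ slightly smaller or folding the factor into the class of Hamiltonians $\{\sum_{P \in \mathcal H} 2\eps P\}$ used there — the same reduction combined with \cref{thm:lb_experiments_2} gives $\Omega(s)$ experiments.

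The main obstacle, and the only place requiring care, is managing the constant factors in the two directions of the norm equivalence: the hypothesis is phrased with $d_T(0,H)\le 1$ rather than $\norm{H}_{\opnorm}\le 1$, and the conclusion of $\mathcal A$ is $d_T$-closeness rather than operator-norm closeness, so one must check that rescaling the Hamiltonian class by $1/c_1$ (or $c_2$) and the error by $1/c_1$ does not change the asymptotics — which it does not, since $\log(1/(c_1\eps)) = \log(1/\eps) + O(1)$ and the rescaled family of Hamiltonians in \cref{thm:lb_experiments_1} still has a $(2\eps)$-separated subset of size $(1/2\eps)^{\Omega(s)}$ by the same volume argument. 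For the $\eps \le 1/(2s)$ regime one additionally needs $s = O(2^n)$ implicitly (as in \cref{thm:lb_experiments_2}) for the binomial coefficient bound, which is harmless. Everything else is a direct quotation of the two operator-norm theorems and \cref{lemma_2}; no new combinatorial or information-theoretic input is needed.
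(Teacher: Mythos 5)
Your proposal matches the paper's intended argument: \cref{cor:lb_experiments_timeconstrained} is presented as a direct combination of \cref{thm:lb_experiments_1,thm:lb_experiments_2} with \cref{lemma_2}, which is exactly the reduction you describe, and the constant-factor bookkeeping you worry about is indeed harmless since $\log(1/(c_1\eps))=\log(1/\eps)+O(1)$ for constant $T$. One step to tighten: the inference ``$d_T(H,\widetilde H)\le\eps$ implies $\|H-\widetilde H\|_{\opnorm}\le\eps/c_1$'' invokes \cref{lemma_2}, whose hypotheses ($\|\cdot\|_{\opnorm}\le 1$ and equal traces) need not hold for an improper learner's output $\widetilde H$; the clean fix is to apply \cref{lemma_2} only to pairs from the packing family $\mathscr H$ (which do satisfy them), conclude that $\mathscr H$ is $2\eps$-separated in $d_T$, and let Bob identify the instance via the triangle inequality for $d_T$, exactly as in the guessing-game argument of \cref{thm:lb_experiments_1}. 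Finally, the direction of the promise is easier than you make it: $d_T(0,H)\le 1$ holds for \emph{every} Hamiltonian (it is half a diamond distance between channels), so the hard families automatically satisfy the promise and no rescaling of the instance class is needed; in particular the upper bound $d_T\le T\|\cdot\|_{\opnorm}$ cannot, and need not, be used to deduce an operator-norm bound from $d_T(0,H)\le 1$.
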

\begin{remark}\label{rem:lb_experiment_c1}
\cref{thm:lb_experiments_1}  shows a lower bound for learning Hamiltonians with $\norm{H}_{\opnorm}\leq 1$ within operator norm-distance. However, given any other norm $\|\cdot\|$, the same argument shows the same lower bound for learning Hamiltonians with $\norm{H}\leq 1$ within the distance dertemined by $\norm{\cdot}.$

For the lower bound in \cref{thm:lb_experiments_2}, we additionally need $\norm{\cdot}$ to satisfy $\|P\|=1$ for every Pauli string $P$ and  $\norm{H}\geq \norm{H }_{\ell_\infty}$. In particular, the lower bound applies to the $\ell_p$ norms of the Pauli coefficients.
\end{remark}

\subsection{Total time evolution lower bound}

\begin{theorem}\label{thm:lb_timeevol_c1}
    Consider an algorithm that $\eps$-learns in $\ell_1$ norm all $n$-qubit $s$-sparse Hamiltonians $H$ with $\norm{H}_{\ell_1}\leq 1$, with success probability $\geq 0.9$. Then, the algorithm must use $\Omega(s/\eps)$ evolution time.
\end{theorem}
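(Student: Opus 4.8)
The plan is to reduce to learning a \emph{single} Pauli coefficient to precision $\Theta(\varepsilon/s)$, for which an $\Omega(s/\varepsilon)$ lower bound on the total evolution time is known (the Heisenberg-limited one-parameter estimation bound; see \cite{huang2023heisenberg}). We may assume $\varepsilon$ is below a small absolute constant, since otherwise the claimed bound follows from a simpler variant of the same argument. Fix any $s$ distinct non-identity Pauli strings $P_1,\dots,P_s$ on $n$ qubits and consider the family of $s$-sparse Hamiltonians $\mathcal F=\{H_{\vec h}=\sum_{i=1}^s h_iP_i:\vec h\in[0,1/s]^s\}$; every such $H$ satisfies $\|H\|_{\ell_1}=\sum_ih_i\le1$. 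Let $\mathcal A$ be an $\varepsilon$-learner in $\ell_1$-norm for $s$-sparse Hamiltonians of $\ell_1$-norm at most $1$, succeeding with probability $\ge0.9$ and using total evolution time $\tau$. Run on any $H_{\vec h}\in\mathcal F$, with probability $\ge0.9$ the output satisfies $\sum_i|h_i-\tilde h_{P_i}|\le\varepsilon$, so at most $s/10$ of the indices $i$ have $|h_i-\tilde h_{P_i}|>10\varepsilon/s$, i.e.\ at least $0.9\,s$ indices are accurate to $10\varepsilon/s$.

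Now draw $\vec h$ uniformly from $[0,1/s]^s$. Averaging the last statement over $i\in[s]$ and over the randomness of $(\vec h,\mathcal A)$, there is a \emph{fixed} index $i_0$ (depending only on $\mathcal A$, not on the instance) with $\Pr[\,|h_{i_0}-\tilde h_{P_{i_0}}|\le10\varepsilon/s\,]\ge 0.9\cdot0.9>2/3$. Averaging further over the remaining coordinates, we may freeze $\{h_j\}_{j\ne i_0}$ at explicit values $h_j^\star$ while keeping this probability above $2/3$, with $h_{i_0}$ still uniform on $[0,1/s]$. Set $P_0=P_{i_0}$ and $H^\star=\sum_{j\ne i_0}h_j^\star P_j$, a fixed, classically known Hamiltonian.

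Given only query access to $e^{-ith_{i_0}P_0}$ (the single-Pauli Hamiltonian $h_{i_0}P_0$ with $|h_{i_0}|\le1/s$), we simulate every query $e^{-itH_{\vec h}}=e^{-it(h_{i_0}P_0+H^\star)}$ that $\mathcal A$ performs: by Trotterization (\cref{theo:trotterization}) we approximate $e^{-it(h_{i_0}P_0+H^\star)}$ to arbitrarily small diamond-norm error by interleaving small-time queries to $e^{-i(t/l)h_{i_0}P_0}$ with powers of the Hamiltonian-independent, classically implementable unitary $e^{-i(t/l)H^\star}$. The key point is that this uses evolution time on $e^{-ith_{i_0}P_0}$ equal to exactly $t$ per simulated query, so $O(\tau)$ in total, with no inflation. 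A standard hybrid (telescoping) argument shows that the accumulated query errors change $\mathcal A$'s output distribution by arbitrarily small total-variation distance. We therefore obtain an algorithm that, with query access to $e^{-ith_{i_0}P_0}$ and total evolution time $O(\tau)$, outputs an estimate of $h_{i_0}$ within $10\varepsilon/s\le1/s$ with probability above $2/3$. By the single-Pauli-coefficient lower bound \cite{huang2023heisenberg}, learning the coefficient of a non-identity Pauli to precision $\eta$ requires $\Omega(1/\eta)$ total evolution time, so here $O(\tau)=\Omega(s/\varepsilon)$, giving $\tau=\Omega(s/\varepsilon)$.

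The main obstacle is isolating a single Pauli coefficient without paying in total evolution time: one must (i) extract an index $i_0$ that is fixed and instance-independent — which works because the generous $10\varepsilon/s$ accuracy threshold retains a $0.9$-fraction of good coordinates, pushing the averaged success probability safely above $2/3$ — and (ii) verify that Trotterizing $h_{i_0}P_0+H^\star$ reproduces $e^{-itH_{\vec h}}$ while leaving the evolution time charged to $e^{-ith_{i_0}P_0}$ unchanged, along with the hybrid estimate controlling the effect of the Trotter errors on the success probability. A secondary point is checking that the one-parameter lower bound applies in the regime $|h_{i_0}|\le1/s$ with target precision $10\varepsilon/s$, which it does once $\varepsilon\le1/10$.
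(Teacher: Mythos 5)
Your proposal is correct and follows essentially the same route as the paper: an averaging argument isolates a fixed Pauli index whose coefficient the learner must estimate to precision $O(\eps/s)$ with constant probability, Trotterization embeds the single-Pauli evolution into the sparse family with no evolution-time overhead, and a single-parameter Heisenberg-limit lower bound finishes the argument. The only differences are minor: the paper uses the discrete family $\{\pm \eps/s\}^s$ and reduces to a two-point sign-distinguishing task bounded directly via diamond-norm subadditivity, which sidesteps the need for the Bayesian (uniform-prior on $[0,1/s]$) form of the $\Omega(1/\eta)$ single-parameter bound that your continuous construction implicitly relies on (that form does follow from a standard shifted two-point averaging argument, but it is a small extra step your write-up elides).
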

\begin{proof}
    Fix any set $\mathcal H\subseteq \{I,X,Y,Z\}^{\otimes n}$ with size $s$. Let $\mathscr H=\{\sum_{P\in\mathscr H} h_PP: h_P=\pm \eps/s\}$. Let $\mathcal A$ be an algorithm that $(\eps/10)$-learns $s$-sparse Hamiltonians in $\ell_1$ norm with success probability $\geq 0.9.$ We define another algorithm $\mathcal A'$ that  $(\eps/5)$-learns every element of $\mathscr H$ in $\ell_1$ norm and, additionally, outputs an element of $\mathscr H$: run $\mathcal A$ and round the output to the the closest element $H'$ of $\mathscr H$. Note that total evolution time and success probability remain the same.

    Now, we make the following claim. 
    \begin{claim}\label{claim:lb}
        There is $P_0\in\mathcal H$ such that $\mathcal A'$, when run on a uniformly random Hamiltonian from $\mathscr H$, learns the correspondent Pauli coefficient exactly with probability at least $0.81.$
    \end{claim}
\noindent  \emph{Proof of the \cref{claim:lb}.} 
        Let $H'$ be the output of $\mathcal A'$. As, with probability at least 0.9, $\norm{H-H'}_{\ell_1}\leq \eps/5$,  and $H,H'\in \mathscr H$; we have that a 0.9$s$ of the Pauli coefficients of $H$ and $H'$ are equal, with probability at least $0.9$. Now, if we define $X_P$ as the random variable that takes value 1 if the Pauli coefficient corresponding to $P$ for $H$ equals the one $H'$, we have that 
        \begin{align*}
            \mathbb E_{H\in\mathscr H,\ \text{randomness of }\mathcal A'}\left[\sum_{P\in\mathscr H}X_P\right]\geq 0.9\cdot 0.9s=0.81s.
        \end{align*}
        By linearity of expectation, we have that there is $P\in \mathcal H$ such that $\mathbb E[X_P]\geq 0.81.$
    \qed

    Finally, we claim that $\mathcal A'$ can be used to distinguish the Hamiltonians to solve the task of distinguish $(\eps/s) P_0$ and $-(\eps/s)P_0$, when having access to their time evolution operators and when both are drawn with probability $1/2$. Then, by subadditivity of diamond norm, the fact that $$\norm{e^{-it(\eps/s)P}(\cdot) e^{it(\eps/s)P}-e^{it(\eps/s)P}(\cdot) e^{-it(\eps/s)P}}_{\diamond }=O\left( \frac{t\eps}{s}\right),$$ and the operational interpretation of the diamond norm (see \cref{eq:physicalinterpretationofdiamond}), we have that the algorithm must use $\Omega(\eps/s)$ total evolution time.  

    Thus, it remains that $\mathcal A'$ can be turn into a distinguisher for $(\eps/s) P_0$ and $-(\eps/s)P_0$. We are given time-evolution access to $H_0$, that equals $\pm  it(\eps/sP_0)$ with probability $1/2$. If we pick  $H_1\in \mathscr H_1=\{\sum_{P\in\mathscr H-\{P_0\}} h_PP: h_P=\pm \eps/s\}$, we have that $H=H_0+H_1$ is a uniformly random element of $\mathscr H.$ Thanks to Trotterization (see \cref{theo:trotterization}), we can have access to $e^{-itH}$ for a uniformly random $H\in\mathscr H$, without any time-evolution overhead. Hence, it follows from \cref{claim:lb} that $\mathcal A'$ allows to distinguish between $(\eps/s)P_0$ and $-(\eps/s)P_0$ with probability $\geq 0.81.$ 
    \end{proof}

The following lower bound follows from using the same proof strategy as in \cref{thm:lb_timeevol_c1}. 
\begin{theorem}\label{theo:lbtimeinopnorm}
    Consider an algorithm that $\eps$-learns in operator norm all $n$-qubit $s$-sparse Hamiltonians $H$ with $\norm{H}_{\opnorm}\leq 1$, with success probability $\geq 0.9$. Then, the algorithm must use $\Omega(\sqrt{s}/\eps)$ evolution time. Furthermore, if $s\leq n$, then the algorithm must use $\Omega(s/\eps)$ evolution time. 
\end{theorem}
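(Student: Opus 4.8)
The plan is to bootstrap the $\ell_1$-norm lower bound of Theorem~\ref{thm:lb_timeevol_c1}. The key observation is that an operator-norm learner, restricted to Hamiltonians supported on a fixed $s$-element Pauli set, is also an $\ell_1$-norm learner on that set, with the same total evolution time and only a $\sqrt s$ loss in accuracy. Fix any $\mathcal H\subseteq\{I,X,Y,Z\}^{\otimes n}-\{I^{\otimes n}\}$ with $\abs{\mathcal H}=s$, and let $\mathcal A$ be an $\eps$-learner in operator norm for $s$-sparse Hamiltonians of operator norm $\le 1$. Given $H$ supported on $\mathcal H$ with $\norm{H}_{\ell_1}\le 1$, we have $\norm{H}_{\opnorm}\le\norm{H}_{\ell_1}\le 1$, so we may run $\mathcal A$ on $H$; with probability $\ge 0.9$ it returns $H'$ with $\norm{H-H'}_{\opnorm}\le\eps$. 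Set $\widetilde H=\sum_{P\in\mathcal H}\widetilde h_P P$ with $\widetilde h_P=\Tr[H'P]/2^n$. Since $(h_P-\widetilde h_P)_{P\in\mathcal H}$ is a sub-vector of the full Pauli-coefficient difference of $H-H'$, Lemma~\ref{lem:comp_l1_opnorm} gives $\big(\sum_{P\in\mathcal H}(h_P-\widetilde h_P)^2\big)^{1/2}\le\norm{H-H'}_{\ell_2}\le\norm{H-H'}_{\opnorm}\le\eps$, and Cauchy--Schwarz yields $\norm{H-\widetilde H}_{\ell_1}\le\sqrt s\,\eps$. Forming $\widetilde H$ from $H'$ is classical post-processing, so the total evolution time is unchanged.

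Hence ``run $\mathcal A$ and restrict the output to $\mathcal H$'' is a $(\sqrt s\,\eps)$-learner in $\ell_1$ norm for Hamiltonians supported on $\mathcal H$ with $\norm{\cdot}_{\ell_1}\le 1$, using the same total evolution time as $\mathcal A$. The proof of Theorem~\ref{thm:lb_timeevol_c1} fixes an arbitrary support set of size $s$, so it applies with accuracy parameter $\sqrt s\,\eps$ and shows that any such learner must use $\Omega\!\big(s/(\sqrt s\,\eps)\big)=\Omega(\sqrt s/\eps)$ total evolution time; therefore $\mathcal A$ does too. (This is non-vacuous when $\sqrt s\,\eps\lesssim 1$, which is precisely the regime in which the claimed bound is stronger than the trivial one.)

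For the refinement when $s\le n$, I would run the same reduction but choose $\mathcal H$ so that the operator norm on its span is as ``large'' as possible relative to the $\ell_1$ norm, in order to shrink the Cauchy--Schwarz loss: take $\mathcal H=\{Z_1,\dots,Z_s\}$ on $s$ disjoint qubits (possible since $s\le n$), so that every $H$ supported on $\mathcal H$ has $\norm{H}_{\opnorm}=\norm{h}_{\ell_1}$. One then replays the argument of Theorem~\ref{thm:lb_timeevol_c1} directly in this setting — building from the operator-norm-$\eps$ guarantee a subroutine that recovers the full sign pattern of a hard family $\{\sum_i g_iZ_i:\,g_i=\pm c\}$ on a $0.9$-fraction of coordinates, isolating by averaging one coordinate $i_0$ whose sign is recovered with probability $\ge 0.81$, and invoking the single-Pauli-coefficient time lower bound via Lemma~\ref{eq_dia_inf} and subadditivity of the diamond norm to conclude that distinguishing $\pm cZ_{i_0}$ costs $\Omega(1/c)$ total evolution time — now with the counting step calibrated to push $c$ down to $\Theta(\eps/s)$, yielding $\Omega(s/\eps)$.

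The main obstacle, in both parts, is exactly the comparison between the operator norm and the $\ell_1$ norm of the Pauli coefficients: operator-norm closeness of the (possibly non-sparse) learner output to the target does not by itself control $\ell_1$-distance, and one must argue carefully that restricting the output to the fixed support $\mathcal H$ is harmless, using that coordinate projection is an $\ell_2$-contraction on the Pauli coefficients while $\norm{\cdot}_{\ell_2}\le\norm{\cdot}_{\opnorm}$. For the $s\le n$ strengthening the delicate point is additionally the calibration of the hard family (including ensuring it is not learnable trivially while staying inside the operator-norm unit ball) so that the isolated single-coefficient distinguishing task still has magnitude $\Theta(\eps/s)$, exactly as the analogous calibration is carried out in the proof of Theorem~\ref{thm:lb_timeevol_c1}.
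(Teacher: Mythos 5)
Your proposal is correct and follows essentially the same route as the paper: reduce to the $\ell_1$ lower bound of Theorem~\ref{thm:lb_timeevol_c1} by converting the operator-norm guarantee into an $\ell_1$ guarantee with a $\sqrt{s}$ loss (via $\norm{\cdot}_{\ell_2}\le\norm{\cdot}_{\opnorm}$ and Cauchy--Schwarz, i.e.\ Lemma~\ref{lem:comp_l1_opnorm}), and, for $s\le n$, use the commuting family $\{Z_1,\dots,Z_s\}$ on disjoint qubits where the operator norm equals the $\ell_1$ norm of the Pauli coefficients so the loss disappears. Your explicit projection of the learner's output onto the fixed support is a minor refinement (handling improper outputs) of what the paper does implicitly.
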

\begin{proof}
    For the lower bound of $\Omega(\sqrt{s}/\eps)$, we can mimic the proof of Theorem \ref{thm:lb_timeevol_c1}, choosing the $\ell_1$ error to be $ \eps\sqrt s$, because for two $s$-sparse Hamiltonians $H$ and $H'$, we have that $\norm{H-H'}_{\ell_1}\leq \sqrt{2s}\norm{H-H}_{\opnorm}$ by \cref{lem:comp_l1_opnorm}.

    For the case when $s\leq n$, we define $\mathcal H=\{I\otimes \dots\otimes \underbrace{Z}_{i\text{-th qubit}}\otimes \dots \otimes I:\, i\in [s]\}.$
    Now, consider $a:\mathcal H\to \{0,1\}.$ We define $\ket{\psi_a}:=\left(\bigotimes_{i=1}^s\ket{a(Q_i)}\right)\otimes \ket{0}^{\otimes {n-s}}.$ We have that, for every $P\in\mathcal H$, 
    \begin{equation}\label{eq:lbauxiliary}
        \bra{\psi_a}P\ket{\psi_a}=(-1)^{a(P)}.
    \end{equation}
    Next, we claim that for every $H=\sum_{P\in\mathcal H}h_PP$, we have that $\norm{H}_{\opnorm}\geq \norm{H}_{\ell_1}$. Indeed, define 
    \[a(P)=\begin{cases}
       1 & \text{if $h_P<0$},\\
       0 & \text{otherwise}.
    \end{cases}\]
    \noindent Then, by \cref{eq:lbauxiliary},
    \[\norm{H}_{\opnorm}\geq \bra{\psi_a} H\ket{\psi_a}=\sum_{P\in \mathcal H} h_p \bra{\psi_a}P\ket{\psi_a}=\sum_{P\in \mathcal H}h_p(-1)^{a(P)}=\sum_{P\in \mathcal H}|h_P|=\norm{H}_{\ell_1},\]
    as claimed. Thus, for Hamiltonians supported on $\mathcal H$ we have that $\norm{H}_{\opnorm}=\norm{H}_{\ell_1},$ so mimicking the proof of \cref{lem:comp_l1_opnorm} we arrive at $\Omega(s/\eps)$ time evolution lower bound for learning in the operator norm. 

\end{proof}

Combining \cref{theo:lbtimeinopnorm,lemma_2}, we obtain the following. 
\begin{corollary}\label{cor:lb_timeevol_timeconstrained}
    Let $T>0$ be a constant. Consider an algorithm that $\eps$-learns in operator norm all $n$-qubit $s$-sparse Hamiltonians $H$ with $d_T(0,H)\leq 1$, with success probability $\geq 0.9$. Then, the algorithm must use $\Omega(\sqrt{s}/\eps)$ evolution time. Furthermore, if $s\leq n$, then the algorithm must use $\Omega(s/\eps)$ evolution time.
\end{corollary}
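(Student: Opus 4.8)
The statement is a corollary of the operator-norm time-evolution lower bound \cref{theo:lbtimeinopnorm}, obtained by a rescaling reduction that costs only constant factors because $T$ is a constant. The starting observation is that the promise set $\{H:\ d_T(0,H)\le 1\}$ contains a ball of constant operator-norm radius around $0$: by the relation between the time-constrained distance and the operator norm (\cref{lemma_2}), one has $d_T(0,H)\le T\norm{H}_{\opnorm}$, so with $\lambda\coloneqq 1/\max\{T,1\}$ every $s$-sparse $H$ with $\norm{H}_{\opnorm}\le\lambda$ satisfies $d_T(0,H)\le T\lambda\norm{H}_{\opnorm}\le 1$, and is therefore $\eps$-learned in operator norm by the assumed algorithm with success probability $\ge 0.9$.

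Building on this, I would convert the assumed algorithm into one that $\eps$-learns in operator norm every $s$-sparse Hamiltonian with $\norm{\cdot}_{\opnorm}\le 1$. Given time-evolution access to a target $H'$ with $\norm{H'}_{\opnorm}\le 1$, set $H\coloneqq\lambda H'$; this is $s$-sparse with $\norm{H}_{\opnorm}\le\lambda$, hence in the $d_T$-promise, and a query to $e^{-itH}$ is implemented as $e^{-i(\lambda t)H'}$, consuming evolution time $\lambda t$ on $H'$. Run the assumed algorithm on $H$ with error parameter $\lambda\eps$; it outputs $\widehat H$ with $\norm{H-\widehat H}_{\opnorm}\le\lambda\eps$ with probability $\ge 0.9$, and returning $\widehat H/\lambda$ gives $\norm{H'-\widehat H/\lambda}_{\opnorm}\le\eps$. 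If $\tau(\cdot)$ denotes the total evolution time of the assumed algorithm as a function of its error parameter, this new algorithm uses total evolution time $\lambda\,\tau(\lambda\eps)$. The reduction only intercepts oracle calls and rescales the evolution time, so it is oblivious to ancilla count, adaptivity, and SPAM.

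Finally, \cref{theo:lbtimeinopnorm} forces $\lambda\,\tau(\lambda\eps)=\Omega(\sqrt s/\eps)$, and since $\lambda$ is a constant this reads $\tau(\lambda\eps)=\Omega(\sqrt s/(\lambda\eps))=\Omega(\sqrt s/\eps)$; as $\lambda\eps$ ranges over all positive reals this is precisely $\tau(\eps)=\Omega(\sqrt s/\eps)$, the first claim. The case $s\le n$ is identical, using the $\Omega(s/\eps)$ branch of \cref{theo:lbtimeinopnorm} instead. There is essentially no obstacle here: the only points demanding care are checking that the $d_T$-promise really does contain an operator-norm ball of positive radius — which is exactly where the equivalence $d_T(0,\cdot)=\Theta(\norm{\cdot}_{\opnorm})$ for constant $T$ enters — and keeping track of the time rescaling $t\mapsto\lambda t$, which only helps since $\lambda\le 1$.
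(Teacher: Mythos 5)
Your proof is correct and follows the route the paper intends: the corollary is given without an explicit proof, merely as "combining" \cref{theo:lbtimeinopnorm} with the equivalence of $d_T(0,\cdot)$ and $\|\cdot\|_{\opnorm}$ for constant $T$, and your rescaling reduction (embed the operator-norm-bounded hard instances into the $d_T$-promise class, simulate queries to $e^{-it\lambda H'}$ at cost $\lambda t$, and invoke \cref{theo:lbtimeinopnorm}) is exactly how that combination is made rigorous. One small point: for the statement as written, with accuracy measured in operator norm and only the promise class measured in $d_T$, the ingredient you need and correctly use is the upper bound $d_T(0,H)\le T\|H\|_{\opnorm}$ from \cref{lem:upperboundtodT}, whereas the paper's citation of \cref{lemma_2} (the lower-bound direction) would only become necessary if the accuracy guarantee were itself stated in $d_T$.
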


\section{Physically  motivated distances between Hamiltonians}\label{sec:physcallymotivateddistances}
In this section, we first introduce notions of distance between Hamiltonians that carry a strong physical meaning. Subsequently, we show that these notions are closely related to the operator norm between Hamiltonians, thereby endowing the latter with a clear operational interpretation.  Before proceeding, it is useful to recall the most relevant notions of distance for quantum states and quantum channels.

For quantum states, the most natural distance between two density operators $\rho_{1}$ and $\rho_{2}$ is the \emph{trace distance}, defined as
\begin{equation*}
    \frac{1}{2}\|\rho_{1} - \rho_{2}\|_{\trnorm}\,.
\end{equation*}
The operational meaning of the trace distance is captured by the well-known \emph{Holevo-Helstrom theorem} \cite{HOLEVO1973337,Helstrom}. This result establishes that if one is given one copy of an unknown quantum state, promised to be either $\rho_{1}$ or $\rho_{2}$ with equal prior probability $1/2$, then the maximum probability of correctly identifying the state is
\begin{equation*}
    P_{\text{succ}} = \tfrac{1}{2}\Bigl(1 + \tfrac{1}{2}\|\rho_{1} - \rho_{2}\|_{\trnorm}\Bigr)\,.  
\end{equation*}
Moreover, the trace distance plays a crucial role because it directly bounds the distinguishability of expectation values of bounded observables. Specifically, for any observable $O$, Holder's inequality implies that~\cite{Wilde2011FromCT}
\begin{equation*}
    \big|\Tr[\rho_{1} O]-\Tr[\rho_{2} O]\big| \leq \|O\|_{\opnorm}\,\|\rho_{1}-\rho_{2}\|_{\trnorm}\,,
\end{equation*}
which shows that if two states are close in trace distance, then the expectation values of any bounded observable with respect to those states must also be close.

For quantum channels, the most meaningful notion of distance between two quantum channels $\Phi_1$ and $\Phi_2$ is the \emph{diamond distance} \cite{Wilde2011FromCT}, defined as
\begin{equation*}
    \frac{1}{2}\|\Phi_1 - \Phi_2\|_{\diamond}\coloneqq \sup_{\rho_{AA'}}\frac12\|\Id_{A}\otimes\Phi_1(\rho_{AA'})-\Id_{A}\otimes\Phi_2(\rho_{AA'})\|_{\trnorm}\,,
\end{equation*}
where the supremum is taken over all the input states $\rho_{AA'}$ on the input system $A'$ and an arbitrary ancilla $A$. The operational meaning of the diamond distance again follows from the Holevo-Helstrom theorem: if one is given a single use of an unknown channel, promised to be either $\Phi_1$ or $\Phi_2$ with equal prior probability $1/2$, then the maximum probability of correctly identifying the channel is
\begin{equation}\label{eq:physicalinterpretationofdiamond}
    P_{\text{succ}} = \tfrac{1}{2}\Bigl(1 + \tfrac{1}{2}\|\Phi_1 - \Phi_2\|_{\diamond}\Bigr)\,. 
\end{equation}
Moreover, if two channels are close in diamond distance, the expectation values of any observable on the output states $\Phi_1(\rho)$ and $\Phi_2(\rho)$ are also close, for all input states $\rho$. 
In the special case of unitary channels, Aharonov et al.~\cite{aharonov1998} proved that the diamond distance can be evaluated without the need of an ancillary system, and the optimization can be always restricted to pure input states. More precisely, given two unitaries $V$ and $W$, it holds that
\begin{equation}\label{eq:aharonov}
    \frac12\|V(\cdot) V^\dagger-W(\cdot) W^\dagger\|_{\diamond}=\sup_{\psi} \frac12\bigl\| V\ket{\psi}\!\bra{\psi}V^\dagger-W\ket{\psi}\!\bra{\psi}W^\dagger \bigr\|_{\trnorm}\,,
\end{equation}
and, by using the well-known formula for the trace distance between pure states, we can write
\begin{equation}\label{eq_diamond_pure1}
    \frac12\|V(\cdot) V^\dagger-W(\cdot) W^\dagger\|_{\diamond}=   \sqrt{1-\inf_{\psi}|\bra{\psi}V^\dagger W\ket{\psi}|^2}\,.
\end{equation}
Motivated by these meaningful distances for states and channels, we now turn to Hamiltonians. One may then ask: in which physical scenarios do Hamiltonians arise? In practice, there are two natural settings:
\begin{itemize}
    \item \emph{Time-evolution setting}, where the Hamiltonian governs the unitary dynamics of the system;
    \item \emph{Thermodynamics setting}, where the Hamiltonian determines the Gibbs state at the thermal equilibrium.
\end{itemize}
These two perspectives motivate the introduction of two notions of distance that capture Hamiltonian distinguishability in an operationally meaningful way: the \emph{time-constrained diamond distance} and the \emph{temperature-constrained trace distance}, which will be presented in the following subsections.  

\subsection{Time-constrained diamond distance}

We define the \emph{time-constrained diamond distance} between Hamiltonians at time $T$ as
\begin{equation}\label{eq:time_constr_diamond}
    d_T(H_{1},H_{2}) \coloneqq \tfrac{1}{2}\sup_{t\in[0,T]} \|\mathcal{U}_{H_1,t} - \mathcal{U}_{H_2,t}\|_{\diamond},
\end{equation}
where $\mathcal{U}_{H_j,t}$, for $j=1,2$, denotes the channel associated with the time evolution operator $e^{-iH_j t}$, namely $\mathcal{U}_{H_j,t}(\,\cdot\,) \coloneqq e^{-iH_j t}(\,\cdot\,)e^{iH_j t}.$ By the exposition above the time-constrained distance quantifies the best probability of successfully discriminating the Hamiltonians by optimising over all protocols of the form ``prepare, evolve for a time $t\leq T$, and finally measure". 

Here, the parameter $T$ can be understood as an experimental time budget. Moreover, if this time budget was unconstrained, then any pair of Hamiltonians would be maximally distinguishable.

\begin{remark}[Time-unconstrained diamond distance and motivation for $d_T(H_1,H_2)$]

Taking the limit $T \to \infty$ in Eq.~\eqref{eq:time_constr_diamond} yields a \emph{time-unconstrained} diamond distance which is overly sensitive to small perturbations of the Hamiltonians. 
Indeed, there exist pairs of Hamiltonians that are arbitrarily close in operator norm yet maximally distant with respect to this measure.  

To illustrate this, consider commuting Hamiltonians, i.e.~$[H_1,H_2]=0$. By using  Eq.~\eqref{eq_diamond_pure1} we obtain
\begin{equation*}
    d_\infty(H_1,H_2) = \sup_{t \ge 0} \sqrt{1 - \inf_{\ket{\psi}} \bigl| \bra{\psi} e^{-i(H_1-H_2)t} \ket{\psi} \bigr|^2 }.
\end{equation*}
If $H_1 - H_2$ is proportional to the identity, the two evolutions differ only by a global phase, and therefore $d_\infty(H_1,H_2) = 0$.  

If instead $H_1 - H_2$ is not proportional to the identity, then $d_\infty(H_1,H_2) = 1$. 
Indeed, in this case $H_1 - H_2$ has at least two distinct eigenvalues, say $\varepsilon_0$ and $\varepsilon_1$, with eigenvectors $\ket{0}$ and $\ket{1}$. 
Restricting the optimization to the subspace $\mathrm{Span}\{\ket{0},\ket{1}\}$ gives
\begin{equation}\label{eq:overlap_Hcomm}
\begin{split}
    \inf_{\ket{\psi}} \bigl| \bra{\psi} e^{-i(H_1-H_2)t}\ket{\psi} \bigr|^{2} 
    & \le \inf_{\ket{\psi}\in \mathrm{Span}\{\ket{0},\ket{1}\}} 
    \bigl| \bra{\psi} e^{-i(H_1-H_2)t}\ket{\psi} \bigr|^{2} \\
    & = \inf_{\ket{\psi}\in \mathrm{Span}\{\ket{0},\ket{1}\}} 
    \bigl| \bra{\psi} e^{-i(\varepsilon_{0} - \varepsilon_{1})\tfrac{\sigma_{z}}{2}t}\ket{\psi} \bigr|^{2}.
\end{split}
\end{equation}
Choosing, for example, $\ket{\psi}= (\ket{0}+\ket{1})/\sqrt{2}$, the overlap in Eq.~\eqref{eq:overlap_Hcomm} reduces to 
\[
    \bigl|\cos\bigl((\varepsilon_{0}-\varepsilon_{1})t/2\bigr)\bigr|,
\]
which vanishes whenever
\begin{equation*}
    \frac{(\varepsilon_0 - \varepsilon_1) t}{2} = \frac{\pi}{2}, \frac{3\pi}{2}, \dots,
\end{equation*}
corresponding to evolution into an orthogonal state within this two-dimensional subspace. 
Hence the minimum overlap is zero, and we conclude that $d_\infty(H_1,H_2)=1$.  

In summary, for commuting Hamiltonians $H_1,H_2$ one has
\begin{equation*}
    d_{\infty}(H_1,H_2) = 
    \begin{cases}
      0 & \text{if } H_1-H_2 \propto \Id, \\
      1 & \text{otherwise}.
    \end{cases}
\end{equation*}
Thus, in the infinite-time limit any nontrivial difference between Hamiltonians implies maximal distinguishability under the time-unconstrained diamond distance. 
This makes the latter both mathematically unsatisfactory and physically unrealistic, since in practice experiments can only probe finite evolution times. 
These observations motivate the introduction of the \emph{time-constrained diamond distance} in Eq.~\eqref{eq:time_constr_diamond}, which is operationally meaningful and, as shown in Subsections~\ref{sec_sub1} and \ref{sec_sub2}, closely related to the operator norm between Hamiltonians---in contrast to its unconstrained counterpart.
\end{remark}
We will now show that the time-constrained diamond distance $d_T(H_1,H_2)$ is equivalent to the operator norm $\|H_1 - H_2\|_{\opnorm}$ up to multiplicative factors depending on $T$. To begin, we establish an upper bound on $d_T(H_{1},H_{2})$.

\subsubsection{Upper bounding the time-constrained distance}\label{sec_sub1}
This subsection is devoted to establish an upper bound on the time-constrained diamond distance in terms of the operator norm of the difference between the Hamiltonians. Specifically, the following bound holds.
\begin{lemma}\label{lem:upperboundtodT}
    Let $H_1,H_2$ be Hamiltonians. Then
    \begin{equation}\label{eq_upp_not_known}
        d_T(H_{1},H_{2}) \le \sin \Big( \min \Big( \frac{\pi}{2}, T\Vert H_{1} - H_{2}\Vert_{\opnorm} \Big)\Big),
    \end{equation}
    and, in particular, the following weaker upper bound holds:
    \begin{equation}\label{eq_upp_known}
        d_T(H_{1},H_{2}) \le T \Vert H_{1} - H_{2}\Vert_{\opnorm}\,.
    \end{equation}
\end{lemma}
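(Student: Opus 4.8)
The plan is to reduce the diamond-norm estimate to a bound on overlaps of pure states via the ancilla-free identity~\eqref{eq_diamond_pure1}, and then to control these overlaps by a Grönwall-type argument in the time variable. Applying~\eqref{eq_diamond_pure1} to the unitaries $V = e^{-iH_1 t}$ and $W = e^{-iH_2 t}$ gives, for each fixed $t$,
\[
  \tfrac12\bigl\|\mathcal U_{H_1,t} - \mathcal U_{H_2,t}\bigr\|_\diamond = \sqrt{\,1 - \inf_{\psi}\bigl|\bra{\psi} e^{iH_1 t} e^{-iH_2 t}\ket{\psi}\bigr|^2\,},
\]
so it suffices to prove that for every unit vector $\ket{\psi}$ and every $t \ge 0$,
\[
  \bigl|\bra{\psi} e^{iH_1 t} e^{-iH_2 t}\ket{\psi}\bigr| \;\ge\; \cos\!\Bigl(\min\bigl(\tfrac\pi2,\, t\|H_1 - H_2\|_{\opnorm}\bigr)\Bigr).
\]
Granting this, $\tfrac12\|\mathcal U_{H_1,t} - \mathcal U_{H_2,t}\|_\diamond \le \sin(\min(\tfrac\pi2, t\|H_1-H_2\|_{\opnorm}))$; since the right-hand side is non-decreasing in $t$, taking the supremum over $t \in [0,T]$ yields~\eqref{eq_upp_not_known}, and~\eqref{eq_upp_known} then follows from the elementary inequality $\sin(\min(\tfrac\pi2, x)) \le x$ for $x \ge 0$.

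To establish the overlap bound, fix $\ket{\psi}$, set $G = H_1 - H_2$, and put $\ket{\phi_j(t)} = e^{-iH_j t}\ket{\psi}$ for $j = 1,2$, so that $c(t) := \bra{\psi} e^{iH_1 t} e^{-iH_2 t}\ket{\psi} = \langle \phi_1(t)|\phi_2(t)\rangle$ with $c(0) = 1$. A direct differentiation gives $\dot c(t) = i\bra{\phi_1(t)} G \ket{\phi_2(t)}$. Writing $\ket{\phi_2(t)} = c(t)\ket{\phi_1(t)} + \ket{\eta(t)}$ with $\ket{\eta(t)} \perp \ket{\phi_1(t)}$ and $\|\eta(t)\|^2 = 1 - |c(t)|^2$, and using that $\bra{\phi_1(t)} G \ket{\phi_1(t)}$ is real — hence contributes only to the phase of $c(t)$ and not to $|c(t)|$ — one obtains for $p(t) := |c(t)|^2$ the differential inequality
\[
  |\dot p(t)| = 2\bigl|\mathrm{Im}\bigl(\overline{c(t)}\,\bra{\phi_1(t)} G \ket{\eta(t)}\bigr)\bigr| \;\le\; 2\|G\|_{\opnorm}\sqrt{p(t)\bigl(1 - p(t)\bigr)}\,.
\]
Consequently, wherever $p(t) \in (0,1)$, $\bigl|\tfrac{d}{dt}\arccos\sqrt{p(t)}\bigr| = \tfrac{|\dot p(t)|}{2\sqrt{p(t)(1-p(t))}} \le \|G\|_{\opnorm}$; since $\arccos\sqrt{p(0)} = 0$, integrating gives $\arccos\sqrt{p(t)} \le t\|G\|_{\opnorm}$, and combined with the trivial bound $\arccos\sqrt{p(t)} \le \tfrac\pi2$ this yields $\sqrt{p(t)} \ge \cos(\min(\tfrac\pi2, t\|G\|_{\opnorm}))$, which is precisely the claimed bound.

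The only delicate point — and the main, though mild, obstacle — is the integration step at times where $p(t) = 1$, at which $\arccos\sqrt{p}$ is not differentiable. This is dealt with by a standard argument: on the closed set $\{t : p(t) = 1\}$ the quantity $\arccos\sqrt{p(t)} = 0$ trivially satisfies the bound, while on each connected component $(a,b)$ of the open set $\{t : p(t) < 1\}$ the function $\arccos\sqrt{p}$ is smooth, tends to $0$ as $t \to a^+$ (because $p$ is continuous with $p(a) = 1$, or $a = 0$), and grows at rate at most $\|G\|_{\opnorm}$, so $\arccos\sqrt{p(t)} \le \|G\|_{\opnorm}(t - a) \le t\|G\|_{\opnorm}$ throughout. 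Everything else — computing $\dot c(t)$, the orthogonal decomposition of $\ket{\phi_2(t)}$, and the two scalar inequalities invoked above — is routine.
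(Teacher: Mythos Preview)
Your proof is correct and follows the same overall strategy as the paper: reduce via~\eqref{eq_diamond_pure1} to a lower bound on the overlap $|\bra{\psi}e^{iH_1t}e^{-iH_2t}\ket{\psi}|$, and then control that overlap by a differential inequality in $t$. The only difference is packaging: the paper outsources the overlap bound to Pfeifer's theorem (which states that $|\langle\psi_0|\psi_t\rangle|\ge\cos\bigl(\int_0^t\Delta H(t')\,dt'\bigr)$ for a time-dependent evolution, with $\Delta H$ the energy variance), applied to the interaction-picture Hamiltonian $H(t)=e^{-iH_1t}(H_2-H_1)e^{iH_1t}$ and then bounded via $\Delta H(t')\le\|H(t')\|_{\opnorm}=\|H_1-H_2\|_{\opnorm}$. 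Your Grönwall-type argument on $\arccos\sqrt{p(t)}$ is essentially an inline proof of the special case of Pfeifer's theorem needed here, so your route is more self-contained but not conceptually different. One small remark: you carefully handle the non-differentiability of $\arccos\sqrt{p}$ at $p=1$ but not at $p=0$; this is harmless, since once $p$ first hits $0$ at some $t_*$ the bound $\arccos\sqrt{p(t_*)}=\pi/2\le t_*\|G\|_{\opnorm}$ follows by continuity from the preceding interval, and for $t\ge t_*$ the claimed inequality is trivially $\sqrt{p(t)}\ge 0$.
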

To begin with, we show how the bound in Eq.~\eqref{eq_upp_known} can be obtained from standard tools. Consider the definition of $d_T(H_1,H_2)$ in terms of the diamond norm: 
\begin{equation*}
    d_T(H_1,H_2)\coloneqq \sup_{t\in[0,T]}\frac{1}{2}\|e^{-iH_1t}(\,\cdot\,)e^{iH_1t}-e^{-iH_2t}(\,\cdot\,)e^{iH_2t}\|_\diamond\,.
\end{equation*}
We can bound the above expression using standard tools as follows:
\begin{equation*}
    \frac{1}{2}\|e^{-iH_1t}(\,\cdot\,)e^{iH_1t}-e^{-iH_2t}(\,\cdot\,)e^{iH_2t}\|_\diamond\le \|e^{iH_1t}-e^{iH_2t}\|_{\opnorm}\le t\|H_1-H_2\|_{\opnorm}\,,
\end{equation*}
where the first inequality follows from \cite[Proposition 1.6]{haah2023query} and the second is a consequence of \cite[Lemma 3.2 (c)]{haah2023query}. This directly implies the bound in Eq.~\eqref{eq_upp_known}. Our novel contribution lies in the stronger inequality 
\begin{equation}\label{eq:sharper}
    \frac{1}{2}\|e^{-iH_1t}(\,\cdot\,)e^{iH_1t}-e^{-iH_2t}(\,\cdot\,)e^{iH_2t}\|_\diamond\le \sin \Big( \min \Big( \frac{\pi}{2}, t\Vert H_{1} - H_{2}\Vert_{\opnorm} \Big)\Big)\,,
\end{equation}
which directly implies Eq.~\eqref{eq_upp_not_known} and, to the best of our knowledge, does not appear in the existing literature. In the following, we provide a proof of this improved bound. Before proceeding, it is useful to recall the following result, given by Pfeifer \cite{pfeifer}.
\begin{theorem}[{\cite{pfeifer}}]
Let $H(t')$ be a time-dependent Hamiltonian, and let $\ket{\psi_t}$ denote the solution of the time-dependent Schrödinger equation $i\hbar \frac{d}{dt} \ket{\psi_t} = H(t) \ket{\psi_t}$, with initial state $\ket{\psi_{t=0}} \coloneqq  \ket{\psi_0}.$
Then, the overlap between the evolved state $\ket{\psi_t}$ and the initial state $\ket{\psi_0}$ satisfies the lower bound 
\begin{equation}\label{eq:pfeifer_thm}
    |\langle \psi_{0} | \psi_t \rangle| \ge \cos \Biggl( \frac{1}{\hbar} \int_0^t \sqrt{\bra{\psi_0}H(t')^2\ket{\psi_0} 
- \bra{\psi_0}H(t')\ket{\psi_0}^2} \, dt' \Biggr).
\end{equation}
\end{theorem}
We are now ready to prove Lemma~\ref{lem:upperboundtodT}.
\begin{proof}[Proof of Lemma ~\ref{lem:upperboundtodT}]
As recalled in Eq.~\eqref{eq:aharonov}, the diamond norm between two unitary channels can be expressed as
\begin{equation*}
    \frac{1}{2}\|e^{-iH_1t}(\,\cdot\,)e^{iH_1t}-e^{-iH_2t}(\,\cdot\,)e^{iH_2t}\|_\diamond = \frac{1}{2}\sup_{\ket{\psi}} \Vert e^{-iH_1t} \ketbra{\psi}{\psi} e^{iH_1t}-e^{-iH_2t}\ketbra{\psi}{\psi} e^{iH_2t}\Vert_{\trnorm}.
\end{equation*}
Hence, by using Eq.~\eqref{eq_diamond_pure1}, we have
\begin{equation}\label{eq:trace_dist_pure}
    \frac12\Vert e^{-iH_1t} \ketbra{\psi}{\psi} e^{iH_1t}-e^{-iH_2t}\ketbra{\psi}{\psi} e^{iH_2t}\Vert_{S_{1}} = \sqrt{1-|\bra{\psi}e^{-iH_{1}t} e^{iH_{2}t}\ket{\psi}|^{2}}\,.
\end{equation} 
To bound the overlap inside the square root, if we define $\ket{\psi(t)} \coloneqq e^{-iH_{1}t}e^{iH_{2}t}\ket{\psi}$, we can recognize the time-dependent Schr\"odinger equation 
\begin{equation*}
\frac{d\ket{\psi(t)}}{dt} = iH(t)\ket{\psi(t)}\,,
\end{equation*}
where $H(t) \coloneqq e^{-iH_{1}t}(H_{2}-H_{1})e^{iH_{1}t}$. In this way, we can bound the scalar product in Eq.~\eqref{eq:trace_dist_pure} by using the result from Pfeifer \cite{pfeifer} in Eq.~\eqref{eq:pfeifer_thm}
\begin{equation*}\label{eq:pfeifer}
| \langle \psi | \psi(t) \rangle | \ge \cos \Big(\min \Big( \frac{\pi}{2} , \int_{0}^{t}dt' \sqrt{\bra{\psi}H^{2}(t')\ket{\psi} - |\bra{\psi}H(t')\ket{\psi}|^2} \Big)\Big).
\end{equation*}
Hence, this gives

\begin{equation*}
\begin{split}
    \sup_{\ket{\psi}} \sqrt {1-| \langle \psi | \psi(t) \rangle |^{2} }
    &\le \sup_{\ket{\psi}} \sin \Bigg( \min \Bigg( \frac{\pi}{2}, \int_{0}^{t} dt'\,\sqrt{\bra{\psi}H^{2}(t')\ket{\psi}} \Bigg) \Bigg) \\
    &\le \sin \Bigg( \min \Bigg( \frac{\pi}{2}, \int_{0}^{t} dt'\,\|H(t')\|_{\opnorm} \Bigg) \Bigg),
\end{split}
\end{equation*}
where we have used $\Vert \Theta\Vert_{\opnorm} \coloneqq \sup_{\ket{\psi}} \sqrt{\bra{\psi}\Theta^
{\dagger}\Theta\ket{\psi}}$. Finally, we use the unitary invariance of the operator norm to simplify the expression inside the integral:
\begin{equation*}
    \|H(t')\|_{\opnorm} = \|H_2 - H_1\|_{\opnorm}.
\end{equation*}
Therefore,
\begin{equation*}
\sin \Bigg( \min \Bigg( \frac{\pi}{2}, \int_{0}^{t} dt'\,\|H(t')\|_{\opnorm} \Bigg) \Bigg)
= \sin \Big( \min \Big( \frac{\pi}{2}, t\|H_{1}-H_{2}\|_{\opnorm} \Big)\Big),
\end{equation*}
which, by putting everything together, proves the Lemma.
\end{proof}

\subsubsection{Lower bounding the time-constrained distance}\label{sec_sub2}
This subsection is devoted to derive a general lower bound on the time-constrained diamond distance between Hamiltonians in terms of the operator norm. The main result of this subsection is the following lemma, which will be proved at the end.
\begin{lemma}[Diamond distance lower bound on Hamiltonians]\label{lemma_2}
Let $H_1,H_2$ be Hamiltonians such that $\|H_1\|_{\opnorm},\|H_2\|_{\opnorm}\le 1$ and $\Tr H_1=\Tr H_2$. Let $T>0$. Then,
\begin{equation*}
        d_T(H_1,H_2)\;\ge\; \frac{1}{4\pi}\min\!\left(T,\frac{1}{4\pi}\right)\|H_1-H_2\|_{\opnorm}.
\end{equation*}
\end{lemma}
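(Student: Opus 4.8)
### Proof proposal

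\textbf{Overall strategy.} The plan is to lower bound $d_T(H_1,H_2)$ by exhibiting a \emph{single} good input state $\ket{\psi}$ and a \emph{single} good time $t \le T$ for which the overlap $|\bra{\psi}e^{-iH_1t}e^{iH_2t}\ket{\psi}|$ is bounded away from $1$, and then invoke Eq.~\eqref{eq_diamond_pure1} to convert this into a lower bound on the diamond distance. Write $\Delta = H_1 - H_2$; by hypothesis $\|\Delta\|_{\opnorm} \le 2$ and $\Tr\Delta = 0$, so $\Delta$ has both a positive and a negative eigenvalue, and $\|\Delta\|_{\opnorm} = \max(\lambda_{\max}(\Delta), -\lambda_{\min}(\Delta))$. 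The natural test state is built from the top (or bottom) eigenvector of $\Delta$: if $\ket{v}$ is a unit eigenvector with $\Delta\ket{v} = \pm\|\Delta\|_{\opnorm}\ket{v}$, and $\ket{w}$ is chosen so that $\bra{w}\Delta\ket{w}$ has the opposite sign (possible since $\Tr\Delta=0$ forces eigenvalues of both signs), then $\ket{\psi} = (\ket{v}+\ket{w})/\sqrt2$ — or a suitable variant with the right relative phase — will have nonzero ``velocity'' under the generator $\Delta$.

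\textbf{Key steps.} First, I would reduce to a two-dimensional problem: pick orthonormal $\ket{0},\ket{1}$ that are eigenvectors of $\Delta$ with eigenvalues $\mu_0 \ge \mu_1$, chosen so that $\mu_0 - \mu_1 \ge \|\Delta\|_{\opnorm}$ (this is where $\Tr\Delta = 0$ enters: among all pairs of eigenvalues the largest gap is at least the operator norm, since the eigenvalues straddle $0$). Second, I would \emph{not} directly control $e^{-iH_1t}e^{iH_2t}$ — which is messy because $H_1$ and $H_2$ need not commute — but instead use Pfeifer's theorem (Eq.~\eqref{eq:pfeifer_thm}) in the form already deployed in the proof of Lemma~\ref{lem:upperboundtodT}, except now applied to get a \emph{lower} bound on the deviation rather than an upper bound. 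The cleaner route: set $\ket{\psi(t)} = e^{-iH_1t}e^{iH_2t}\ket{\psi}$, which solves $\frac{d}{dt}\ket{\psi(t)} = iH(t)\ket{\psi(t)}$ with $H(t) = e^{-iH_1t}\Delta\, e^{iH_1t}$. I would bound the overlap $|\bra{\psi}\psi(t)\rangle|$ from above by expanding to first order in $t$: $\ket{\psi(t)} = \ket{\psi} + it\, \Delta\ket{\psi} + O(t^2\|\Delta\|_{\opnorm}^2)$ using the Taylor bound of Eq.~\eqref{eq:Taylor} applied to both $e^{-iH_1t}$ and $e^{iH_2t}$ (valid for $t \le 1/2$ since $\|H_j\|_{\opnorm}\le 1$), giving
\begin{equation*}
    |\bra{\psi}\psi(t)\rangle|^2 \le 1 - t^2\big(\bra{\psi}\Delta^2\ket{\psi} - \bra{\psi}\Delta\ket{\psi}^2\big) + O(t^3\|\Delta\|_{\opnorm}^3).
\end{equation*}
Third, on the chosen state $\ket{\psi} = (\ket{0}+\ket{1})/\sqrt2$ the variance $\bra{\psi}\Delta^2\ket{\psi} - \bra{\psi}\Delta\ket{\psi}^2 = (\mu_0-\mu_1)^2/4 \ge \|\Delta\|_{\opnorm}^2/4$. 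Fourth, I would choose $t = \min(T, c/\|\Delta\|_{\opnorm})$ for a small absolute constant $c$ so that the $O(t^3)$ error term is dominated by half the main term; then $1 - |\bra{\psi}\psi(t)\rangle|^2 \ge t^2\|\Delta\|_{\opnorm}^2/8$, and Eq.~\eqref{eq_diamond_pure1} gives $d_T \ge \frac12\sqrt{1 - |\bra{\psi}\psi(t)\rangle|^2} \ge \frac{t\|\Delta\|_{\opnorm}}{4\sqrt2} \ge \frac{1}{4\pi}\min(T, \tfrac{1}{4\pi})\|\Delta\|_{\opnorm}$ after absorbing constants (the stated constants $1/(4\pi)$ are generous, so a crude bookkeeping of the Taylor remainder suffices). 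Note $d_T(H_1,H_2) = \sup_{t\le T}\tfrac12\|\cdot\|_\diamond$ so a single good $t$ lower-bounds it.

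\textbf{Main obstacle.} The delicate point is controlling the higher-order terms in the Schrödinger / Dyson expansion \emph{uniformly in the state}: one must ensure that the second-order (and higher) corrections to $|\bra{\psi}\psi(t)\rangle|^2$ do not cancel the $\Theta(t^2\|\Delta\|_{\opnorm}^2)$ signal. Because $H_1, H_2$ individually have norm up to $1$ — not just their difference — a naive expansion of $e^{-iH_1t}e^{iH_2t}$ produces cross terms of size $t^2\|H_1\|\|H_2\|$ that are \emph{not} small compared to $t^2\|\Delta\|^2$ when $\|\Delta\|$ is tiny. The fix is exactly the Pfeifer/Schrödinger reformulation: the generator of $\ket{\psi(t)}$ is $H(t) = e^{-iH_1t}\Delta e^{iH_1 t}$, whose norm is \emph{exactly} $\|\Delta\|_{\opnorm}$ at all times, so the Dyson series for $\ket{\psi(t)}$ is genuinely controlled by powers of $t\|\Delta\|_{\opnorm}$ alone, and the remainder after the linear term is $O(t^2\|\Delta\|_{\opnorm}^2)$ in operator norm. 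One must also verify that the \emph{phase} of $\bra{\psi}\psi(t)\rangle$ — i.e.\ the first-order imaginary part $t\bra{\psi}\Delta\ket{\psi}$ — does not artificially inflate $|\bra{\psi}\psi(t)\rangle|$; this is handled automatically because $|\bra{\psi}\psi(t)\rangle|^2 = 1 - t^2\mathrm{Var}_\psi(\Delta) + O(t^3\|\Delta\|^3)$ with the variance (not the second moment) appearing, which is why the superposition state $(\ket 0 + \ket 1)/\sqrt 2$ with its large variance $\|\Delta\|^2/4$ is the right choice rather than an eigenstate (which has zero variance and zero signal).
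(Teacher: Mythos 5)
Your approach is genuinely different from the paper's and is viable, but one quantitative claim in it is wrong and forces a change in your choice of $t$. The paper never picks a witness state: it first lower-bounds $\tfrac12\|e^{-iH}(\cdot)e^{iH}-\Id\|_\diamond$ purely spectrally, via $\min_{\phi}\|e^{i(H-\phi \Id)}-\Id\|_{\opnorm}$ and a min--max lemma about eigenvalue gaps modulo $2\pi$ (this is where $\Tr H_1=\Tr H_2$ enters, exactly as in your ``eigenvalues straddle $0$'' step), and then handles non-commutativity by the triangle inequality together with $\|e^Xe^Y-e^{X+Y}\|_{\opnorm}\le\tfrac12\|[X,Y]\|_{\opnorm}$ and $\|[H_1,H_2]\|_{\opnorm}\le 2\|H_1\|_{\opnorm}\|H_1-H_2\|_{\opnorm}$, the norm cap $\tfrac{1}{4\pi}$ making the commutator error half the main term. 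Your route replaces all of this with a single superposition of extremal eigenvectors of $\Delta=H_1-H_2$ and a first-order Dyson expansion in the interaction picture; the variance $\mathrm{Var}_\psi(\Delta)\ge\|\Delta\|_{\opnorm}^2/4$ plays the role of the paper's spectral-range argument. This is arguably more elementary (no phase minimization, no mod-$2\pi$ lemma) and makes the mechanism transparent.

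The gap is in your ``Main obstacle'' paragraph. It is true that $\|H(t')\|_{\opnorm}=\|\Delta\|_{\opnorm}$ for all $t'$, but $H(t')=e^{-iH_1t'}\Delta e^{iH_1t'}$ \emph{drifts} at rate $\|[H_1,\Delta]\|_{\opnorm}\le 2\|H_1\|_{\opnorm}\|\Delta\|_{\opnorm}$, so
\begin{equation*}
    \Bigl\|\int_0^t H(t')\,dt' - t\Delta\Bigr\|_{\opnorm}\;\le\; t^2\|H_1\|_{\opnorm}\|\Delta\|_{\opnorm}\;=\;O(t^2\|\Delta\|_{\opnorm}),
\end{equation*}
not $O(t^2\|\Delta\|_{\opnorm}^2)$. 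Consequently the error in $|\langle\psi|\psi(t)\rangle|^2$ is $O(t^3\|\Delta\|_{\opnorm}^2)$ rather than $O(t^3\|\Delta\|_{\opnorm}^3)$, and it is dominated by the signal $t^2\|\Delta\|_{\opnorm}^2/4$ only when $t$ is at most an \emph{absolute} constant, not when $t\le c/\|\Delta\|_{\opnorm}$. With your choice $t=\min(T,c/\|\Delta\|_{\opnorm})$ the expansion genuinely breaks down in the regime $T\gg 1\gg\|\Delta\|_{\opnorm}$ (the error swamps the signal). The fix is harmless for the stated lemma: take $t=\min(T,c)$ for a small absolute constant $c\ge\tfrac{1}{4\pi}$; writing $\Pi^\perp$ for the projector onto $\ket{\psi}^\perp$, one gets $\|\Pi^\perp\psi(t)\|\ge t\sqrt{\mathrm{Var}_\psi(\Delta)}-O(t^2\|\Delta\|_{\opnorm})\ge t\|\Delta\|_{\opnorm}(\tfrac12-O(t))$, and Eq.~\eqref{eq_diamond_pure1} (which, incidentally, gives $d_T\ge\sqrt{1-|\langle\psi|\psi(t)\rangle|^2}$ with no extra factor $\tfrac12$) then yields the claim with room to spare in the constants. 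With that correction your proof is complete.
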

To start, we introduce the following notation:
\begin{align}\label{eq_mod}
    p(x) &\coloneqq x \!\!\!\!\!\mod 2\pi \qquad \forall\, x\in\mathbb{R}, \\
    q(x) &\coloneqq (x+\pi)\!\!\!\!\!\mod 2\pi - \pi \qquad \forall\, x\in\mathbb{R}.
\end{align}
Intuitively, $p(x)$ denotes the projection of $x$ onto the interval $[0,2\pi]$, while $q(x)$ denotes the projection of $x$ onto $[-\pi,\pi]$.  We now begin with the following auxiliary lemma, whose proof can be found in \cref{appendix:proof of lemma minmax}.

\begin{restatable}{lemma}{theoremx}
\label{lemma_minmax}
For any $a,b\in\mathbb{R}$, it holds that
\begin{equation*}\label{eq:minmax}
    \min_{x\in\mathbb{R}} \max\bigl(|q(a-x)|, |q(b-x)|\bigr) 
    = \frac{1}{2}\min\!\left(|q(a)-q(b)|,\,|p(a)-p(b)|\right).
\end{equation*}    
\end{restatable}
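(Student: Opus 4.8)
The plan is to read the whole identity through the lens of the geodesic (``circular'') distance on $\mathbb{R}/2\pi\mathbb{Z}$. Write $\rho(y):=\min_{k\in\mathbb{Z}}|y-2\pi k|$ for the distance from $y$ to $0$ in this metric. The first observation I would make is that $|q(z)|=\rho(z)$ for every $z\in\mathbb{R}$: indeed $q(z)$ is the representative of $z$ lying in $[-\pi,\pi)$, so $|q(z)|\le\pi$, while every other representative of $z\bmod 2\pi$ has absolute value at least $\pi\ge|q(z)|$. Hence $|q(a-x)|=\rho(a-x)$ equals the circular distance between $a$ and $x$, and the left-hand side of the claimed identity is exactly $\min_{x}\max\bigl(\rho(a-x),\rho(b-x)\bigr)$, i.e.\ the radius of the smallest ``circular ball'' containing the two points $a$ and $b$.

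Next I would evaluate this Chebyshev-type minimum and show it equals $\tfrac12\rho(a-b)$. For the lower bound, the triangle inequality for $\rho$ gives, for every $x$, $\max\bigl(\rho(a-x),\rho(b-x)\bigr)\ge\tfrac12\bigl(\rho(a-x)+\rho(b-x)\bigr)\ge\tfrac12\rho(a-b)$. For the matching upper bound I would take $x$ to be the midpoint of the shorter of the two arcs joining $a$ and $b$ on the circle; since $\rho(a-b)\le\pi$, this arc has length $\le\pi$, so its midpoint is at circular distance exactly $\tfrac12\rho(a-b)$ from each of $a$ and $b$. Therefore $\min_{x}\max\bigl(|q(a-x)|,|q(b-x)|\bigr)=\tfrac12\rho(a-b)$, and in particular the minimum is attained.

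It then remains to prove $\rho(a-b)=\min\bigl(|q(a)-q(b)|,|p(a)-p(b)|\bigr)$. The point is that both $p(a)-p(b)$ and $q(a)-q(b)$ are congruent to $a-b$ modulo $2\pi$ and lie in the open interval $(-2\pi,2\pi)$; moreover, since $p(x)-q(x)\in\{0,2\pi\}$ pointwise, the two numbers $p(a)-p(b)$ and $q(a)-q(b)$ differ by an element of $\{-2\pi,0,2\pi\}$. If they are equal, then $p(a)$ and $p(b)$ lie on the same half-circle (both in $[0,\pi)$ or both in $[\pi,2\pi)$), so $p(a)-p(b)\in(-\pi,\pi)$ and has absolute value below $\pi$. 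If they differ by $\pm2\pi$, then one lies in $(-2\pi,0)$ and the other in $(0,2\pi)$, and a short computation shows $|p(a)-p(b)|+|q(a)-q(b)|=2\pi$, so the smaller of the two is at most $\pi$. In either case $\min\bigl(|q(a)-q(b)|,|p(a)-p(b)|\bigr)$ is a representative of $a-b\bmod 2\pi$ of absolute value at most $\pi$, hence equals $\rho(a-b)$ (as noted above, any such representative is automatically the minimal one). Combining this with the previous paragraph yields the lemma.

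The only mildly delicate part is the bookkeeping in the last paragraph: tracking which representative of $a-b$ is produced by $p$ and which by $q$ according to the positions of $p(a)$ and $p(b)$ relative to $\pi$. Phrasing it as ``the two quantities differ by $0$ or $\pm2\pi$ and both lie in $(-2\pi,2\pi)$'' reduces this to the trivial fact that two numbers summing to $2\pi$ cannot both exceed $\pi$, which keeps the argument short and case-free beyond the single equal/not-equal split.
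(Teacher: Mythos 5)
Your proof is correct, and it takes a genuinely different route from the paper's. You reinterpret $|q(\cdot)|$ as the geodesic distance $\rho$ on $\mathbb{R}/2\pi\mathbb{Z}$, recognize the left-hand side as the Chebyshev radius of the two-point set $\{a,b\}$ on the circle, and evaluate it as $\tfrac12\rho(a-b)$ by combining the triangle inequality for $\rho$ (lower bound) with the midpoint of the shorter arc (upper bound); you then separately check that $\min\bigl(|q(a)-q(b)|,|p(a)-p(b)|\bigr)$ is precisely the representative of $a-b$ modulo $2\pi$ of absolute value at most $\pi$, i.e.\ $\rho(a-b)$. The paper instead argues that any minimizer of $x\mapsto\max\bigl(|q(a-x)|,|q(b-x)|\bigr)$ must occur where the two terms coincide (because each term's local minima are zero), solves $q(b-x)=\pm q(a-x)$ to obtain the candidate minimizers $x=\tfrac{a+b}{2}+n\pi$, and finishes with a two-case evaluation depending on whether $\tfrac{q(a)-q(b)}{2}$ lies in $[0,\pi/2]$ or $[\pi/2,\pi]$. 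Your approach buys a cleaner conceptual picture and bypasses the critical-point analysis entirely, at the cost of not exhibiting the explicit minimizers as transparently; the modular bookkeeping in your final step (the quantities $p(a)-p(b)$ and $q(a)-q(b)$ differ by an element of $\{-2\pi,0,2\pi\}$ and, when they differ, sum in absolute value to $2\pi$) is doing the same arithmetic as the paper's two cases, and I see no gap in it.
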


We now present the following lemma, which provides a lower bound on the diamond distance between a unitary Hamiltonian evolution and the identity channel.  
\begin{lemma}\label{lemma_diff}
    Let $H$ be a $d$-dimensional Hamiltonian with eigenvalues $\lambda_1,\ldots,\lambda_d$. Then
    \begin{equation*}
        \frac{1}{2}\left\|e^{-iH}(\cdot)e^{iH}-\Id\right\|_\diamond 
        \;\ge\; \frac{1}{2\pi}\max_{j,k\in[d]}\min\!\bigl(|p(\lambda_j)-p(\lambda_k)|,\,|q(\lambda_j)-q(\lambda_k)|\bigr),
    \end{equation*}
    where $p$ and $q$ are defined in Eq.~\eqref{eq_mod}.
\end{lemma}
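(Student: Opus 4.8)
The plan is to reduce the diamond distance on the left to a pure-state overlap problem via Eq.~\eqref{eq_diamond_pure1}, and then choose a clever two-dimensional test state that forces the overlap to be small. First I would write, using Eq.~\eqref{eq_diamond_pure1} with $V=e^{-iH}$ and $W=\Id$,
\begin{equation*}
    \frac12\left\|e^{-iH}(\cdot)e^{iH}-\Id\right\|_\diamond = \sqrt{1-\inf_{\ket\psi}\left|\bra\psi e^{-iH}\ket\psi\right|^2}.
\end{equation*}
Diagonalizing $H=\sum_\ell \lambda_\ell\ketbra{\ell}{\ell}$, a natural strategy is to pick, for any pair of indices $j,k\in[d]$, a state $\ket\psi$ supported on $\mathrm{Span}\{\ket j,\ket k\}$, say $\ket\psi=\cos\theta\ket j+\sin\theta\ket k$ (real coefficients suffice). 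Then $\bra\psi e^{-iH}\ket\psi = \cos^2\theta\, e^{-i\lambda_j}+\sin^2\theta\, e^{-i\lambda_k}$, whose modulus depends on the angular separation of $e^{-i\lambda_j}$ and $e^{-i\lambda_k}$ on the unit circle. So $\inf_{\ket\psi}|\bra\psi e^{-iH}\ket\psi|^2 \le \min_\theta |\cos^2\theta\, e^{-i\lambda_j}+\sin^2\theta\, e^{-i\lambda_k}|^2$ for every pair $j,k$, hence taking the max over $j,k$ inside gives a lower bound on the diamond distance.

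Next I would compute $\min_\theta |\cos^2\theta\, e^{-i\lambda_j}+\sin^2\theta\, e^{-i\lambda_k}|$. Factoring out a global phase $e^{-i(\lambda_j+\lambda_k)/2}$, this equals $\min_\theta|\cos^2\theta\, e^{-i\mu}+\sin^2\theta\, e^{i\mu}|$ where $2\mu = \lambda_k-\lambda_j$ (mod $2\pi$); writing $t=\cos^2\theta\in[0,1]$ the quantity inside is $|t e^{-i\mu}+(1-t)e^{i\mu}|^2 = \cos^2\mu + (2t-1)^2\sin^2\mu$, minimized at $t=1/2$, giving $|\cos\mu|$. So the minimum overlap-squared over this family is $\cos^2\mu$ where $\mu$ can be taken to be $q(\lambda_k-\lambda_j)/2$ — but here one must be careful: the relevant ``angular distance'' between the two circle points $e^{-i\lambda_j},e^{-i\lambda_k}$ is $|q(\lambda_j-\lambda_k)|$, and $\mu$ should be half of the actual angular separation, which is $\min(|q(\lambda_j)-q(\lambda_k)|,\,|p(\lambda_j)-p(\lambda_k)|)/2$ — this is exactly the content of Lemma~\ref{lemma_minmax} (with $x$ playing the role of the optimal global-phase adjustment, which corresponds to optimizing over multiplying $e^{-iH}$ by $e^{i\phi}$, equivalently shifting both eigenvalues). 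Concretely, I would invoke Eq.~\eqref{eq_diamond_pure1} in its ancilla-free/phase-optimized form together with Lemma~\ref{lemma_minmax} applied to $a=\lambda_j$, $b=\lambda_k$ to identify that the minimal overlap equals $\cos$ of half that min-distance.

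Finally, given that the overlap-squared is at most $\cos^2\!\big(\tfrac12 m_{jk}\big)$ with $m_{jk}:=\min(|p(\lambda_j)-p(\lambda_k)|,|q(\lambda_j)-q(\lambda_k)|)$, we get the diamond distance $\ge \sqrt{1-\cos^2(m_{jk}/2)} = \sin(m_{jk}/2)$, and then I would use the elementary bound $\sin(x/2)\ge x/\pi$ for $x\in[0,\pi]$ (valid since $m_{jk}\in[0,\pi]$, as $q$ lands in $[-\pi,\pi]$ so $|q(\cdot)-q(\cdot)|\le 2\pi$ — here a small check is needed that the min with the $p$-term keeps us in $[0,\pi]$; indeed the circle-distance interpretation guarantees $m_{jk}\le\pi$). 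Maximizing over $j,k$ gives the claimed $\frac1{2\pi}\max_{j,k}m_{jk}$. The main obstacle is the bookkeeping with the $p$ and $q$ reductions: making sure the ``angular distance between $e^{-i\lambda_j}$ and $e^{-i\lambda_k}$ on the unit circle'' is correctly identified as $\min(|p(\lambda_j)-p(\lambda_k)|,|q(\lambda_j)-q(\lambda_k)|)$, that it never exceeds $\pi$, and that optimizing the global phase in Eq.~\eqref{eq_diamond_pure1} precisely matches the $\min_x$ in Lemma~\ref{lemma_minmax}. Everything else is the short explicit two-level computation above.
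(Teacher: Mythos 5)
Your proposal is correct, but it takes a genuinely different route from the paper. The paper lower-bounds the diamond norm by the phase-optimized operator norm via Lemma~\ref{eq_dia_inf}, restricts to two eigenvalues, uses the elementary estimate $|e^{i\theta}-1|=2|\sin(q(\theta)/2)|\ge \tfrac{2}{\pi}|q(\theta)|$, and then invokes Lemma~\ref{lemma_minmax} to resolve the resulting $\min_{\phi}\max(|q(\lambda_j-\phi)|,|q(\lambda_k-\phi)|)$. You instead start from the exact pure-state formula \eqref{eq_diamond_pure1}, restrict the infimum to $\mathrm{Span}\{\ket{j},\ket{k}\}$, and compute the minimal overlap in closed form, obtaining $\sin(\delta_{jk}/2)$ with $\delta_{jk}=|q(\lambda_j-\lambda_k)|$ the circle distance between $e^{-i\lambda_j}$ and $e^{-i\lambda_k}$; your two-level computation ($t=1/2$ minimizer, modulus $|\cos\mu|$) is correct, and no phase optimization is needed since the modulus of the overlap is already phase-invariant. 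What your approach buys is an exact intermediate identity rather than an operator-norm relaxation, and hence a slightly sharper constant ($\sin(\delta_{jk}/2)\ge \delta_{jk}/\pi$ gives $\tfrac{1}{\pi}$ in place of $\tfrac{1}{2\pi}$). The one step you should make fully explicit is the identity $|q(\lambda_j-\lambda_k)|=\min(|q(\lambda_j)-q(\lambda_k)|,|p(\lambda_j)-p(\lambda_k)|)$: this is not literally the statement of Lemma~\ref{lemma_minmax}, whose left-hand side is $\min_x\max(|q(a-x)|,|q(b-x)|)$; you either need the additional (easy, but not stated) observation that this min--max equals half the circle distance $|q(a-b)|/2$, or a short direct case analysis. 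With that identification in hand, $\delta_{jk}\in[0,\pi]$ is automatic and the linearization $\sin(x/2)\ge x/\pi$ applies, so the argument closes.
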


\begin{proof}
    Fix $j,k\in[d]$. By Lemma~\ref{eq_dia_inf}, we have
    \begin{equation*}
        \|e^{-iH}(\cdot)e^{iH}-\Id\|_\diamond 
        \;\ge\; \min_{\phi\in\mathbb{R}}\|e^{i(H-\phi\Id)}-\Id\|_{\opnorm}.
    \end{equation*}
    Since the operator norm is at least as large as the modulus of any eigenvalue, it follows that
    \begin{equation*}
        \min_{\phi\in\mathbb{R}}\|e^{i(H-\phi\Id)}-\Id\|_{\opnorm}
        \;\ge\; \min_{\phi\in\mathbb{R}}\max\!\left(|e^{i(\lambda_j-\phi)}-1|,\,|e^{i(\lambda_k-\phi)}-1|\right).
    \end{equation*}
    For any $\theta\in\mathbb{R}$,
    \begin{equation*}\label{eq_xxxxxxxxxx}
        |e^{i\theta}-1| = |e^{iq(\theta)}-1| 
        = 2\Bigl|\sin\!\Bigl(\tfrac{q(\theta)}{2}\Bigr)\Bigr| 
        \;\ge\; \tfrac{2}{\pi}|q(\theta)|,
    \end{equation*}
    where the last inequality uses $|\sin(x/2)| \ge |x|/\pi$ for all $x\in[-\pi,\pi]$.  
    Combining the above gives
    \begin{equation*}
        \tfrac{1}{2}\|e^{-iH}(\cdot)e^{iH}-\Id\|_\diamond 
        \;\ge\; \tfrac{1}{\pi}\min_{\phi\in\mathbb{R}}\!\left(|q(\lambda_j-\phi)|,|q(\lambda_k-\phi)|\right).
    \end{equation*}
    Finally, applying Lemma~\ref{lemma_minmax} yields
    \begin{equation*}
        \tfrac{1}{2}\|e^{-iH}(\cdot)e^{iH}-\Id\|_\diamond 
        \;\ge\; \tfrac{1}{2\pi}\min\!\left(|q(\lambda_j)-q(\lambda_k)|,\,|p(\lambda_j)-p(\lambda_k)|\right).
    \end{equation*}
    To finish, it suffices to take the supremum of the above equaition over all $j,k\in [d].$
\end{proof}

As a consequence of \cref{lemma_diff}, we obtain the following. Given a Hamiltonian $H,$ we define $R(H)$ as $R(H):=\max |\lambda-\lambda'|$, where the maximum ranges over all eigenvalues $\lambda$ and $\lambda'$ of $H.$ 

\begin{lemma}\label{lemma_cons2}
    Let $H$ be a Hamiltonian with $\|H\|_{\opnorm}\le \tfrac{\pi}{2}$. Then,
    \begin{equation*}
        \tfrac{1}{2}\left\|e^{-iH}(\cdot)e^{iH}-\Id\right\|_\diamond 
        \;\ge\; \tfrac{1}{2\pi}R(H).
    \end{equation*}
\end{lemma}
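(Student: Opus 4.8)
The plan is to deduce Lemma~\ref{lemma_cons2} from Lemma~\ref{lemma_diff} by showing that, under the hypothesis $\|H\|_{\opnorm}\le \pi/2$, the minimum inside the maximization in Lemma~\ref{lemma_diff} is simply $|\lambda_j-\lambda_k|$ for the pair of eigenvalues realizing $R(H)$. The point is that the functions $p$ and $q$ act as the identity on sufficiently small arguments, so the two quantities $|p(\lambda_j)-p(\lambda_k)|$ and $|q(\lambda_j)-q(\lambda_k)|$ both collapse to $|\lambda_j-\lambda_k|$ once the eigenvalues lie in a short enough interval.

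\textbf{Key steps.} First I would fix $\lambda_j$ and $\lambda_k$ to be the two eigenvalues of $H$ achieving $R(H)=|\lambda_j-\lambda_k|$. Since $\|H\|_{\opnorm}\le \pi/2$, every eigenvalue lies in $[-\pi/2,\pi/2]$, and hence $R(H)\le \pi$. Next I would observe that for any $\lambda\in[-\pi/2,\pi/2]\subseteq[-\pi,\pi)$ we have $q(\lambda)=\lambda$ directly from the definition $q(x)=(x+\pi)\bmod 2\pi - \pi$; therefore $|q(\lambda_j)-q(\lambda_k)| = |\lambda_j-\lambda_k| = R(H)$. For the $p$-term, note $p$ maps into $[0,2\pi)$ and $p(\lambda)=\lambda$ when $\lambda\ge 0$ while $p(\lambda)=\lambda+2\pi$ when $\lambda<0$; in either case $p(\lambda_j)-p(\lambda_k)$ equals $\lambda_j-\lambda_k$ up to a possible additive $2\pi$, so $|p(\lambda_j)-p(\lambda_k)|$ is either $R(H)$ or $2\pi - R(H)$, and since $R(H)\le\pi$ we get $|p(\lambda_j)-p(\lambda_k)|\ge R(H)$. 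Combining, $\min\bigl(|p(\lambda_j)-p(\lambda_k)|,|q(\lambda_j)-q(\lambda_k)|\bigr)\ge R(H)$, and plugging this particular pair $(j,k)$ into the maximum in Lemma~\ref{lemma_diff} gives
\begin{equation*}
    \tfrac{1}{2}\left\|e^{-iH}(\cdot)e^{iH}-\Id\right\|_\diamond \;\ge\; \tfrac{1}{2\pi}R(H),
\end{equation*}
which is exactly the claim.

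\textbf{Main obstacle.} There is no real obstacle here — this is a short corollary of Lemma~\ref{lemma_diff}. The only point requiring a bit of care is the bookkeeping for the $p$-term: one must check that the ``wraparound'' case $p(\lambda_j)-p(\lambda_k)=\lambda_j-\lambda_k\pm 2\pi$ still yields a quantity at least $R(H)$, which holds precisely because the hypothesis $\|H\|_{\opnorm}\le\pi/2$ forces $R(H)\le\pi$ so that $2\pi-R(H)\ge\pi\ge R(H)$. One should also confirm the edge convention at $\lambda=-\pi/2$ or at eigenvalue $0$ does not matter, since the bound is an inequality and any consistent choice of representatives works. Everything else is substitution into the already-proven Lemma~\ref{lemma_diff}.
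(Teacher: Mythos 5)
Your proof is correct and follows essentially the same route as the paper's: apply Lemma~\ref{lemma_diff} and use that $q$ acts as the identity on $[-\pi/2,\pi/2]$ so the $q$-term equals $|\lambda_j-\lambda_k|$. In fact you are slightly more careful than the paper's own proof, which silently skips the verification that the $p$-term (possibly shifted by $2\pi$ across the wraparound) is still at least $R(H)$; your bookkeeping $|p(\lambda_j)-p(\lambda_k)|\in\{R(H),\,2\pi-R(H)\}$ with $R(H)\le\pi$ closes that gap correctly.
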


\begin{proof}
    By Lemma~\ref{lemma_diff},
    \begin{equation*}
        \tfrac{1}{2}\|e^{-iH}(\cdot)e^{iH}-\Id\|_\diamond 
        \;\ge\; \tfrac{1}{2\pi}\max_{j,k}\min\!\bigl(|p(\lambda_j)-p(\lambda_k)|,\,|q(\lambda_j)-q(\lambda_k)|\bigr).
    \end{equation*}
    Since $\|H\|_{\opnorm}\le \tfrac{\pi}{2}$, all eigenvalues of $H$ lie in $[-\tfrac{\pi}{2},\tfrac{\pi}{2}]$. Thus, $q(\lambda)=\lambda$ for each eigenvalue $\lambda$, and we obtain
    \begin{equation*}
        \tfrac{1}{2}\|e^{-iH}(\cdot)e^{iH}-\Id\|_\diamond 
        \;\ge\; \tfrac{1}{2\pi}\max_{j,k}|\lambda_j-\lambda_k|
        = \tfrac{1}{2\pi}R(H).
    \end{equation*}
\end{proof}

\cref{lemma_cons2} yields the following result, which is almost the result we would like to prove, but with the caveat of the second Hamiltonian being equal to $0$.

\begin{lemma}\label{lemma_low_id}
    Let $H$ be a Hamiltonian with $\|H\|_{\opnorm}\le \tfrac{\pi}{2}$ and $\Tr H=0$. Then,
    \begin{equation*}
        \tfrac{1}{2}\left\|e^{-iH}(\cdot)e^{iH}-\Id\right\|_\diamond 
        \;\ge\; \tfrac{1}{2\pi}\|H\|_{\opnorm}.
    \end{equation*}
\end{lemma}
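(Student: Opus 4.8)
The plan is to simply combine Lemma~\ref{lemma_cons2} with an elementary spectral observation: for a traceless Hamiltonian, the spectral \emph{spread} $R(H)$ dominates the operator norm. First I would invoke Lemma~\ref{lemma_cons2}, which applies verbatim since $\|H\|_{\opnorm}\le\pi/2$, to get
\begin{equation*}
    \tfrac{1}{2}\left\|e^{-iH}(\cdot)e^{iH}-\Id\right\|_\diamond \;\ge\; \tfrac{1}{2\pi}R(H).
\end{equation*}
Hence it remains only to show $R(H)\ge\|H\|_{\opnorm}$ whenever $\Tr H=0$.

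To do this, let $\lambda_{\max}$ and $\lambda_{\min}$ denote the largest and smallest eigenvalues of the self-adjoint operator $H$, so that $\|H\|_{\opnorm}=\max(\lambda_{\max},-\lambda_{\min})$ and $R(H)=\lambda_{\max}-\lambda_{\min}$. The condition $\Tr H=0$ means $\sum_j\lambda_j=0$; if all eigenvalues are zero then $H=0$ and the claim is trivial, and otherwise there must be at least one strictly positive and at least one strictly negative eigenvalue, so $\lambda_{\max}\ge 0\ge\lambda_{\min}$. Therefore $\lambda_{\max}$ and $-\lambda_{\min}$ are both nonnegative, and
\begin{equation*}
    R(H)=\lambda_{\max}-\lambda_{\min}=\lambda_{\max}+(-\lambda_{\min})\;\ge\;\max(\lambda_{\max},-\lambda_{\min})=\|H\|_{\opnorm}.
\end{equation*}
Plugging this into the bound from Lemma~\ref{lemma_cons2} completes the proof.

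There is essentially no obstacle here: the substantive work (the diamond-norm lower bound via Pfeifer-type overlap estimates and the $\min$--$\max$ lemma) has already been carried out in Lemmas~\ref{lemma_diff} and~\ref{lemma_cons2}, and this statement is only the step that removes the spurious $\Id$-reference by exploiting tracelessness. The one point to state carefully is the dichotomy ``either $H=0$ or $H$ has eigenvalues of both signs,'' which is what guarantees $R(H)=\lambda_{\max}-\lambda_{\min}$ absorbs both candidates for the operator norm rather than just one of them.
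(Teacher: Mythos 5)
Your proposal is correct and follows exactly the same route as the paper: apply Lemma~\ref{lemma_cons2} and then use tracelessness to conclude $R(H)\ge\|H\|_{\opnorm}$ via the sign dichotomy of the eigenvalues. Your version just spells out the spectral argument in slightly more detail than the paper does.
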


\begin{proof}
    Since $\Tr H=0$, $H$ must have both positive and negative eigenvalues, hence $R(H)\ge \|H\|_{\opnorm}$. Applying Lemma~\ref{lemma_cons2} completes the proof.
\end{proof}

We now address the case of non-commuting Hamiltonians.

\begin{lemma}\label{lemma_2pi}
    Let $H_1,H_2$ be Hamiltonians such that $\|H_1\|_{\opnorm},\|H_2\|_{\opnorm}\le \tfrac{1}{4\pi}$ and $\Tr H_1=\Tr H_2$. Then
    \begin{equation*}
         \frac{1}{2}\left\|e^{-iH_1}(\cdot)e^{iH_1}-e^{-iH_2}(\cdot)e^{iH_2}\right\|_\diamond
         \;\ge\; \frac{1}{4\pi}\|H_1-H_2\|_{\opnorm}.
    \end{equation*}
\end{lemma}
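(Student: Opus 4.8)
The plan is to reduce the non-commuting case to the already-established commuting case (Lemma~\ref{lemma_low_id}) by absorbing the discrepancy between $e^{-iH_1}(\cdot)e^{iH_1}$ and $e^{-iH_2}(\cdot)e^{iH_2}$ into a single effective Hamiltonian, at the cost of a controlled Baker--Campbell--Hausdorff-type error. Concretely, the diamond distance between the two channels equals the diamond distance between $e^{-iH_1}e^{iH_2}(\cdot)e^{-iH_2}e^{iH_1}$ and the identity channel, by unitary invariance. So it suffices to lower bound $\tfrac12\|e^{-iH_1}e^{iH_2}(\cdot)e^{-iH_2}e^{iH_1}-\Id\|_\diamond$ in terms of $\|H_1-H_2\|_{\opnorm}$.

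The key step is to show that the unitary $W \coloneqq e^{-iH_1}e^{iH_2}$ is close to $e^{-i(H_1-H_2)}$ in operator norm. Writing $X = -iH_1$ and $Y = iH_2$ (both anti-Hamiltonians) and applying Lemma~\ref{lemma_commm}, we get $\|e^{X}e^{Y} - e^{X+Y}\|_{\opnorm} \le \tfrac12\|[X,Y]\|_{\opnorm} = \tfrac12\|[H_1,H_2]\|_{\opnorm} \le \|H_1\|_{\opnorm}\|H_2\|_{\opnorm} \le \tfrac{1}{4\pi}\|H_1-H_2\|_{\opnorm} \cdot \tfrac{1}{4\pi} \cdot \tfrac{1}{(1/4\pi)}$; more carefully, since $\|H_2\|_{\opnorm}\le \tfrac1{4\pi}$ we have $\tfrac12\|[H_1,H_2]\|_{\opnorm}\le \|H_1-H_2\|_{\opnorm}\cdot\tfrac1{4\pi} + \|H_2\|_{\opnorm}^2\cdots$ — the cleanest route is $\|[H_1,H_2]\|_{\opnorm}=\|[H_1-H_2,H_2]\|_{\opnorm}\le 2\|H_1-H_2\|_{\opnorm}\|H_2\|_{\opnorm}\le \tfrac{1}{2\pi}\|H_1-H_2\|_{\opnorm}$, so $\|W - e^{-i(H_1-H_2)}\|_{\opnorm}\le \tfrac{1}{4\pi}\|H_1-H_2\|_{\opnorm}$. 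Then by Lemma~\ref{eq_dia_inf} (the diamond/operator-norm comparison) combined with the triangle inequality for the diamond norm,
\begin{equation*}
    \tfrac12\|W(\cdot)W^\dagger - \Id\|_\diamond \ge \tfrac12\|e^{-i(H_1-H_2)}(\cdot)e^{i(H_1-H_2)} - \Id\|_\diamond - \|W - e^{-i(H_1-H_2)}\|_{\opnorm}.
\end{equation*}
Now $H_1 - H_2$ satisfies $\Tr(H_1-H_2)=0$ and $\|H_1-H_2\|_{\opnorm}\le \tfrac{1}{2\pi}\le \tfrac\pi2$, so Lemma~\ref{lemma_low_id} applies and gives $\tfrac12\|e^{-i(H_1-H_2)}(\cdot)e^{i(H_1-H_2)}-\Id\|_\diamond \ge \tfrac1{2\pi}\|H_1-H_2\|_{\opnorm}$. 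Subtracting the error term $\tfrac1{4\pi}\|H_1-H_2\|_{\opnorm}$ leaves $\tfrac1{4\pi}\|H_1-H_2\|_{\opnorm}$, which is exactly the claimed bound.

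The main obstacle is bookkeeping the constants so that the BCH error is at most half of the main term: the factor $\tfrac1{4\pi}$ in the hypothesis $\|H_j\|_{\opnorm}\le\tfrac1{4\pi}$ is precisely what makes $\|[H_1,H_2]\|_{\opnorm}\le 2\|H_2\|_{\opnorm}\|H_1-H_2\|_{\opnorm}$ small enough, and one must also check the side condition $\|H_1-H_2\|_{\opnorm}\le\tfrac\pi2$ needed to invoke Lemma~\ref{lemma_low_id} (which follows from $\|H_1-H_2\|_{\opnorm}\le\|H_1\|_{\opnorm}+\|H_2\|_{\opnorm}\le\tfrac1{2\pi}$). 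A minor subtlety: Lemma~\ref{eq_dia_inf} is stated with a minimization over a global phase $\phi$ on the operator-norm side, so when I pass from the diamond distance back down to operator norm to control $W$ versus $e^{-i(H_1-H_2)}$, I should phrase the triangle inequality directly at the level of channels (the channel induced by $W$ versus the channel induced by $e^{-i(H_1-H_2)}$ differ by at most $2\|W-e^{-i(H_1-H_2)}\|_{\opnorm}$ in diamond norm, or $\|W-e^{-i(H_1-H_2)}\|_{\opnorm}$ up to the phase-optimized version), and absorb any factor of $2$ by using the sharper commutator bound $\tfrac12\|[X,Y]\|_{\opnorm}$ from Lemma~\ref{lemma_commm}. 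Everything else is routine.
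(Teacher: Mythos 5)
Your proposal is correct and follows essentially the same route as the paper's proof: reduce to a comparison with the identity channel by unitary invariance, apply the triangle inequality against the channel generated by $e^{-i(H_1-H_2)}$, invoke Lemma~\ref{lemma_low_id} for the main term, and control the error via Lemma~\ref{eq_dia_inf} together with the commutator bound of Lemma~\ref{lemma_commm} and $\|[H_1,H_2]\|_{\opnorm}=\|[H_1-H_2,H_2]\|_{\opnorm}\le 2\|H_1-H_2\|_{\opnorm}\|H_2\|_{\opnorm}$. The constants work out exactly as in the paper, so nothing further is needed.
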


\begin{proof}
We proceed as follows:
\begin{align*}
    &\frac{1}{2}\left\|e^{-iH_1}(\cdot)e^{iH_1}-e^{-iH_2}(\cdot)e^{iH_2}\right\|_\diamond \notag \\
    &\quad \overset{\mathrm{(i)}}{=}\;
    \frac{1}{2}\left\|e^{-iH_1}e^{iH_2}(\cdot)e^{iH_1}e^{-iH_2}-\Id\right\|_\diamond \notag \\
    &\quad \overset{\mathrm{(ii)}}{\ge}\;
    \frac{1}{2}\left\|e^{-i(H_1-H_2)}(\cdot)e^{i(H_1-H_2)}-\Id\right\|_\diamond \notag  - \frac{1}{2}\left\|e^{-i(H_1-H_2)}(\cdot)e^{i(H_1-H_2)}
      -e^{-iH_1}e^{iH_2}(\cdot)e^{iH_1}e^{-iH_2}\right\|_\diamond \notag \\
    &\quad \overset{\mathrm{(iii)}}{\ge}\;
    \frac{1}{2\pi}\|H_1-H_2\|_{\opnorm}
    - \frac{1}{2}\left\|e^{-i(H_1-H_2)}(\cdot)e^{i(H_1-H_2)}
      -e^{-iH_1}e^{iH_2}(\cdot)e^{iH_1}e^{-iH_2}\right\|_\diamond \notag \\
    &\quad \overset{\mathrm{(iv)}}{\ge}\;
    \frac{1}{2\pi}\|H_1-H_2\|_{\opnorm}
    - \left\|e^{-i(H_1-H_2)}-e^{-iH_1}e^{iH_2}\right\|_{\opnorm} \notag \\
    &\quad \overset{\mathrm{(v)}}{\ge}\;
    \frac{1}{2\pi}\|H_1-H_2\|_{\opnorm}
    - \tfrac{1}{2}\|[H_1,H_2]\|_{\opnorm} \notag \\
    &\quad \overset{\mathrm{(vi)}}{\ge}\;
    \frac{1}{2\pi}\|H_1-H_2\|_{\opnorm}
    - \tfrac{1}{4\pi}\|H_1-H_2\|_{\opnorm} \notag \\
    &\quad = \frac{1}{4\pi}\|H_1-H_2\|_{\opnorm}\,.
\end{align*}

The justifications are as follows:  
(i) invariance of the diamond norm under conjugation by unitaries;  
(ii) triangle inequality;  
(iii) application of Lemma~\ref{lemma_low_id} and the bounds on $\|H_1\|_{\opnorm},\|H_2\|_{\opnorm}$;  
(iv) Lemma~\ref{eq_dia_inf}, relating the diamond and operator norms;  
(v) Lemma~\ref{lemma_commm};  
(vi) the estimate
\begin{equation*}
    \|[H_1,H_2]\|_{\opnorm}
    \le \|H_1(H_2-H_1)+(H_1-H_2)H_1\|_{\opnorm}
    \le 2\|H_1\|_{\opnorm}\|H_1-H_2\|_{\opnorm},
\end{equation*}
together with $\|H_1\|_{\opnorm}\le \tfrac{1}{4\pi}$. This concludes the proof.
\end{proof}

We can now prove the main result of this section.

\begin{proof}[Proof of \cref{lemma_2}]
    By definition of the time-contrained distance and \cref{lemma_2pi},
    \begin{align*}
        d_T(H_1,H_2)
        &= \sup_{t\in[0,T]}\,\frac{1}{2}\left\|e^{-itH_1}(\cdot)e^{itH_1}-e^{-itH_2}(\cdot)e^{itH_2}\right\|_\diamond \\
        &\ge \sup_{t\in[0,\min(T,\tfrac{1}{4\pi})]}\,\frac{1}{2}\left\|e^{-itH_1}(\cdot)e^{itH_1}-e^{-itH_2}(\cdot)e^{itH_2}\right\|_\diamond \\
        &\ge \sup_{t\in[0,\min(T,\tfrac{1}{4\pi})]}\,\frac{1}{4\pi}\|tH_1-tH_2\|_{\opnorm} \\
        &= \frac{1}{4\pi}\min\!\left(T,\frac{1}{4\pi}\right)\|H_1-H_2\|_{\opnorm}.
    \end{align*}
\end{proof}
\begin{remark}
The constants in Lemma~\ref{lemma_2} can be slightly improved by refining the inequality $|\sin(x/2)|\ge \tfrac{1}{\pi}|x|$ (valid for $x\in[-\pi,\pi]$) used in Lemma~\ref{lemma_diff}. In particular, if the Hamiltonians satisfy $\|H\|_{\opnorm}\le c$ for some $c<\pi$, as in Lemma~\ref{lemma_2pi}, sharper constants can be obtained.
\end{remark}

\subsection{Temperature-constrained trace distance}
In statistical mechanics and quantum thermodynamics, it is often of interest to compare the equilibrium properties of systems governed by different Hamiltonians. A natural way to do this is to compare their Gibbs states, $\rho_i(\beta) = e^{-\beta H_i} / \mathrm{Tr}[e^{-\beta H_i}]$, which describe the state of the system governed by the Hamiltonians $H_i$ at inverse temperature $\beta$. Such a comparison can be used, for example, to quantify how a perturbation to the Hamiltonian affects the thermal properties of the system, or to assess the accuracy of a simulation of a thermal state.

From a physical standpoint, it is reasonable to restrict this comparison to a finite range of temperatures, corresponding to a bounded interval $\beta \in [0,B]$ of accessible inverse temperatures. In realistic experiments, one cannot prepare systems at arbitrarily low temperatures (large $\beta$) due to practical limitations or constraints on preparation times. Focusing on a finite range of $\beta$ therefore reflects actual laboratory capabilities.

From a mathematical perspective, thanks to the Holevo-Helstrom theorem, employing the trace distance to compare Gibbs states provides an operational meaning: 
$\tfrac{1}{2}\|\rho_1(\beta) - \rho_2(\beta)\|_{\trnorm}$ quantifies the optimal distinguishability of the two thermal states at inverse temperature $\beta$. Taking the supremum over $\beta \in [0,B]$ then corresponds to the maximum distinguishability achievable within the permitted temperature range.

Motivated by these considerations, we introduce the \textit{temperature-constrained trace distance} between Hamiltonians:

\begin{equation}\label{eq:temp_constr_diamond}
    d_B(H_{1},H_{2}) \coloneqq \frac{1}{2}\sup_{\beta\in[0,B]} \Vert \rho_1(\beta)-\rho_2(\beta) \Vert_{\trnorm},
\end{equation}
where $\rho_i(\beta)\coloneqq e^{-\beta H_i}/\Tr[e^{-\beta H_i}]$. 

\begin{remark}
It is instructive to contrast the definition of the temperature-constrained trace distance in Eq.~\eqref{eq:temp_constr_diamond} with the unconstrained setting, i.e.~when $B=\infty$. In this regime one has
\begin{equation*}
    d_{\infty}(H_1,H_2)\;\geq\; \tfrac{1}{2}\,\|\rho_1(\infty)-\rho_2(\infty)\|_{\trnorm}.
\end{equation*}
If a Hamiltonian $H$ has a unique ground state, then the corresponding Gibbs state at infinite inverse temperature reduces to this ground state.  
For istance, for $H_1=\eps Z$ and $H_2=-\eps Z$, we obtain $\rho_1(\infty)=\ket{0}\!\bra{0}$ and $\rho_2(\infty)=\ket{1}\!\bra{1}$, which gives
\begin{equation*}
    d_{\infty}(H_1,H_2)=1,
\end{equation*}
even though $H_1$ and $H_2$ are arbitrarily close (e.g.~in operator norm).
This highlights that, just as in the time-unconstrained setting, allowing arbitrarily low temperatures leads to maximal distinguishability, even between nearly identical Hamiltonians.
\end{remark}

Similarly to what we have done for the time-constrained diamond distance, we can upper bound $d_B$ in terms of a operator norm between the Hamiltonian.

\subsubsection{Upper bounding the temperature-constrained distance}\label{sec_sub3}
This subsection is devoted to derive an upper bound on the temperature-constrained trace distance in terms of the operator norm of the Hamiltonian difference. We start with the following lemma, which provides un upper bound on the trace distance between Gibbs states. The strategy is to relate the trace distance between Gibbs states to the symmetrized Kullback–Leibler divergence (SKL) via Pinsker’s inequality, and then obtain a bound on the SKL in terms of $\|H_1-H_2\|_{\opnorm}$.
\begin{lemma}\label{new_leee}
For all Hamiltonians $H_1,H_2$ it holds that:
\begin{equation}\label{eq:mainbound}
    \left\|\frac{e^{H_1}}{\Tr e^{H_1}}-\frac{e^{H_2}}{\Tr e^{H_2}}\right\|_{\trnorm}
    \;\leq\;\|H_1-H_2\|_{\opnorm}.
\end{equation}
\end{lemma}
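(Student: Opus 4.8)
Write $\rho_i \coloneqq e^{H_i}/\Tr[e^{H_i}]$ for $i=1,2$. Since $e^{H_i}$ is positive definite, both $\rho_i$ are full-rank density operators, so the quantum relative entropy $D(\rho_1\|\rho_2)=\Tr[\rho_1(\log\rho_1-\log\rho_2)]$ is finite and the quantum Pinsker inequality $D(\rho\|\sigma)\ge \tfrac12\|\rho-\sigma\|_{\trnorm}^2$ applies. The plan is to control $\|\rho_1-\rho_2\|_{\trnorm}$ through the \emph{symmetrized} relative entropy $D_{\mathrm{SKL}}\coloneqq D(\rho_1\|\rho_2)+D(\rho_2\|\rho_1)$: applying Pinsker in each direction and averaging the two bounds gives
\[
    \|\rho_1-\rho_2\|_{\trnorm}^2 \;\le\; D_{\mathrm{SKL}}.
\]

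The key step is that $D_{\mathrm{SKL}}$ admits a clean closed form. Using $\log\rho_i = H_i - \log(\Tr[e^{H_i}])\,\Id$ and $\Tr\rho_i=1$, one gets
\[
    D(\rho_1\|\rho_2) = \Tr[\rho_1(H_1-H_2)] + \log(\Tr[e^{H_2}]) - \log(\Tr[e^{H_1}]),
\]
and the analogous identity with the roles of $1$ and $2$ exchanged. Adding the two, the log-partition-function terms cancel exactly, leaving
\[
    D_{\mathrm{SKL}} \;=\; \Tr\!\big[(\rho_1-\rho_2)(H_1-H_2)\big].
\]
Now Hölder's inequality (trace-norm/operator-norm duality) bounds the right-hand side by $\|\rho_1-\rho_2\|_{\trnorm}\,\|H_1-H_2\|_{\opnorm}$.

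Combining the two displays yields $\|\rho_1-\rho_2\|_{\trnorm}^2 \le \|\rho_1-\rho_2\|_{\trnorm}\,\|H_1-H_2\|_{\opnorm}$; dividing by $\|\rho_1-\rho_2\|_{\trnorm}$ (the bound being trivial if this vanishes) gives exactly Eq.~\eqref{eq:mainbound}. The argument is self-improving: Pinsker supplies a quadratic bound and Hölder a linear one, and their combination collapses to the linear estimate with constant $1$. There is no serious obstacle here; the only points needing a little care are (i) checking that the $\rho_i$ are full-rank so that relative entropy and Pinsker are legitimate, and (ii) fixing the Pinsker constant (with the natural logarithm one has $D(\rho\|\sigma)\ge\tfrac12\|\rho-\sigma\|_{\trnorm}^2$, so averaging the two directions produces precisely $\|\rho_1-\rho_2\|_{\trnorm}^2\le D_{\mathrm{SKL}}$ with no stray factors).
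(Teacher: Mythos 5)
Your proposal is correct and follows essentially the same route as the paper: compute the symmetrized relative entropy of the two Gibbs states in closed form as $\Tr[(\rho_1-\rho_2)(H_1-H_2)]$, bound it by H\"older, and combine with Pinsker's inequality applied in both directions to get the self-improving quadratic-versus-linear estimate. (As a minor aside, your sign convention $D_{\mathrm{SKL}}=\Tr[(\rho_1-\rho_2)(H_1-H_2)]$ is the correct one for the normalization $\rho_i\propto e^{H_i}$ used in the statement.)
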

\cref{new_leee} is strictly tighter than the known bound of \cite[Lemma 16]{brandao2017quantumspeedupssemidefiniteprogramming}, which states that
\begin{equation*}
    \left\|\frac{e^{H_1}}{\Tr e^{H_1}}-\frac{e^{H_2}}{\Tr e^{H_2}}\right\|_{\trnorm}\le 2\left(e^{\|H_1-H_2\|_{\opnorm}}-1\right)\,.
\end{equation*}
\begin{proof}[Proof of \cref{new_leee}]
    Let $\rho_1\coloneqq \frac{e^{H_1}}{\Tr e^{H_1}}$ and $\rho_2\coloneqq \frac{e^{H_2}}{\Tr e^{H_2}}$.     The symmetrized Kullback–Leibler divergence (SKL)~\cite{kullback1951} is defined as
    \begin{equation*}
        d_{\mathrm{SKL}}(\rho_1,\rho_2)
        \coloneqq D(\rho_1\|\rho_2)+D(\rho_2\|\rho_1),
    \end{equation*}
    where $D(\rho_1\|\rho_2)$ is the relative entropy between the states $\rho_1$ and $\rho_2$, defined as
    \begin{equation*}
        D(\rho_1\|\rho_2) \coloneqq \Tr \left[\rho_1\,(\ln \rho_1 - \ln \rho_2)\right]\,.
    \end{equation*}
    Our first claim is that the following inequality holds:
    \begin{equation}\label{eq:SKLbound}
        d_{\mathrm{SKL}}(\rho_1,\rho_2)
        \;\leq\,\|H_1-H_2\|_{\opnorm}\,\|\rho_1-\rho_2\|_{\trnorm}.
    \end{equation}
    To prove \cref{eq:SKLbound}, we start by computing $d_{\mathrm{SKL}}$ explicitly for Gibbs states. Using the definition of relative entropy, we have
    \begin{equation}\label{eq:d_skl_1}
        d_{\mathrm{SKL}}(\rho_1,\rho_2)
        = \Tr[\rho_1(\ln\rho_1-\ln\rho_2)]
          +\Tr[\rho_2(\ln\rho_2-\ln\rho_1)].
    \end{equation}
    Hence, by substituting in Eq.~\eqref{eq:d_skl_1} the explicit expression of the Gibbs states, we have
    \begin{align*}
        d_{\mathrm{SKL}}(\rho_1,\rho_2)
        &= -\Tr[\rho_1 H_1]-\ln Z_1 
           +\Tr[\rho_1 H_2]+\ln Z_2\\
        &\quad -\Tr[\rho_2 H_2]-\ln Z_2
           +\Tr[\rho_2 H_1]+\ln Z_1\\
        &= \Tr\big[(\rho_1 -\rho_2) (H_2-H_1)\big].
    \end{align*}  
    By exploiting Hölder's inequality, this expression immediately implies \cref{eq:SKLbound}. Given this, we now want to prove Eq.~\eqref{eq:mainbound}. To this end, we recall Pinsker’s inequality \cite{pinsker}, which relates relative entropy to trace distance:
    \begin{equation*}
        \frac{1}{2}\|\rho_1-\rho_2\|_{\trnorm}^2\le D(\rho_1\|\rho_2)\,.
    \end{equation*}
    Consequently, 
\begin{align*}
    \|\rho_1-\rho_2\|_{\trnorm}^2&= \frac{1}{2}\|\rho_1-\rho_2\|_{\trnorm}^2+\frac{1}{2}\|\rho_2-\rho_1\|_{\trnorm}^2\leq D(\rho_1\|\rho_2)+D(\rho_2\|\rho_1)=d_{\mathrm{SKL}}(\rho_1,\rho_2).
\end{align*}
Combining this with Eq.~\eqref{eq:SKLbound}, we arrive at
    \begin{equation}
        \|\rho_1 -\rho_2 \|_{\trnorm}^2
        \;\leq\,\|H_1-H_2\|_{\opnorm}\,\|\rho_1-\rho_2\|_{\trnorm}\,,
    \end{equation}
    which proves the lemma.
    \end{proof}

Note that if $H_1 \mapsto -\beta H_1$ and $H_2 \mapsto -\beta H_2$, by taking the supremum over $\beta\in [0,B]$ we can prove the following result
\begin{lemma}\label{lemma_dB_upperbound}
    Let $H_1,H_2$ be Hamiltonians and let $d_B(H_1,H_2)$ be the temperature-constrained trace distance between $H_1$ and $H_2$. Then
    \begin{equation*}\label{eq:d_B_bound}
         d_B(H_1,H_2)\;\leq\;\frac{B}{2}\,\|H_1-H_2\|_{\opnorm}\,.
    \end{equation*}
\end{lemma}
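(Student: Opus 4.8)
The plan is to reduce the statement directly to Lemma~\ref{new_leee} by a simple rescaling of the Hamiltonians. Recall that by definition
\begin{equation*}
    d_B(H_1,H_2) = \frac{1}{2}\sup_{\beta\in[0,B]}\|\rho_1(\beta)-\rho_2(\beta)\|_{\trnorm},
    \qquad \rho_i(\beta)=\frac{e^{-\beta H_i}}{\Tr[e^{-\beta H_i}]},
\end{equation*}
so it suffices to bound $\|\rho_1(\beta)-\rho_2(\beta)\|_{\trnorm}$ for each fixed $\beta\in[0,B]$ and then take the supremum.

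First I would fix $\beta\in[0,B]$ and apply Lemma~\ref{new_leee} with the substitution $H_1\mapsto -\beta H_1$ and $H_2\mapsto -\beta H_2$. Since $\rho_i(\beta)=e^{-\beta H_i}/\Tr[e^{-\beta H_i}]$ is exactly the Gibbs-type state appearing in Lemma~\ref{new_leee} after this substitution, the lemma yields
\begin{equation*}
    \|\rho_1(\beta)-\rho_2(\beta)\|_{\trnorm}\;\leq\;\|(-\beta H_1)-(-\beta H_2)\|_{\opnorm}\;=\;\beta\,\|H_1-H_2\|_{\opnorm},
\end{equation*}
using homogeneity of the operator norm.

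Finally I would take the supremum over $\beta\in[0,B]$, which is attained (or approached) at $\beta=B$, giving $\sup_{\beta\in[0,B]}\|\rho_1(\beta)-\rho_2(\beta)\|_{\trnorm}\le B\|H_1-H_2\|_{\opnorm}$, and divide by $2$ to obtain $d_B(H_1,H_2)\le \tfrac{B}{2}\|H_1-H_2\|_{\opnorm}$, as claimed. There is no real obstacle here: all the analytic work (Pinsker's inequality, the SKL identity for Gibbs states, Hölder's inequality) has already been carried out in the proof of Lemma~\ref{new_leee}, and the present lemma is just the instantiation of that bound at rescaled Hamiltonians followed by a supremum. The only minor point to be careful about is that the rescaling is valid for every $\beta\ge 0$ and that the operator norm scales linearly, both of which are immediate.
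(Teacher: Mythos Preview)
Your proposal is correct and follows exactly the same approach as the paper: apply Lemma~\ref{new_leee} with the substitution $H_1\mapsto -\beta H_1$, $H_2\mapsto -\beta H_2$, use homogeneity of the operator norm, and take the supremum over $\beta\in[0,B]$.
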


\subsubsection{No lower bound for the temperature-constrained distance}\label{sec:nolowboundfordB}
We now investigate whether the distance $d_B(H_1,H_2)$ admits a nontrivial lower bound in terms of the operator norm of the difference of Hamiltonians. In particular, one might ask whether inequalities of the form
\begin{equation*}
    d_B(H_1,H_2)\geq\poly\bigl(B,\|H_1-H_2\|_{\opnorm}\bigr)
\end{equation*}
can hold in general. The following example shows that this is not the case: even when the operator norm of the difference between $H_1$ and $H_2$ is constant, the corresponding Gibbs states can become arbitrarily close in trace distance as the system size increases. To see this, consider the $n$-qubit Hamiltonians
\begin{equation*}
    H_1=\ket{0^n}\bra{0^n}-\ket{1^n}\bra{1^n},
    \qquad
    H_2=-\ket{0^n}\bra{0^n}+\ket{1^n}\bra{1^n}.
\end{equation*}
In the computational basis their spectra are $\spec(H_1)=\{+1,-1,0,\dots,0\}$ and $\spec(H_2)=\{-1,+1,0,\dots,0\}$, where $0$ appears $2^n-2$ times. Consequently, $e^{-\beta H_1}$ and $e^{-\beta H_2}$ are diagonal and the partition functions coincide, i.e. $Z_1(\beta)=Z_2(\beta)=\Tr(e^{-\beta H_1})=\Tr(e^{-\beta H_2})=2^n-2+e^{\beta}+e^{-\beta}.$ Thus, the corresponding Gibbs states at inverse temperature $\beta$ are
\begin{equation*}
\rho_1(\beta)=\frac{1}{Z_1(\beta)}\begin{pmatrix}
    e^{-\beta} & & &\\
    & 1 & & \\
    & & \ddots &\\
    & & & e^{\beta}
\end{pmatrix}, \qquad \rho_2(\beta)=\frac{1}{Z_2(\beta)}\begin{pmatrix}
    e^{\beta} & & &\\
    & 1 & & \\
    & & \ddots &\\
    & & & e^{-\beta}
\end{pmatrix}.
\end{equation*}
Therefore, the trace norm of their difference is
\begin{equation}\label{eq:rho1_rho2_norm1}
    \|\rho_1(\beta)-\rho_2(\beta)\|_{\trnorm} = \frac{2(e^\beta - e^{-\beta})}{2^n - 2 + e^\beta + e^{-\beta}}.
\end{equation}
Fixing $B$ as a constant, we have that for any $\beta \in [0,B]$ the numerator of Eq.~\eqref{eq:rho1_rho2_norm1} remains of order $O(1)$ while the denominator grows as $2^n$. Hence,
\begin{equation*}
    \sup_{\beta\in[0,B]}\|\rho_1(\beta)-\rho_2(\beta)\|_{\opnorm} = O\left(\frac{1}{2^n}\right) \to 0 \quad \text{as } n \to \infty.
\end{equation*}

\bibliographystyle{alpha}
\bibliography{notes}

\newcommand{\etalchar}[1]{$^{#1}$}
\begin{thebibliography}{ORSFW23}

\bibitem[AAKS21]{anshu2021sample}
Anurag Anshu, Srinivasan Arunachalam, Tomotaka Kuwahara, and Mehdi
  Soleimanifar.
\newblock Sample-efficient learning of interacting quantum systems.
\newblock {\em Nature Physics}, 17(8):931--935, 2021.

\bibitem[ADE25]{arunachalam2025testing}
Srinivasan Arunachalam, Arkopal Dutt, and Francisco Escudero{ }Guti{\'e}rrez.
\newblock Testing and learning structured quantum hamiltonians.
\newblock In {\em Proceedings of the 57th Annual ACM Symposium on Theory of
  Computing}, pages 1263--1270, 2025.

\bibitem[AKN98]{aharonov1998}
Dorit Aharonov, Alexei Kitaev, and Noam Nisan.
\newblock Quantum circuits with mixed states, 1998.

\bibitem[Alh23]{alhambra2023quantum}
{\'A}lvaro~M Alhambra.
\newblock Quantum many-body systems in thermal equilibrium.
\newblock {\em PRX Quantum}, 4(4):040201, 2023.

\bibitem[BCG{\etalchar{+}}24]{bravyi2024high}
Sergey Bravyi, Andrew~W Cross, Jay~M Gambetta, Dmitri Maslov, Patrick Rall, and
  Theodore~J Yoder.
\newblock High-threshold and low-overhead fault-tolerant quantum memory.
\newblock {\em Nature}, 627(8005):778--782, 2024.

\bibitem[BKL{\etalchar{+}}17]{brandao2017quantum}
Fernando~GSL Brand{\~a}o, Amir Kalev, Tongyang Li, Cedric Yen-Yu Lin, Krysta~M
  Svore, and Xiaodi Wu.
\newblock Quantum sdp solvers: Large speed-ups, optimality, and applications to
  quantum learning.
\newblock {\em arXiv preprint arXiv:1710.02581}, 2017.

\bibitem[BLMT23]{bakshi2023learning}
Ainesh Bakshi, Allen Liu, Ankur Moitra, and Ewin Tang.
\newblock Learning quantum hamiltonians at any temperature in polynomial time,
  2023.

\bibitem[BLMT24]{bakshi2024structure}
Ainesh Bakshi, Allen Liu, Ankur Moitra, and Ewin Tang.
\newblock Structure learning of hamiltonians from real-time evolution.
\newblock {\em arXiv preprint arXiv:2405.00082}, 2024.

\bibitem[BS17]{brandao2017quantumspeedupssemidefiniteprogramming}
Fernando G. S.~L. Brandao and Krysta Svore.
\newblock Quantum speed-ups for semidefinite programming, 2017.

\bibitem[Car23]{caro2023learning}
Matthias~C. Caro.
\newblock Learning quantum processes and hamiltonians via the pauli transfer
  matrix, 2023.

\bibitem[CST{\etalchar{+}}21]{childs2021theory}
Andrew~M Childs, Yuan Su, Minh~C Tran, Nathan Wiebe, and Shuchen Zhu.
\newblock Theory of trotter error with commutator scaling.
\newblock {\em Physical Review X}, 11(1):011020, 2021.

\bibitem[DOS23]{Dutkiewicz.2023}
Alicja Dutkiewicz, Thomas~E. O'Brien, and Thomas Schuster.
\newblock The advantage of quantum control in many-body hamiltonian learning,
  2023.

\bibitem[dSLCP11]{Silva2011Practical}
Marcus~P. da~Silva, Olivier Landon-Cardinal, and David Poulin.
\newblock Practical characterization of quantum devices without tomography.
\newblock {\em Physical Review Letters}, 107(21):210404, 2011.

\bibitem[GCC24]{Gu2022Practical}
Andi Gu, Lukasz Cincio, and Patrick~J. Coles.
\newblock Practical hamiltonian learning with unitary dynamics and gibbs
  states.
\newblock {\em Nature Communications}, 15(1), 2024.

\bibitem[HBCP15]{holzapfel2015scalable}
M~Holz{\"a}pfel, T~Baumgratz, M~Cramer, and Martin~B Plenio.
\newblock Scalable reconstruction of unitary processes and hamiltonians.
\newblock {\em Physical Review A}, 91(4):042129, 2015.

\bibitem[Hel69]{Helstrom}
Carl~W. Helstrom.
\newblock Quantum detection and estimation theory.
\newblock {\em Journal of Statistical Physics}, 2(2):231--252, 1969.

\bibitem[HKOT23]{haah2023query}
Jeongwan Haah, Robin Kothari, Ryan O’Donnell, and Ewin Tang.
\newblock Query-optimal estimation of unitary channels in diamond distance.
\newblock In {\em 2023 IEEE 64th Annual Symposium on Foundations of Computer
  Science (FOCS)}, pages 363--390. IEEE, 2023.

\bibitem[HKT22]{haah2022optimal}
Jeongwan Haah, Robin Kothari, and Ewin Tang.
\newblock Optimal learning of quantum hamiltonians from high-temperature gibbs
  states.
\newblock In {\em 2022 IEEE 63rd Annual Symposium on Foundations of Computer
  Science (FOCS)}, pages 135--146. IEEE, 2022.

\bibitem[HMG{\etalchar{+}}25]{hu2025ansatz}
Hong-Ye Hu, Muzhou Ma, Weiyuan Gong, Qi~Ye, Yu~Tong, Steven~T Flammia, and
  Susanne~F Yelin.
\newblock Ansatz-free hamiltonian learning with heisenberg-limited scaling.
\newblock {\em arXiv preprint arXiv:2502.11900}, 2025.

\bibitem[Hol73]{HOLEVO1973337}
A.S Holevo.
\newblock Statistical decision theory for quantum systems.
\newblock {\em Journal of Multivariate Analysis}, 3(4):337--394, 1973.

\bibitem[HTFS23]{huang2023heisenberg}
Hsin-Yuan Huang, Yu~Tong, Di~Fang, and Yuan Su.
\newblock Learning many-body hamiltonians with heisenberg-limited scaling.
\newblock {\em Physical Review Letters}, 130(20):200403, 2023.

\bibitem[KL51]{kullback1951}
Solomon Kullback and Richard~A. Leibler.
\newblock On information and sufficiency.
\newblock {\em The Annals of Mathematical Statistics}, 22(1):79--86, mar 1951.

\bibitem[LSG{\etalchar{+}}25]{liu2025robust}
Hua-Liang Liu, Hao Su, Si-Qiu Gong, Yi-Chao Gu, Hao-Yang Tang, Meng-Hao Jia,
  Qian Wei, Yukun Song, Dongzhou Wang, Mingyang Zheng, et~al.
\newblock Robust quantum computational advantage with programmable 3050-photon
  gaussian boson sampling.
\newblock {\em arXiv preprint arXiv:2508.09092}, 2025.

\bibitem[LTN{\etalchar{+}}23]{li2023heisenberglimited}
Haoya Li, Yu~Tong, Hongkang Ni, Tuvia Gefen, and Lexing Ying.
\newblock Heisenberg-limited hamiltonian learning for interacting bosons, 2023.

\bibitem[MFPT24]{ma2024learning}
Muzhou Ma, Steven~T Flammia, John Preskill, and Yu~Tong.
\newblock Learning $ k $-body hamiltonians via compressed sensing.
\newblock {\em arXiv preprint arXiv:2410.18928}, 2024.

\bibitem[Nay99]{nayak1999optimal}
Ashwin Nayak.
\newblock Optimal lower bounds for quantum automata and random access codes.
\newblock In {\em 40th Annual Symposium on Foundations of Computer Science
  (Cat. No. 99CB37039)}, pages 369--376. IEEE, 1999.

\bibitem[OKSM24]{odake2024higher}
Tatsuki Odake, Hl{\'e}r Kristj{\'a}nsson, Akihito Soeda, and Mio Murao.
\newblock Higher-order quantum transformations of hamiltonian dynamics.
\newblock {\em Physical Review Research}, 6(1):L012063, 2024.

\bibitem[ORSFW23]{onorati2023efficient}
Emilio Onorati, Cambyse Rouzé, Daniel Stilck~França, and James~D. Watson.
\newblock Efficient learning of ground \& thermal states within phases of
  matter, 2023.

\bibitem[Pfe93]{pfeifer}
Peter Pfeifer.
\newblock How fast can a quantum state change with time?
\newblock {\em Phys. Rev. Lett.}, 70:3365--3368, May 1993.

\bibitem[Pin64]{pinsker}
M.~S Pinsker.
\newblock {\em Information and Information Stability of Random Variables and
  Processes}.
\newblock San Francisco, Holden-Day, 1964.

\bibitem[RSF23]{rouze2023learning}
Cambyse Rouzé and Daniel Stilck~França.
\newblock Learning quantum many-body systems from a few copies, 2023.

\bibitem[SFMD{\etalchar{+}}24]{franca2024efficient}
Daniel Stilck~Fran{\c c}a, Liubov~A. Markovich, V.~V. Dobrovitski, Albert~H.
  Werner, and Johannes Borregaard.
\newblock Efficient and robust estimation of many-qubit hamiltonians.
\newblock {\em Nature Communications}, 15:311, 2024.

\bibitem[ST25]{sinha2025improved}
Savar~D. Sinha and Yu~Tong.
\newblock Improved hamiltonian learning and sparsity testing through bell
  sampling.
\newblock {\em arXiv preprint arXiv:2509.07937v1}, 2025.

\bibitem[Val84]{valiant1984theory}
Leslie~G Valiant.
\newblock A theory of the learnable.
\newblock {\em Communications of the ACM}, 27(11):1134--1142, 1984.

\bibitem[VV85]{valiant1985np}
Leslie~G Valiant and Vijay~V Vazirani.
\newblock Np is as easy as detecting unique solutions.
\newblock In {\em Proceedings of the seventeenth annual ACM symposium on Theory
  of computing}, pages 458--463, 1985.

\bibitem[Wil11]{Wilde2011FromCT}
Mark~M. Wilde.
\newblock From classical to quantum shannon theory.
\newblock {\em ArXiv}, abs/1106.1445, 2011.

\bibitem[YSHY23]{yu2023robust}
Wenjun Yu, Jinzhao Sun, Zeyao Han, and Xiao Yuan.
\newblock Robust and efficient hamiltonian learning.
\newblock {\em Quantum}, 7:1045, 2023.

\bibitem[Zha25]{zhao2025learning}
Andrew Zhao.
\newblock Learning the structure of any hamiltonian from minimal assumptions.
\newblock In {\em Proceedings of the 57th Annual ACM Symposium on Theory of
  Computing}, pages 1201--1211, 2025.

\bibitem[ZYLB21]{Zubida2021Optimal}
Assaf Zubida, Elad Yitzhaki, Netanel~H. Lindner, and Eyal Bairey.
\newblock Optimal short-time measurements for hamiltonian learning, 2021.

\end{thebibliography}

\appendix
\section{Proof of Lemma \ref{lemma_minmax}}\label{appendix:proof of lemma minmax}
In this appendix, we prove Lemma~\ref{lemma_minmax}, which we restate for reader's convenience.
\theoremx*

\begin{proof}
We first show that the minimum of
\begin{equation}
    \min_{x \in \mathbb{R}} \max\bigl(|q(a-x)|, |q(b-x)|\bigr)
\end{equation}
is attained at solutions of
\begin{equation}\label{eq:balance}
    |q(a-x)| = |q(b-x)| \,.
\end{equation}
To begin, consider the following general observation. Given continuous functions $f,g:\mathbb{R}\to\mathbb{R}$, the minimizers of
\begin{equation}
    x \mapsto \max(f(x),g(x))
\end{equation}
must lie in one of the following sets:
\begin{enumerate}
    \item[(i)] Points where $f$ has a local minimum and $f(x)>g(x)$,
    \item[(ii)] Points where $g$ has a local minimum and $g(x)>f(x)$,
    \item[(iii)] Points where $f(x)=g(x)$.
\end{enumerate}
In our case, with
\begin{equation}
    f(x) \coloneqq |q(a-x)|, \qquad g(x) \coloneqq |q(b-x)|,    
\end{equation}
each local minimum of $f$ or $g$ equals $0$ and $f$ and $g$ are nonnegative. Thus, (i) and (ii) do not occur, and minimizers must belong to case (iii), i.e. satisfy \eqref{eq:balance}.

We now analyse the solutions of Eq.~\eqref{eq:balance}. Such solutions must satisfy one of the following two conditions:
\begin{enumerate}
    \item $q(b-x)=q(a-x)$, which is true only if $q(a)=q(b)$, in which case both sides of \cref{eq:minmax} equal 0, so the statement is true.
    \item $q(b-x)=-q(a-x)$, which gives $q(2x)=q(a+b)$, i.e.

    \[
        x = \tfrac{a+b}{2} + n\pi, \qquad n\in\mathbb{Z}.
    \]
\end{enumerate}
Hence, since the function $x\mapsto \max(|q(a-x)|,|q(b-x)|)$ is $2\pi$-periodic, the minimum is achieved at $x=\tfrac{a+b}{2}$ or $x=\tfrac{a+b}{2}+\pi$. Hence, we obtain that
\begin{align}
    \min_{x\in\mathbb{R}} \max(|q(a-x)|,|q(b-x)|) 
    &= \min_{x\in\mathbb{R}} \max(|q(q(a)-x)|,|q(q(b)-x)|) \\
    &= \min\!\left(\biggl|q\Bigl(\tfrac{q(a)-q(b)}{2}\Bigr)\biggr|, \, 
       \biggl|q\Bigl(\tfrac{q(a)-q(b)}{2}-\pi\Bigr)\biggr|\right) \\
    &= \min\!\left(\tfrac{|q(a)-q(b)|}{2},\, \Bigl|q\Bigl(\tfrac{q(b)+2\pi-q(a)}{2}\Bigr)\Bigr|\right).\label{eq:45}
\end{align}
Without loss of generality, we assume $q(a)\ge q(b)$. Two cases arise:
\paragraph{Case 1: $\tfrac{q(a)-q(b)}{2}\in[0,\pi/2]$.}  
Then, $\bigl|q(\tfrac{q(b)+2\pi-q(a)}{2})\bigr|\in[\pi/2,\pi]$, so, by \cref{eq:45},
\begin{equation}\label{eq_case1}
    \min_{x\in\mathbb{R}} \max(|q(a-x)|,|q(b-x)|) 
    = \tfrac{q(a)-q(b)}{2} = \tfrac{|q(a)-q(b)|}{2}.
\end{equation}
Since $|q(a)-q(b)|\in[0,\pi]$, we also have
\begin{equation}\label{eq:48}
    |q(a)-q(b)| = \min(|q(a)-q(b)|,|p(a)-p(b)|).
\end{equation}
Combining \cref{eq_case1,eq:48} yields the desired result.

\paragraph{Case 2: $\tfrac{q(a)-q(b)}{2}\in[\pi/2,\pi]$.}  
Then, $q(a)\in[0,\pi]$ and $q(b)\in[-\pi,0]$, so
\begin{equation}
    q(a)=p(a), \qquad q(b)+2\pi=p(b).
\end{equation}
Thus, as $p(a),\, p(b) \in [0,2\pi]$, it follows that $(p(a)-p(b))/2\in [-\pi,\pi]$, so we have that
\begin{equation}\label{eq:49}
    \biggl|q\Bigl(\tfrac{q(b)+2\pi-q(a)}{2}\Bigr)\biggr|
    = \left|q\left(\frac{p(b)-p(a)}{2}\right)\right|=\frac{|p(b)-p(a)|}{2}.
\end{equation}
Now, combining \cref{eq:45,eq:49} yields the stated result.
\end{proof}

\end{document}